\documentclass[12pt]{article}
\usepackage{eurosym}

\usepackage{amsfonts}
\usepackage{xcolor}
\usepackage{graphicx}
\usepackage{amsmath,amssymb,amsthm,bm}
\usepackage{times}
\usepackage[letterpaper,hmarginratio=1:1,vmarginratio=1:1,heightrounded,tmargin=1in,lmargin=1in,footnotesep=.6cm]{geometry}
\usepackage{mathptmx}
\usepackage{accents}

\usepackage[longnamesfirst,round]{natbib}
\def\citeapos#1{\citeauthor{#1}'s (\citeyear{#1})}

\usepackage{tikz}

\definecolor{Blue}{rgb}{0,0,0.8}
\usepackage[colorlinks,linkcolor=Blue,urlcolor=Blue,citecolor=Blue]{hyperref}

\graphicspath{{./graphs/}{C:/Users/jgm568/Dropbox/StrategicSampleSelection/EMA_SUBMISSION_3/graphs/}{C:/Users/okopso/Dropbox/StrategicSampleSelection/EMA_SUBMISSION_3/graphs/}{C:/Users/Ottaviani/Dropbox/StrategicSampleSelection/EMA_SUBMISSION_3/graphs/}{C:/Dropbox/StrategicSampleSelection/EMA_SUBMISSION_3/graphs/}{C:/Users/jgm568/Dropbox/StrategicSampleSelection/EMA_SUBMISSION_3/graphs/}}
\theoremstyle{plain}

\newtheorem{theoremone}{Theorem}

\newtheorem{theoremonestar}{Theorem}

\newtheorem{theoremtwo}{Theorem}

\newtheorem{theoremtwostar}{Theorem}

\newtheorem{theoremAFA}{Theorem}

\newtheorem{theoremfour}{Theorem}

\newtheorem{theoremfourstar}{Theorem}

\newtheorem{theoremfive}{Theorem}

\newtheorem{theoremfivestar}{Theorem}

\newtheorem{theoremsix}{Theorem}

\newtheorem{lemma}{Lemma}
\newtheorem{proposition}{Proposition}

\theoremstyle{definition}

\input{tcilatex}

\renewcommand{\geq}{\geqslant}
\renewcommand{\leq}{\leqslant}
\newcommand{\R}{\mathbb{R}}
\newcommand{\ve}{\varepsilon}
\renewenvironment{proof}[1][\proofname]{{\noindent\bfseries #1. }}{\hfill$\qedsymbol$}

\newcommand{\tconv}{\, \xrightarrow{\; P_\theta \;} \, }
\newcommand{\tconvp}{\, \xrightarrow{\; P_{\theta'} \;} \, }
\newcommand{\tconvd}{\, \xrightarrow{\; \mathrm{d}_{\theta} \;} \, }
\newcommand{\tdist}{\, \overset{\mathrm{d}_{\theta}}{=} \, }
\newcommand{\E}{\mathbb{E}}
\newcommand{\Var}{\mathrm{Var}}

\newcommand{\AND}{\qquad\text{and}\qquad}
\newcommand{\IMPLIES}{\qquad\Longrightarrow\qquad}

\begin{document}

\author{Alfredo Di Tillio\thanks{%
Department of Economics and IGIER, Bocconi University, Via Roberto Sarfatti
25, 20136 Milan, Italy. Phone: +39--02--5836--5422. E-mail: \href{mailto:alfredo.ditillio@unibocconi.it}%
{alfredo.ditillio@unibocconi.it}.} \and Marco Ottaviani\thanks{%
Department of Economics and IGIER, Bocconi University, Via Roberto Sarfatti
25, 20136 Milan, Italy. Phone: +39--02--5836--3385. E-mail: \href{mailto:marco.ottaviani@unibocconi.it}%
{marco.ottaviani@unibocconi.it}.} \and Peter Norman S\o rensen\thanks{%
Department of Economics, University of Copenhagen, \O ster Farimagsgade 5,
Building 26, DK--1353 Copenhagen K, Denmark. Phone: +45--3532--3056. E-mail: 
\href{mailto:peter.sorensen@econ.ku.dk}{peter.sorensen@econ.ku.dk}.}}
\title{\textbf{Information Comparison of Order Statistics,\\ \Large with Applications to Auctions and Voting\thanks{We thank, without implication, seminar participants at Universities of Zurich, Northwestern, Aarhus, Oslo and workshop audiences at 2026 NBER CEME Decentralization Conference, 2026 Barcelona Forum, CISEI 2026 in Anacapri, SAET-EWET 2026 in Venice for helpful comments. Financial support from the European Research Council (HEUROPE 2022 ADG, GA No. 101055294) is gratefully acknowledged.}}}
\date{\today}
\maketitle

\begin{abstract}
We compare the informativeness of order statistics in a sample of conditionally independent draws from a distribution $F(x|\theta)$ as the sample size $n$ increases. The $k$-th highest of $n+1$ draws is more accurate than the $k$-th highest of $n$ if and only if the cumulative reverse hazard $-\log F(x|\theta)$ is log-supermodular. Symmetrically, the $k$-th lowest is more accurate if and only if the cumulative hazard $-\log(1-F(x|\theta))$ is log-supermodular. Reversals are exceptional, occurring only for experiments that are, up to increasing transformations, exponential location experiments. In large samples, middle order statistics are asymptotically fully informative, while bounded lower and upper ranks require unbounded informativeness tail conditions. When full learning fails, bounded ranks converge to location experiments, and more central ranks are Blackwell more informative. Extending the analysis from scalar order statistics to blocks of selected data, we obtain multidimensional comparisons under log-supermodularity of hazard rates. The results unify and extend information-aggregation results in auctions and provide a new order-statistic approach to strategic voting.

\bigskip

\noindent \emph{Keywords:} Accuracy, comparison of experiments, order statistics, auctions, information aggregation, strategic voting, data selection.

\medskip

\noindent\emph{JEL codes:} D82, D83, C72, C90
\end{abstract}

\setcounter{page}{0} \thispagestyle{empty}

\newpage

\section{Introduction\label{SEC:INTRO}}

How does the information contained in the equilibrium price of an auction depend on the number of bidders and the number of objects for sale? How does jury size affect the probability of a correct decision when jurors vote strategically? More generally, how does data selection affect the
informativeness of a sample? 

Behind each of these questions lies the same statistical structure. An unknown state of the world---the true value of an asset or the culpability
of a defendant---generates $n$ signals: one for each of the $n$ bidders in an auction, one for each of the $n$ voters on a jury, or, more generally, one for each unit in a sample. These signals are independent of one another once we condition on the true state. 

Our analysis takes a monotonic approach, first asking how the information content of the equilibrium price in an auction or the probability of a
correct verdict changes when one further signal\ (another competing bidder or another juror) is added to the sample. Only then do we ask what happens as the number of signals grows without bound. 

This is precisely the two-step approach pioneered by \citet{Condorcet} in his \emph{Essai}, where he first asks how a jury's probability of reaching a correct verdict changes as jurors are added. In a second stage, Condorcet lets the number of jurors grow to infinity, concluding that a sufficiently large jury could attain ``as great a probability as one wishes'' (p.~9) of a correct verdict. Condorcet's analysis already carries this two-step logic through to an exact, closed-form answer, for jurors with binary signals who vote sincerely. The conditions under which Condorcet's conclusions hold for more general signals, when jurors vote strategically rather than sincerely, and for general voting rules and
arbitrary jury size, are far from obvious. 

The keystone of our analysis is a single and unifying observation: in both voting and auctions, the relevant object is an order statistic. When a
decision requires at least $k$ votes out of $n$ to pass, the outcome is determined by the $k$-th highest of the $n$ signals---the signal of the
pivotal voter. In an auction, likewise, when $n$ bidders compete for $k$ objects, the price is set by a pivotal bid that is itself governed by the $k$-th highest order statistic of the $n$ signals. Writing $X_{1},\dots,X_{n}$ for the $n$ conditionally independent signals and $X_{k,n}$ for the $k$-th highest among them, the key object throughout this paper is the information content of order statistic $X_{k,n}$ as $n$ varies.

To compare this information content rigorously, we leverage \citeapos{Lehmann} accuracy order. This order characterizes the value of information for preferences in the general interval dominance ordered (IDO) class introduced by \citet{Quah-Strulovici}, encompassing monotone decision problems \citep{Karlin-Rubin} and single-crossing preferences \citep{Milgrom-Shannon}. Thus, rather than comparing particular prices or voting outcomes, we compare the underlying experiments generated by the corresponding order statistics. Our only maintained assumption is the monotone likelihood ratio property: higher signal realizations are stronger evidence for higher states. Under this assumption, order-statistic comparisons translate directly into economic comparative statics in auctions, voting, and selected-data environments in general.

The first part of the paper studies finite-sample comparisons. Starting from $X_{k,n}$, adding one more draw produces two adjacent order statistics. If the additional draw lies below the old statistic, there is more maximal selection: the rank from the top of the new statistic $X_{k,n+1}$ is unchanged, but the statistic is selected from a larger sample. If the additional draw lies above the old statistic, there is more minimal selection: the new statistic $X_{k+1,n+1}$ moves one step away from the top. Theorem~\ref{THM:ACCORDION1} shows that the second comparison increases accuracy if and only if the cumulative hazard $H(x|\theta)=-\log(1-F(x|\theta))$ is log-supermodular. Theorem~\ref{THM:ACCORDION2} shows that the first comparison increases accuracy if and only if the cumulative reverse hazard $R(x|\theta)=-\log F(x|\theta)$ is log-supermodular. Thus the two adjacent ways of adding information are governed by the two tails of the signal distribution: lower-tail comparisons by $H$, upper-tail comparisons by $R$.

Theorems~\ref{THM:ACCORDION1} and \ref{THM:ACCORDION2} generalize the comparison in \citet{SSS}, where the question was whether observing the maximum $X_{1,n}$ from a larger sample is more informative. Here the same question is asked at every order statistic. Away from the boundary $k=1$, however, the negative counterpart is not obtained simply by reversing log-supermodularity. Theorems~\ref{THM:EXP1} and \ref{THM:EXP2} show that global decreasing patterns are exceptional. If more maximal selection decreases accuracy at all nodes, then the experiment must be, up to increasing transformations of the observation and the state, an exponential-noise location experiment. Symmetrically, if more minimal selection decreases accuracy at all nodes, then the experiment must be a reflected-exponential-noise location experiment. Thus, except for these special cases, adding a draw improves information in at least one of the two adjacent directions.

The second part of the paper studies large samples. Given a sequence $k_n$, Theorem~\ref{THM:AFA} characterizes when observing $X_{k_n,n}$ asymptotically reveals the state. If both $k_n$ and $n-k_n$ diverge, the statistic is asymptotically an intermediate quantile. Sampling noise around that quantile vanishes, and a minimal identifiability condition on the family $F(\cdot|\theta)$ is necessary and sufficient for asymptotic full accuracy. If $k_n$ is bounded, the statistic is one of the few highest observations, and full learning requires the upper-tail unbounded informativeness condition of \citet{Milgrom79}: sufficiently high signals must make likelihood ratios arbitrarily favorable to high states. Symmetrically, if $n-k_n$ is bounded, the statistic is one of the few lowest observations, and full learning requires the corresponding lower-tail condition.

When full learning fails at the extremes, Theorems~\ref{THM:CLASS1} and \ref{THM:CLASS2} characterize the residual limit information. For a fixed $r\geq1$, the lower-rank statistic $X_{n-r+1,n}$ asymptotically identifies the lower-tail equivalence class containing the true state. Within that class, after a natural logarithmic normalization, the limit experiment has a location form---and the fixed upper-rank statistic $X_{r,n}$, symmetrically, the reflected form. Theorems~\ref{THM:BLACKWELL1} and \ref{THM:BLACKWELL2} then compare these limit experiments as the fixed extreme rank $r$ varies. More central (higher $r$) extreme ranks are Blackwell more informative, and as $r\to\infty$ the residual noise vanishes.

Exploiting the multidimensional accuracy framework of \citet{SSS}, we present the counterpart of our one-dimensional results for vectors of selected data. Blocks of order statistics must be analyzed recursively: conditional on a coordinate in the block, the next coordinate is an order statistic (either the minimum or the maximum) of the remaining observations drawn from a truncated distribution. Therefore the relevant objects are not only the cumulative hazards $R$ and $H$, but their increments over arbitrary truncation intervals. The natural sufficient conditions strengthen from log-supermodularity of cumulative hazards to log-supermodularity of the reverse hazard rate $f/F$ and the hazard rate $f/(1-F)$. This block perspective is relevant for committee selection, for instance under peremptory challenge, where the remaining members of a committee correspond to a block of consecutive central order statistics.

Applying our results to auctions, we unify, qualify and extend all previous results in the information aggregation literature. In symmetric affiliated-value multi-unit auctions, equilibrium bids are increasing in signals. Hence the lowest winning bid and the highest rejected bid are monotone transformations of order statistics. Holding fixed the number of objects and adding one bidder compares $X_{k,n}$ with $X_{k,n+1}$, so price informativeness is governed by the cumulative reverse hazard $R$. Increasing both the number of bidders and the number of objects compares $X_{k,n}$ with $X_{k+1,n+1}$, so the relevant condition is log-supermodularity of the cumulative hazard $H$. These finite-auction comparative statics complement the information-aggregation results of \citet{Wilson77} and \citet{Milgrom79}. In the large-auction limit, our result recovers the double-largeness logic of \citet{Pesendorfer-Swinkels-1997}: if both the number of objects and the number of losing bidders diverge, the price is an intermediate order statistic and aggregates information under identification alone. If one side of the market remains bounded, the price is an extreme order statistic, and full aggregation requires the corresponding tail condition.

Finally, we apply our results to strategic voting. Given a rule requiring at least $k$ votes for conviction, a symmetric equilibrium convicts if and only if the $k$-th highest signal exceeds the equilibrium cutoff. Thus, as observed by \citet{DOS-voting}, the jury decision coincides with the decision that the pivotal voter, the voter with signal $X_{k,n}$, would take after conditioning on being pivotal. This key observation connects strategic voting to order-statistics experiments. Adding one juror while keeping fixed the number of votes required for conviction is a more-maximal-selection comparison; adding one juror while also raising the quota is a more-minimal-selection comparison. In large juries, proportional rules aggregate information under identification. Extreme rules instead rely on either tail of the jurors' signal distribution: unanimity for conviction, for example, makes acquittal depend on a bounded number of low signals, so asymptotic correctness requires lower-tail unbounded informativeness. This recasts the lessons of \citet{Feddersen-Pesendorfer} and \citet{Duggan-Martinelli} in order-statistic terms.

A natural further question, especially relevant in voting applications, is how to compare order statistics for a given sample size---changing $k$ while keeping $n$ fixed. For example, in a jury of size $n$, changing the quota from $k$ to $k+1$ compares the adjacent order statistics $X_{k,n}$ and $X_{k+1,n}$. Such comparisons are economically important, because they correspond to making conviction or acceptance more demanding without changing the size of the committee. They are also substantially different from the global comparisons studied here. Adjacent order statistics with the same sample size are rarely comparable in terms of \citeapos{Lehmann} order. In general, neither one is more informative than the other for all monotone decision problems. The comparison is therefore typically local, depending on the specific trade-off between the two types of error. We develop this local notion of accuracy, together with the associated equilibrium and stability analysis, in the companion paper \citet{DOS-voting}. The present paper focuses instead on comparisons generated by increasing sample size and on the large-sample limits, where global comparisons are more broadly applicable.

The paper is organized as follows. Section~\ref{SEC:SETUP} presents the setup and Lehmann's accuracy order. Section~\ref{SEC:ACCORDION} develops the finite-sample order-statistic comparisons. Section~\ref{SEC:EXTREME} studies large samples, characterizes asymptotic full accuracy, and derives the nondegenerate limit experiments for fixed extreme ranks. Section~\ref{SEC:MULTI} extends the analysis to multidimensional experiments.  Sections~\ref{SEC:AUCTIONS} and \ref{SEC:VOTING} apply the results to auctions and voting. Proofs are collected in Appendix~\ref{APP:PROOFS}.

\section{Setup\label{SEC:SETUP}}

Consider a family of experiments indexed by a state $\theta\in\Theta\subseteq \R$, where $\Theta$ is either a finite set or a (possibly unbounded) interval. Conditional on $\theta$, the random variables $X_1,X_2,\ldots$ are i.i.d.~with cumulative distribution function $F(\cdot|\theta)$. We assume that $F(x|\theta)$ is continuous in $\theta$ and that, for every $\theta$, $F(\cdot|\theta)$ is absolutely continuous with density $f(\cdot|\theta)$, positive on the interior of its support. We write
\[
\underline x_\theta:=\inf\{x:F(x|\theta)>0\},
\qquad
\overline x_\theta:=\sup\{x:F(x|\theta)<1\}
\]
for the lower and upper bounds of the support under $\theta$. Throughout, $X_{k,n}$ denotes the $k$-th highest draw in the size-$n$ sample $X_1,\ldots,X_n$, so that $X_{1,n}\geq X_{2,n}\geq \cdots \geq X_{n,n}$.

Besides the regularity assumptions listed above, our only maintained economic assumption is the monotone likelihood ratio property, that is, for every pair of states $\theta$ and $\theta'$,
\begin{equation}
\label{EQ:MLR}
\tag{MLR}
\theta'>\theta
\IMPLIES
\frac{f(x|\theta')}{f(x|\theta)}
\quad\text{is increasing in $x$.}
\end{equation}
Since log-supermodularity is preserved by integration, \eqref{EQ:MLR} implies
\begin{equation}
\label{EQ:MLR2}
\tag{MLR2}
\theta'>\theta
\IMPLIES
\frac{F(x|\theta')}{F(x|\theta)}
\ \text{and}\
\frac{1-F(x|\theta')}{1-F(x|\theta)}
\ \text{are both increasing in }x.
\end{equation}
Moreover, since the likelihood-ratio order implies the usual stochastic order, i.e.~first-order stochastic dominance, \eqref{EQ:MLR} also implies
\begin{equation}
\label{EQ:D}
\tag{D}
\theta'>\theta
\quad\Longrightarrow\quad
F(x|\theta')\leq F(x|\theta)
\quad\text{for all }x.
\end{equation}

Our comparisons use \citeapos{Lehmann} accuracy order. Thus, for any rank-size pairs $k\leq n$ and $k'\leq n'$, we say $X_{k',n'}$ is \emph{more accurate} than $X_{k,n}$ if for all states $\theta'>\theta$ and realizations $x,x'\in\R$,
\[
\Pr\nolimits_\theta(X_{k',n'}\leq x')
=
\Pr\nolimits_\theta(X_{k,n}\leq x)
\IMPLIES
\Pr\nolimits_{\theta'}(X_{k',n'}\leq x')
\leq
\Pr\nolimits_{\theta'}(X_{k,n}\leq x)
\]
Intuitively, consider testing state $\theta$ against the higher state $\theta'$, with the higher action chosen when the observation exceeds a cutoff. If the cutoff $x'$ for $X_{k',n'}$ is chosen to match the type-I error induced by cutoff $x$ for $X_{k,n}$, then the inequality says that $X_{k',n'}$ induces a smaller type-II error. Thus, at every false-positive rate, the more accurate statistic gives a smaller false-negative rate. The accuracy order is invariant to strictly increasing transformations of the observation and of the state, and under \eqref{EQ:MLR} it is the appropriate comparison for monotone decision problems in the \emph{interval dominance order} family---see \citet{Quah-Strulovici} and \citet{SSS}.

\section{Minimal and Maximal Selection\label{SEC:ACCORDION}}

We begin with finite-sample comparisons. Fixing an order statistic $X_{k,n}$, we ask what happens to its information content when one additional draw is added, that is, $n$ increases by one. If the new draw lies below $X_{k,n}$, the statistic becomes $X_{k,n+1}$. If it lies above, the statistic becomes $X_{k+1,n+1}$. This section characterizes when each of these two movements increases or decreases accuracy.

\subsection{Increasing Accuracy}

It is useful to organize our comparisons in a triangular diagram where each node represents an order statistic $X_{k,n}$. Moving from row $n$ to row $n+1$, if the additional draw falls below $X_{k,n}$ the observation becomes $X_{k,n+1}$ and we say that there is \emph{more maximal selection}: the observation is still the $k$-th highest, but the number of (unseen) realizations below it has grown from $n-k$ to $n-k+1$. If the additional draw falls above $X_{k,n}$, the observation becomes $X_{k+1,n+1}$, and we say that there is \emph{more minimal selection}: the observation becomes the $(k+1)$-th highest, as the number of (unseen) realizations below it remains the same ($n-k$) but those above it have increased from $k-1$ to $k$. We refer to this triangular system of adjacent comparisons as the \emph{accordion}.

\begin{figure}[t]
\centering
\begingroup

\def\dx{1.5} 
\def\dy{0.9} 

\begin{tikzpicture}[
    x=1cm,
    y=1cm,
    >=stealth,
    every node/.style={font=\small, inner sep=1.5pt},
    arr/.style={->, line width=0.45pt, shorten <=3pt, shorten >=3pt},
    contarr/.style={->, densely dashed, line width=0.45pt, shorten <=3pt, shorten >=3pt}
]


\node (X11) at (0,0) {$X_{1,1}$};

\node (X22) at ({-0.5*\dx},{1*\dy}) {$X_{2,2}$};
\node (X12) at ({ 0.5*\dx},{1*\dy}) {$X_{1,2}$};

\node (X33) at ({-1*\dx},{2*\dy}) {$X_{3,3}$};
\node (X23) at ({ 0*\dx},{2*\dy}) {$X_{2,3}$};
\node (X13) at ({ 1*\dx},{2*\dy}) {$X_{1,3}$};


\node (Xnn)    at ({-1.5*\dx},{3*\dy}) {$X_{n,n}$};
\node (dotsnL) at ({-0.5*\dx},{3*\dy}) {$\cdots$};
\node (Xkn)    at ({ 0.5*\dx},{3*\dy}) {$X_{k,n}$};
\node (dotsnR) at ({ 1.0*\dx},{3*\dy}) {$\cdots$};
\node (X1n)    at ({ 1.5*\dx},{3*\dy}) {$X_{1,n}$};


\node (Xnp1np1) at ({-2.0*\dx},{4*\dy}) {$X_{n+1,n+1}$};
\node (dotsp1L) at ({-1.0*\dx},{4*\dy}) {$\cdots$};
\node (Xkp1np1) at ({ 0.0*\dx},{4*\dy}) {$X_{k+1,n+1}$};
\node (Xknp1)   at ({ 1.0*\dx},{4*\dy}) {$X_{k,n+1}$};
\node (dotsp1R) at ({ 1.5*\dx},{4*\dy}) {$\cdots$};
\node (X1np1)   at ({ 2.0*\dx},{4*\dy}) {$X_{1,n+1}$};

\coordinate (U0) at ({-2.5*\dx},{5*\dy});
\coordinate (U1) at ({-1.5*\dx},{5*\dy});
\coordinate (U2) at ({-0.5*\dx},{5*\dy});
\coordinate (U3) at ({ 0.5*\dx},{5*\dy});
\coordinate (U4) at ({ 1.5*\dx},{5*\dy});
\coordinate (U5) at ({ 2.5*\dx},{5*\dy});


\draw[arr] (X11) -- (X22);
\draw[arr] (X11) -- (X12);

\draw[arr] (X22) -- (X33);
\draw[arr] (X22) -- (X23);

\draw[arr] (X12) -- (X23);
\draw[arr] (X12) -- (X13);


\draw[contarr] (X33) -- (Xnn);
\draw[contarr] (X33) -- (dotsnL);

\draw[contarr] (X23) -- (dotsnL);
\draw[contarr] (X23) -- (Xkn);

\draw[contarr] (X13) -- (Xkn);
\draw[contarr] (X13) -- (X1n);


\draw[arr] (Xnn) -- (Xnp1np1);
\draw[arr] (Xnn) -- (dotsp1L);

\draw[arr] (dotsnL) -- (dotsp1L);
\draw[arr] (dotsnL) -- (Xkp1np1);

\draw[arr] (Xkn) -- (Xkp1np1);
\draw[arr] (Xkn) -- (Xknp1);

\draw[arr] (X1n) -- (Xknp1);

\draw[arr] (X1n) -- (X1np1);


\draw[contarr] (Xnp1np1) -- (U0);
\draw[contarr] (Xnp1np1) -- (U1);

\draw[contarr] (dotsp1L) -- (U1);
\draw[contarr] (dotsp1L) -- (U2);

\draw[contarr] (Xkp1np1) -- (U2);
\draw[contarr] (Xkp1np1) -- (U3);

\draw[contarr] (Xknp1) -- (U3);
\draw[contarr] (Xknp1) -- (U4);

\draw[contarr] (X1np1) -- (U4);
\draw[contarr] (X1np1) -- (U5);

\end{tikzpicture}

\endgroup
\caption{The accordion when both more minimal and more maximal selection increase accuracy.}
\label{FIG:ACCORDION1}
\end{figure}

Figure~\ref{FIG:ACCORDION1} depicts the case in which accuracy increases as the accordion expands in either direction---an arrow means that accuracy increases in the direction of the arrow. Consider first the branches corresponding to more minimal selection, from $X_{k,n}$ to $X_{k+1,n+1}$. The answer in this case is governed by the cumulative hazard of the basic distribution,
\[
H(x|\theta):=-\log\big[1-F(x|\theta)\big].
\]

\medskip

\begin{theoremone}
\label{THM:ACCORDION1}
More minimal selection increases accuracy if and only if $H$ is log-supermodular.
\end{theoremone}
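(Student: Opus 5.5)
The plan is to reduce the comparison of $X_{k,n}$ and $X_{k+1,n+1}$ to a comparison of one-dimensional functions of the cumulative hazard. Write $p=1-F(x|\theta)=e^{-H(x|\theta)}$. Then
\[
\Pr\nolimits_\theta(X_{k,n}\leq x)=\psi_{k,n}\big(H(x|\theta)\big),\qquad
\psi_{k,n}(h):=\Pr\big(\mathrm{Bin}(n,e^{-h})\leq k-1\big),
\]
and $\psi_{k,n}$ is strictly increasing in $h$ and does not depend on $\theta$. So both order statistics are monotone transforms of the same primitive $H$.

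Fix $x$ and find $x'$ with equal type-I errors under $\theta$, that is $\psi_{k+1,n+1}(H(x'|\theta))=\psi_{k,n}(H(x|\theta))$. This says $H(x'|\theta)=\tau(H(x|\theta))$, where
\[
\tau:=\psi_{k+1,n+1}^{-1}\circ\psi_{k,n}
\]
is a universal increasing map on $(0,\infty)$. Lehmann's inequality at $\theta'$ then becomes $H(x'|\theta')\geq\tau(H(x|\theta'))$. Write $a=H(x|\theta)$, $a'=H(x|\theta')$ and $b=H(x'|\theta)=\tau(a)$. By \eqref{EQ:D}, $a'\geq a$.

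\textbf{Sufficiency.} Suppose $H$ is log-supermodular. I split the required inequality into two pieces.

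First, I need the ordering of $x$ and $x'$. I would show $\tau(h)\geq h$, which gives $x'\geq x$ since $H(\cdot|\theta)$ is increasing. The inequality $\tau(h)\geq h$ follows from the stochastic dominance $X_{k+1,n+1}\leq_{st}X_{k,n}$, i.e.\ $\psi_{k+1,n+1}\geq\psi_{k,n}$.

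Given $x'\geq x$, log-supermodularity of $H$ yields
\[
\frac{H(x'|\theta')}{H(x'|\theta)}\geq\frac{a'}{a},
\qquad\text{so}\qquad
H(x'|\theta')\geq \frac{a'}{a}\,\tau(a).
\]

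Second, it then suffices to prove $\frac{a'}{a}\tau(a)\geq\tau(a')$ for all $a'\geq a$. This says $\tau(h)/h$ is nonincreasing, i.e.\ $\tau$ has elasticity at most one: $s\mapsto\log\tau(e^s)$ has slope $\leq 1$. (I have to be careful with the direction here. In the extreme case $k=n$, where $\tau(h)=\frac{n+1}{n}h$, the elasticity is exactly one, so the inequality is tight.)

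This elasticity lemma about binomial tails is where I expect the main obstacle. I would attack it by writing $\psi_{k,n}(h)$ as a Beta integral in $u=e^{-h}$:
\[
\psi_{k,n}(h)=\frac{n!}{(k-1)!(n-k)!}\int_{e^{-h}}^{1}t^{k-1}(1-t)^{n-k}\,dt .
\]
Differentiating the identity $\psi_{k+1,n+1}(\tau(h))=\psi_{k,n}(h)$ gives $\tau'(h)$ as a ratio of Beta densities. This reduces $h\tau'(h)\leq\tau(h)$ to a single-crossing or log-concavity property of the ratio of the two Beta distributions $\mathrm{Beta}(k,n-k+1)$ and $\mathrm{Beta}(k+1,n-k+1)$ evaluated at $e^{-h}$. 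Equivalently, it is a statement that $-\log$ of a Beta tail is log-concave in $\log(-\log u)$. I would first check the case $k=1$ explicitly to confirm the direction before proving the general Beta comparison.

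\textbf{Necessity.} It suffices to use the nodes $X_{n,n}\to X_{n+1,n+1}$, i.e.\ comparing minima, where $\tau(h)=\frac{n+1}{n}h$ exactly. There Lehmann's inequality reads: whenever $H(x'|\theta)=\frac{n+1}{n}H(x|\theta)$,
\[
\frac{H(x'|\theta')}{H(x'|\theta)}\geq\frac{H(x|\theta')}{H(x|\theta)} .
\]
Chaining these comparisons along paths $n\to n+1\to\cdots\to m$ yields the inequality for every rational ratio $m/n>1$ of $H(\cdot|\theta)$-values. Since $H(\cdot|\theta)$ is continuous and increasing, density of the rationals gives $H(\cdot|\theta')/H(\cdot|\theta)$ nondecreasing in $x$ for all $\theta'>\theta$, which is log-supermodularity of $H$.
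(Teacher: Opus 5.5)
Your architecture is the paper's. Both CDFs are universal increasing functions of $H(\cdot|\theta)$, Lehmann's condition becomes an inequality for $\tau=\psi_{k+1,n+1}^{-1}\circ\psi_{k,n}$ (the paper's $\eta$), and sufficiency combines log-supermodularity of $H$ with $\tau(h)\le h$ and ``$\tau(h)/h$ nonincreasing.'' Necessity via minima and rational ratios is also the paper's argument.

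However, your setup has its inequalities reversed:
\begin{itemize}
\item Since $\psi_{k+1,n+1}\ge\psi_{k,n}$, inverting gives $\tau(h)\le h$, so $x'\le x$. For $k=n$, $\psi_{n,n}(h)=1-e^{-nh}$ gives $\tau(h)=\frac{n}{n+1}h$, not $\frac{n+1}{n}h$.
\item Condition (D) gives $H(x|\theta')\le H(x|\theta)$, i.e.\ $a'\le a$.
\item The paper's definition requires $\psi_{k+1,n+1}(H(x'|\theta'))\le\psi_{k,n}(a')$, i.e.\ $H(x'|\theta')\le\tau(a')$.
\end{itemize}
Because all three flip together, the corrected chain
\[
H(x'|\theta')\le\frac{a'}{a}\,\tau(a)\le\tau(a')\qquad(x'\le x,\ a'\le a)
\]
still needs exactly ``$\tau(h)/h$ nonincreasing.'' The corrected necessity step also reaches your conclusion: $H(x'|\theta)=\frac{n}{n+1}H(x|\theta)$, hence $x'<x$, forces $H(\cdot|\theta')/H(\cdot|\theta)$ to be no larger at $x'$ than at $x$. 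So this part is repairable, but as written several intermediate claims are false.

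The genuine gap is the lemma itself. It is the entire probabilistic content of sufficiency, and you leave it as a plan. Differentiating $\psi_{k+1,n+1}(\tau(h))=\psi_{k,n}(h)$ gives $\tau'(h)=\psi_{k,n}'(h)/\psi_{k+1,n+1}'(\tau(h))$. This involves the implicitly defined $\tau(h)$, so $h\tau'(h)\le\tau(h)$ is not a pointwise comparison of explicit Beta densities.

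What is needed is a star-order statement. With $j=n-k+1$, the $j$-th lowest of $n+1$ i.i.d.\ unit exponentials must be star-smaller than the $j$-th lowest of $n$. Equivalently, for every $c>0$, $\psi_{k+1,n+1}(ch)-\psi_{k,n}(h)$ changes sign at most once, from negative to positive. This compares two distributions; it is not a log-concavity property of a single Beta tail, so your ``equivalently'' is unsupported. Checking $k=1$ only confirms the direction.

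The paper closes this step with Yu's theorem: for each $a>0$, the $j$-th lowest of $E_1,\ldots,E_{n+1}$ is star-smaller than the $j$-th lowest of $E_1,\ldots,E_n,E_{n+1}/a$. Letting $a\to0$ sends the last draw to infinity, so the heterogeneous statistic converges almost surely to the $j$-th lowest of $E_1,\ldots,E_n$. The pointwise limit shows $\tau^{-1}(y)/y$ is nondecreasing, i.e.\ $\tau(y)/y$ is nonincreasing. Without this, or an equivalent star-order result for exponential order statistics, sufficiency is not proved.
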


The counterpart for maximal selection, Theorem~\ref{THM:ACCORDION2} below, follows from a simple reflection argument. Replacing each variable $X_i$ with $-X_i$ and each state $\theta$ with $-\theta$ turns maximal selection in the original experiment into minimal selection in the reflected experiment, and the cumulative hazard of the reflected experiment is the cumulative reverse hazard of the original experiment,
\[
R(x|\theta):=-\log F(x|\theta).
\]

\begin{theoremonestar}
\label{THM:ACCORDION2}
More maximal selection increases accuracy if and only if $R$ is log-supermodular.
\end{theoremonestar}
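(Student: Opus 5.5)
The plan is to derive Theorem~\ref{THM:ACCORDION2} from Theorem~\ref{THM:ACCORDION1} by the reflection the paper describes. Define the reflected experiment with observations $Y_i:=-X_i$ and state $\varphi:=-\theta\in-\Theta$. Its distribution is
\[
G(y|\varphi)=\Pr\nolimits_{-\varphi}(-X\leq y)=1-F(-y|-\varphi).
\]
I will check three things. First, the reflected experiment satisfies the standing assumptions. Second, the reflection maps the maximal-selection comparison into a minimal-selection comparison without changing the accuracy relation. Third, the cumulative hazard of $G$ is log-supermodular exactly when $R$ is.

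\emph{Step 1 (assumptions).} Continuity in the state, absolute continuity and positivity of the density are immediate. For \eqref{EQ:MLR}, the reflected density is $g(y|\varphi)=f(-y|-\varphi)$. For $\varphi'>\varphi$ we have
\[
\frac{g(y|\varphi')}{g(y|\varphi)}=\frac{f(-y|-\varphi')}{f(-y|-\varphi)},
\]
which is the reciprocal of the original ratio for the pair $-\varphi>-\varphi'$, evaluated at $-y$. The original ratio is increasing in $x=-y$, so its reciprocal is decreasing in $-y$, i.e.\ increasing in $y$. Hence \eqref{EQ:MLR} holds for the reflected experiment.

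\emph{Step 2 (ranks and accuracy).} Since $-X_{k,n}=Y_{n-k+1,n}$, the maximal-selection move $X_{k,n}\to X_{k,n+1}$ becomes $Y_{n-k+1,n}\to Y_{n-k+2,n+1}$. This is a minimal-selection move from node $(k',n)$ to $(k'+1,n+1)$ with $k'=n-k+1$, and as $k$ ranges over $1,\dots,n$ so does $k'$. It remains to show that accuracy is preserved when observation and state are both reversed; I expect this to be the only delicate point, since the paper states invariance only under increasing transformations. I would argue through ROC curves. Because all order-statistic distributions are continuous and strictly increasing on their supports, Lehmann's condition for $\theta<\theta'$ says that the curve
\[
u\mapsto \Pr\nolimits_{\theta'}\bigl(A\leq x_A(u)\bigr),\qquad \Pr\nolimits_{\theta}\bigl(A\leq x_A(u)\bigr)=u,
\]
for the more accurate statistic $A$ lies pointwise below the corresponding curve for $B$. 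Reflecting turns each cdf pair $(F_\theta,F_{\theta'})$ into the pair of survival functions with the roles of the two states swapped. Geometrically, this is the map $(u,v)\mapsto(1-v,1-u)$ applied to the ROC graph. Since both curves are increasing, this map preserves the relation ``graph of $A$ lies below graph of $B$'', so it preserves pointwise dominance. Hence $X_{k,n+1}$ is more accurate than $X_{k,n}$ iff $Y_{n-k+2,n+1}$ is more accurate than $Y_{n-k+1,n}$. Care is needed only at the endpoints of the supports, where I would invoke the continuity assumptions.

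\emph{Step 3 (hazards).} The cumulative hazard of the reflected experiment is
\[
H_G(y|\varphi)=-\log\bigl[1-G(y|\varphi)\bigr]=-\log F(-y|-\varphi)=R(-y|-\varphi).
\]
Reversing both arguments preserves supermodularity of $\log R$: increasing differences in $(x,\theta)$ are equivalent to increasing differences in $(-x,-\theta)$. So $H_G$ is log-supermodular iff $R$ is. Applying Theorem~\ref{THM:ACCORDION1} to the reflected experiment at every node then gives Theorem~\ref{THM:ACCORDION2}.
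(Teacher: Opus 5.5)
Your proposal is correct and follows the same route as the paper. The paper also reflects $X_i\mapsto -X_i$ and $\theta\mapsto-\theta$, checks that \eqref{EQ:MLR} survives, observes that the reflected cumulative hazard is $R(-y|-\tilde\theta)$, and applies Theorem~\ref{THM:ACCORDION1} through the rank bijection $k\mapsto n-k+1$. Your ROC-curve argument that Lehmann accuracy is preserved when observation and state are both reversed, via the map $(u,v)\mapsto(1-v,1-u)$ on $u\in(0,1)$, supplies a step the paper leaves implicit.
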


Theorems~\ref{THM:ACCORDION1} and \ref{THM:ACCORDION2} generalize the comparison in Theorem~2 of \citet{SSS}. In that paper, more maximal selection means observing the highest order statistic in a larger sample---moving from $X_{1,n}$ to $X_{1,n+1}$. Symmetrically, more minimal selection means observing the lowest order statistic in a larger sample---moving from $X_{n,n}$ to $X_{n+1,n+1}$. In this paper, the same questions are asked at every node of the accordion.

Differently from the special case ($k=1$ or $k=n$) considered in \citet{SSS}, however, log-submodularity of $H$ or $R$ does \emph{not} reverse the statement in Theorem~\ref{THM:ACCORDION1} or \ref{THM:ACCORDION2}. If $R$ is log-submodular, the maximum $X_{1,n}$ becomes less informative as $n$ grows; and if $H$ is log-submodular, the minimum $X_{n,n}$ becomes less informative as $n$ grows. But these boundary comparisons do not extend to intermediate order statistics. For example, with Gumbel noise the maximum $X_{1,n}$ is equally accurate for all $n$, while the other order statistics become more accurate as $n$ grows.

The next subsection identifies the exceptional cases in which a whole family of arrows does reverse direction. Those cases arise only when, up to monotone transformations of the observation and the state, the experiment has exponential (or reflected-exponential) noise.

\subsection{Decreasing Accuracy}

Theorems~\ref{THM:ACCORDION1} and \ref{THM:ACCORDION2} identify the conditions under which both families of arrows in the accordion point upward. A global reversal, with both more maximal and more minimal selection decreasing accuracy, is impossible. To see why, fix a node $X_{k,n}$ and consider a sample of size $n+1$, from which one draw is deleted at random. If the deleted draw is below $X_{k,n+1}$, the resulting $k$-th highest order statistic is $X_{k,n+1}$; if instead the deleted draw is among the $k$ highest, the resulting statistic is $X_{k+1,n+1}$. Hence $X_{k,n}$ is a state-independent mixture of its two children $X_{k,n+1}$ and $X_{k+1,n+1}$. Clearly, observing one of the children at random, while knowing which child it is, Blackwell dominates observing only the mixture. Therefore the parent cannot be more valuable than both children in every decision problem, except in payoff-equivalent degenerate cases. Thus a nontrivial accordion cannot have all arrows reversed.

\medskip
\begin{figure}[t]
\centering
\begin{minipage}[t]{0.48\textwidth}
\centering
\begingroup

\def\dx{1.5} 
\def\dy{0.9}   

\begin{tikzpicture}[
    x=1cm,
    y=1cm,
    >=stealth,
    every node/.style={font=\small, inner sep=1.5pt},
    arr/.style={->, line width=0.45pt, shorten <=3pt, shorten >=3pt},
    contarr/.style={->, densely dashed, line width=0.45pt, shorten <=3pt, shorten >=3pt}
]


\node (X11) at (0,0) {$X_{1,1}$};

\node (X22) at ({-0.5*\dx},{1*\dy}) {$X_{2,2}$};
\node (X12) at ({ 0.5*\dx},{1*\dy}) {$X_{1,2}$};

\node (X33) at ({-1*\dx},{2*\dy}) {$X_{3,3}$};
\node (X23) at ({ 0*\dx},{2*\dy}) {$X_{2,3}$};
\node (X13) at ({ 1*\dx},{2*\dy}) {$X_{1,3}$};


\node (Xnn)    at ({-1.5*\dx},{3*\dy}) {$X_{n,n}$};
\node (dotsnL) at ({-0.5*\dx},{3*\dy}) {$\cdots$};
\node (Xkn)    at ({ 0.5*\dx},{3*\dy}) {$X_{k,n}$};
\node (dotsnR) at ({ 1.0*\dx},{3*\dy}) {$\cdots$};
\node (X1n)    at ({ 1.5*\dx},{3*\dy}) {$X_{1,n}$};


\node (Xnp1np1) at ({-2.0*\dx},{4*\dy}) {$X_{n+1,n+1}$};
\node (dotsp1L) at ({-1.0*\dx},{4*\dy}) {$\cdots$};
\node (Xkp1np1) at ({ 0.0*\dx},{4*\dy}) {$X_{k+1,n+1}$};
\node (Xknp1)   at ({ 1.0*\dx},{4*\dy}) {$X_{k,n+1}$};
\node (dotsp1R) at ({ 1.5*\dx},{4*\dy}) {$\cdots$};
\node (X1np1)   at ({ 2.0*\dx},{4*\dy}) {$X_{1,n+1}$};

\coordinate (U0) at ({-2.5*\dx},{5*\dy});
\coordinate (U1) at ({-1.5*\dx},{5*\dy});
\coordinate (U2) at ({-0.5*\dx},{5*\dy});
\coordinate (U3) at ({ 0.5*\dx},{5*\dy});
\coordinate (U4) at ({ 1.5*\dx},{5*\dy});
\coordinate (U5) at ({ 2.5*\dx},{5*\dy});


\draw[arr] (X11) -- (X22);
\draw[arr] (X12) -- (X11);

\draw[arr] (X22) -- (X33);
\draw[arr] (X23) -- (X22);

\draw[arr] (X12) -- (X23);
\draw[arr] (X13) -- (X12);


\draw[contarr] (X33) -- (Xnn);
\draw[contarr] (dotsnL) -- (X33);

\draw[contarr] (X23) -- (dotsnL);
\draw[contarr] (Xkn) -- (X23);

\draw[contarr] (X13) -- (Xkn);
\draw[contarr] (X1n) -- (X13);


\draw[arr] (Xnn) -- (Xnp1np1);
\draw[arr] (dotsp1L) -- (Xnn);

\draw[arr] (dotsnL) -- (dotsp1L);
\draw[arr] (Xkp1np1) -- (dotsnL);

\draw[arr] (Xkn) -- (Xkp1np1);
\draw[arr] (Xknp1) -- (Xkn);

\draw[arr] (X1n) -- (Xknp1);

\draw[arr] (X1np1) -- (X1n);


\draw[contarr] (Xnp1np1) -- (U0);
\draw[contarr] (U1) -- (Xnp1np1);

\draw[contarr] (dotsp1L) -- (U1);
\draw[contarr] (U2) -- (dotsp1L);

\draw[contarr] (Xkp1np1) -- (U2);
\draw[contarr] (U3) -- (Xkp1np1);

\draw[contarr] (Xknp1) -- (U3);
\draw[contarr] (U4) -- (Xknp1);

\draw[contarr] (X1np1) -- (U4);
\draw[contarr] (U5) -- (X1np1);

\end{tikzpicture}

\endgroup
\end{minipage}%
\hfill
\begin{minipage}[t]{0.48\textwidth}
\centering
\begingroup

\def\dx{1.5} 
\def\dy{0.9}   

\begin{tikzpicture}[
    x=1cm,
    y=1cm,
    >=stealth,
    every node/.style={font=\small, inner sep=1.5pt},
    arr/.style={->, line width=0.45pt, shorten <=3pt, shorten >=3pt},
    contarr/.style={->, densely dashed, line width=0.45pt, shorten <=3pt, shorten >=3pt}
]


\node (X11) at (0,0) {$X_{1,1}$};

\node (X22) at ({-0.5*\dx},{1*\dy}) {$X_{2,2}$};
\node (X12) at ({ 0.5*\dx},{1*\dy}) {$X_{1,2}$};

\node (X33) at ({-1*\dx},{2*\dy}) {$X_{3,3}$};
\node (X23) at ({ 0*\dx},{2*\dy}) {$X_{2,3}$};
\node (X13) at ({ 1*\dx},{2*\dy}) {$X_{1,3}$};


\node (Xnn)    at ({-1.5*\dx},{3*\dy}) {$X_{n,n}$};
\node (dotsnL) at ({-0.5*\dx},{3*\dy}) {$\cdots$};
\node (Xkn)    at ({ 0.5*\dx},{3*\dy}) {$X_{k,n}$};
\node (dotsnR) at ({ 1.0*\dx},{3*\dy}) {$\cdots$};
\node (X1n)    at ({ 1.5*\dx},{3*\dy}) {$X_{1,n}$};


\node (Xnp1np1) at ({-2.0*\dx},{4*\dy}) {$X_{n+1,n+1}$};
\node (dotsp1L) at ({-1.0*\dx},{4*\dy}) {$\cdots$};
\node (Xkp1np1) at ({ 0.0*\dx},{4*\dy}) {$X_{k+1,n+1}$};
\node (Xknp1)   at ({ 1.0*\dx},{4*\dy}) {$X_{k,n+1}$};
\node (dotsp1R) at ({ 1.5*\dx},{4*\dy}) {$\cdots$};
\node (X1np1)   at ({ 2.0*\dx},{4*\dy}) {$X_{1,n+1}$};

\coordinate (U0) at ({-2.5*\dx},{5*\dy});
\coordinate (U1) at ({-1.5*\dx},{5*\dy});
\coordinate (U2) at ({-0.5*\dx},{5*\dy});
\coordinate (U3) at ({ 0.5*\dx},{5*\dy});
\coordinate (U4) at ({ 1.5*\dx},{5*\dy});
\coordinate (U5) at ({ 2.5*\dx},{5*\dy});


\draw[arr] (X22) -- (X11);
\draw[arr] (X11) -- (X12);

\draw[arr] (X33) -- (X22);
\draw[arr] (X22) -- (X23);

\draw[arr] (X23) -- (X12);
\draw[arr] (X12) -- (X13);


\draw[contarr] (Xnn) -- (X33);
\draw[contarr] (X33) -- (dotsnL);

\draw[contarr] (dotsnL) -- (X23);
\draw[contarr] (X23) -- (Xkn);

\draw[contarr] (Xkn) -- (X13);
\draw[contarr] (X13) -- (X1n);


\draw[arr] (Xnp1np1) -- (Xnn);
\draw[arr] (Xnn) -- (dotsp1L);

\draw[arr] (dotsp1L) -- (dotsnL);
\draw[arr] (dotsnL) -- (Xkp1np1);

\draw[arr] (Xkp1np1) -- (Xkn);
\draw[arr] (Xkn) -- (Xknp1);

\draw[arr] (Xknp1) -- (X1n);

\draw[arr] (X1n) -- (X1np1);


\draw[contarr] (U0) -- (Xnp1np1);
\draw[contarr] (Xnp1np1) -- (U1);

\draw[contarr] (U1) -- (dotsp1L);
\draw[contarr] (dotsp1L) -- (U2);

\draw[contarr] (U2) -- (Xkp1np1);
\draw[contarr] (Xkp1np1) -- (U3);

\draw[contarr] (U3) -- (Xknp1);
\draw[contarr] (Xknp1) -- (U4);

\draw[contarr] (U4) -- (X1np1);
\draw[contarr] (X1np1) -- (U5);

\end{tikzpicture}

\endgroup
\end{minipage}
\caption{Left: more minimal selection increases and more maximal selection decreases accuracy. Right: more minimal selection decreases and more maximal selection increases accuracy.}
\label{FIG:ACCORDION2}
\end{figure}

The only remaining global orientations are therefore mixed: one branch points upward and the other downward. Figure~\ref{FIG:ACCORDION2} displays the two possibilities. In the left panel, more minimal selection increases accuracy, while more maximal selection decreases it. In the right panel, the roles are reversed. The next two theorems show that these mixed patterns are exceptional. Assuming that more maximal selection decreases accuracy---that is, assuming that all downward-pointing arrows in the left panel of Figure~\ref{FIG:ACCORDION2} hold---is enough to obtain the entire left-panel pattern. Moreover, it implies that the experiment is, up to increasing transformations of the observation and the state, an exponential-noise location experiment. Conversely, every such experiment gives rise to the left-panel pattern.

\begin{theoremtwo}
\label{THM:EXP1}
The following are equivalent:
\begin{enumerate}
\item[(i)] More maximal selection decreases accuracy.
\item[(ii)] More minimal selection increases and more maximal selection decreases accuracy.
\item[(iii)] There exist strictly increasing functions $\theta\mapsto\lambda(\theta)$ and $x\mapsto \xi(x)$ such that
\[
 \xi(X_i) \tdist \lambda(\theta)+E_{i}
\]
where $E_{i}$ is an exponential random variable.
\end{enumerate}
\end{theoremtwo}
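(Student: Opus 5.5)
The plan is to prove (iii)$\Rightarrow$(ii)$\Rightarrow$(i)$\Rightarrow$(iii). The step (ii)$\Rightarrow$(i) is immediate, so the work lies in the other two implications.

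\emph{From (iii) to (ii).} Accuracy is invariant to increasing transformations of the observation and the state, so I may take $F(x|\lambda)=1-e^{-(x-\lambda)}$ for $x\geq\lambda$. Then $H(x|\lambda)=(x-\lambda)_+$. Its logarithm $\log(x-\lambda)$ has cross-partial $1/(x-\lambda)^2>0$, so $H$ is log-supermodular, and Theorem~\ref{THM:ACCORDION1} gives that more minimal selection increases accuracy.

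For maximal selection I will use the R\'enyi representation. Writing $m=n-k+1$ for the rank of $X_{k,n}$ from the bottom,
\[
X_{k,n}\tdist \lambda+\sum_{j=1}^{m}\frac{E_j}{n-j+1},
\]
with $E_j$ i.i.d.\ standard exponential. The statistic $X_{k,n+1}$ has rank $m+1$ from the bottom in a sample of $n+1$. Its first spacing is $E_1/(n+1)$, and its remaining $m$ spacings reproduce exactly the representation of $X_{k,n}$. Hence $X_{k,n+1}\tdist X_{k,n}+E/(n+1)$ with $E$ independent of $X_{k,n}$ and of $\lambda$. This is a garbling $Y=h(X,Z)$ that is increasing in $X$, so by Lehmann's characterization $X_{k,n+1}$ is less accurate than $X_{k,n}$, which gives (ii).

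\emph{From (i) to (iii).} This is the main obstacle. I will write every accuracy comparison on the quantile scale. Set $u=F(x|\theta)$ and $B_{k,n}(u)=\Pr(\text{at most }k-1\text{ of }n\text{ uniforms exceed }u)$, so that $\Pr_\theta(X_{k,n}\leq x)=B_{k,n}(F(x|\theta))$. Then ``$X_{k,n+1}$ less accurate than $X_{k,n}$'' becomes, for all $\theta<\theta'$,
\[
B_{k,n+1}\circ\psi_{\theta\theta'}\circ B_{k,n+1}^{-1}\;\geq\; B_{k,n}\circ\psi_{\theta\theta'}\circ B_{k,n}^{-1},
\qquad \psi_{\theta\theta'}:=F(\cdot|\theta')\circ F^{-1}(\cdot|\theta).
\]
Three substeps follow.
\begin{enumerate}
\item[(a)] I will localize by letting $\theta'\downarrow\theta$ (and, for finite $\Theta$, work directly with adjacent pairs). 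This reduces the inequality to a comparison of the ``gauge'' $a_\theta(u):=\partial_\theta F(x|\theta)$ evaluated at $x=F^{-1}(u|\theta)$, after transport through $B_{k,n}$. Equivalently, it becomes a statement about how $\partial_\theta H$ varies along $x$.
\item[(b)] I will extract the two extreme nodes. At $k=1$ the condition says $X_{1,n}$ decreases in accuracy, which by the argument of \citet{SSS} forces $R$ to be log-submodular. At nodes near the bottom, for instance $X_{n,n}\to X_{n,n+1}$ where only one draw lies below, the same inequality gives the reverse of the inequality that Theorem~\ref{THM:ACCORDION1} associates with log-supermodularity of $H$.
\item[(c)] I will combine the inequalities obtained at all nodes $(k,n)$. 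Because the $B_{k,n}$ have Bernstein-polynomial densities, the family of conditions works like a moment/Hausdorff system. My aim is to show it pins down $a_\theta(u)=-c(\theta)(1-u)$, i.e.\ $\partial_\theta H(x|\theta)$ does not depend on $x$.
\end{enumerate}
Once (c) holds, $H(x|\theta)=\xi(x)-\lambda(\theta)$ on the support. This gives $1-F(x|\theta)=e^{-(\xi(x)-\lambda(\theta))}$. Strict monotonicity of $\xi$ follows from positivity of the density, and strict monotonicity of $\lambda$ follows from \eqref{EQ:D} together with continuity in $\theta$. This is exactly (iii).

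Step (c) is where I expect the real difficulty. The first attempt will be to compare the two local conditions at the nodes $(k,n)$ and $(k,n+1)$ as $n\to\infty$ with $k$ fixed, and separately as $n-k$ stays fixed. The aim is to show that any strict $x$-variation of $\partial_\theta H$ produces a violation at some node. The intuition is that $X_{k,n}$ is a state-independent mixture of its two children, which should make the left-panel inequalities tight everywhere. If that route fails, I will fall back on the mixture identity from the text: write $\Pr_\theta(X_{k,n}\leq x)$ as a convex combination of the children's distributions and use (i) to force equality in the resulting inequality chain.
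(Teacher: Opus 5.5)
\textbf{The direction (iii)$\Rightarrow$(ii) is sound.} For its maximal-selection half you take a genuinely different and rather nice route. Where the paper invokes the dispersive ordering of exponential order statistics from \citet{Khaledi-Kochar}, your R\'enyi computation $X_{k,n+1}\tdist X_{k,n}+E/(n+1)$ exhibits an explicit garbling, and so gives even Blackwell dominance. To pass from this to Lehmann accuracy, argue that $X_{k,n}$ inherits \eqref{EQ:MLR}, so threshold tests are Neyman--Pearson optimal for it. That is cleaner than appealing to an unspecified ``characterization.''

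\textbf{The genuine gap is (i)$\Rightarrow$(iii).} This direction carries all the content of the theorem, and steps (a)--(c) only outline it.
\begin{itemize}
\item Step (a) differentiates $F$ in $\theta$. That is not assumed, and it has no meaning for finite $\Theta$.
\item Step (b) gives at best log-submodularity of $R$ at the top, which is far weaker than (iii). At the bottom it promises a ``reversal'' of Theorem~\ref{THM:ACCORDION1}'s condition. But $X_{n,n}$ versus $X_{n,n+1}$ is a maximal-selection comparison, not the minimal-selection comparison that theorem characterizes.
\item Step (c), where everything would happen, is an aspiration with no argument.
\item The mixture fallback cannot supply one. Lehmann accuracy does not behave linearly under state-independent mixtures. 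The identity $F_{k,n}=\frac{k}{n+1}F_{k+1,n+1}+\frac{n+1-k}{n+1}F_{k,n+1}$ only prevents the parent from dominating both children; it forces no equalities.
\end{itemize}

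\textbf{The missing idea is to localize in the quantile variable rather than in the state, by letting $n\to\infty$ at a fixed pair $\theta<\theta'$.} The bottom family $X_{n,n}$ versus $X_{n,n+1}$ that you single out already suffices.
\begin{itemize}
\item Put $\psi(p):=F(F^{-1}(p|\theta)|\theta')$, which is convex by \eqref{EQ:MLR}, and $\varphi_n:=B_{n,n+1}^{-1}\circ B_{n,n}$. Your quantile-scale inequality at $k=n$ then reads $\varphi_n(\psi(p))\leq\psi(\varphi_n(p))$.
\item From $(1-\varphi_n(p))^n(1+n\varphi_n(p))=(1-p)^n$ one gets $n(\varphi_n(p)-p)/\log n\to 1-p$, so $\varphi_n(p)\to p$ from above.
\item Divide the inequality by $\varphi_n(p)-p$ and let $n\to\infty$. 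This gives $\psi'(p^+)\geq(1-\psi(p))/(1-p)$ wherever $\psi(p)>0$.
\item Convexity gives the opposite bound $\psi'(p^+)\leq(\psi(1^-)-\psi(p))/(1-p)$. Together these force $\psi(1^-)=1$, i.e.\ a common upper support bound, and equality throughout.
\item Solving the resulting equation gives $1-\psi(p)=a_{\theta\theta'}(1-p)$ on $\{\psi>0\}$ for some $a_{\theta\theta'}\geq 1$. Equivalently, $H(x|\theta)-H(x|\theta')=\log a_{\theta\theta'}$ is constant in $x$. This is the global, derivative-free version of your target in (c).
\item Since $a_{\theta\theta''}=a_{\theta\theta'}a_{\theta'\theta''}$, set $\lambda(\theta)=\log a_{\theta_0\theta}$ above a reference state $\theta_0$ and $\lambda(\theta)=-\log a_{\theta\theta_0}$ below it. Then $\xi(x)=\lambda(\theta)+H(x|\theta)$ does not depend on $\theta$, and this yields (iii).
\end{itemize}
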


The proof of Theorem~\ref{THM:EXP1} shows that, in fact, each of (i)--(iii) is equivalent to the following weakening of (i): for every $n$, the minimum of $n$ draws, $X_{n,n}$ is more accurate than the second-lowest of $n+1$ draws, $X_{n,n+1}$. Analogously, in the mirror statement below, which gives the right-panel of Figure~\ref{FIG:ACCORDION2}, one can weaken condition (i) to the following: for every $n$, the maximum of $n$ draws, $X_{1,n}$, is more accurate than the second-highest of $n+1$ draws, $X_{2,n+1}$.

\begin{theoremtwostar}
\label{THM:EXP2}
The following are equivalent:
\begin{enumerate}
\item[(i)] More minimal selection decreases accuracy.
\item[(ii)] More minimal selection decreases and more maximal selection increases accuracy.
\item[(iii)] There exist a strictly increasing $\theta\mapsto\lambda(\theta)$ and a strictly increasing $x\mapsto\xi(x)$ such that
\[
\xi(X_i) \tdist \lambda(\theta)-E_{i}
\]
where $E_{i}$ is an exponential random variable.
\end{enumerate}
\end{theoremtwostar}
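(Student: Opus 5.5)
Theorem~\ref{THM:EXP2} follows from Theorem~\ref{THM:EXP1} by the reflection used to derive Theorem~\ref{THM:ACCORDION2} from Theorem~\ref{THM:ACCORDION1}. Replace each $X_i$ by $\tilde X_i:=-X_i$ and each state $\theta$ by $\tilde\theta:=-\theta$. The reflected family satisfies \eqref{EQ:MLR} and the regularity assumptions, since both the likelihood-ratio monotonicity and the ordering of states are reversed. Moreover $-X_{k,n}$ is the $(n-k+1)$-th highest of the reflected sample. Hence the minimal-selection move $X_{k,n}\to X_{k+1,n+1}$ becomes $\tilde X_{n-k+1,n}\to\tilde X_{n-k+1,n+1}$, which is a maximal-selection move, and conversely. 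So the reflection exchanges the two families of arrows in the accordion.

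The first step is to check that Lehmann accuracy is preserved under this double reflection. Write $G$ and $G'$ for the cdfs of two statistics. The accuracy condition for states $\theta'>\theta$ concerns equal-probability cutoffs $\Pr_\theta(\cdot\le x)=\Pr_\theta(\cdot\le x')$. After reflection, $\tilde\theta=-\theta'<-\theta$, and the comparison is made through the complementary survival functions. Matching survival probabilities under the lower reflected state is equivalent to matching cdfs under $\theta$ at the reflected points, and the required inequality turns into the original one. This uses continuity of the distributions of order statistics, so that there are no atoms. The paper already asserts that the accuracy order is invariant to increasing transformations and implicitly to this reflection, so I will only verify this briefly.

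Once this is in place, each item of Theorem~\ref{THM:EXP2} maps to the corresponding item of Theorem~\ref{THM:EXP1}. Item (i), ``more minimal selection decreases accuracy'' for $X$, becomes ``more maximal selection decreases accuracy'' for $\tilde X$. Item (ii) likewise becomes item (ii) of Theorem~\ref{THM:EXP1}. For item (iii), suppose $\tilde\xi(\tilde X_i)\tdist\tilde\lambda(\tilde\theta)+E_i$ with $\tilde\xi,\tilde\lambda$ strictly increasing. Set $\xi(x):=-\tilde\xi(-x)$ and $\lambda(\theta):=-\tilde\lambda(-\theta)$, both strictly increasing. Then $\xi(X_i)=-\tilde\xi(\tilde X_i)\tdist\lambda(\theta)-E_i$. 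The converse is symmetric. Applying Theorem~\ref{THM:EXP1} to the reflected family therefore gives the equivalences in Theorem~\ref{THM:EXP2}.

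The main work lies in Theorem~\ref{THM:EXP1} itself, which is taken as given here. If it had to be proved, the hard step would be (i)$\Rightarrow$(iii). I would attack it through the weakened condition noted after the theorem: compare $X_{n,n}$ with $X_{n,n+1}$ for every $n$. I would write the cdfs as polynomials in $G=1-F$: $\Pr(X_{n,n}>x)=G^n$ and $\Pr(X_{n,n+1}>x)=G^{n+1}+(n+1)G^n(1-G)$. The next step is to express the accuracy comparison as an inequality on the map $G(\cdot|\theta)\mapsto G(\cdot|\theta')$ at matched quantiles. Letting $n\to\infty$, these inequalities hold for all $n$ only if $H(x|\theta')-H(x|\theta)$ is constant in $x$, which is precisely the exponential location form after transforming by $\xi=H(\cdot|\theta_0)$. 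For the present statement, however, the only potential obstacle is the bookkeeping of the reflection at the level of the accuracy definition. That is routine once continuity rules out ties.
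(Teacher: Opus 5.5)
Your proposal is correct and matches the paper's proof. You reflect $X_i\mapsto -X_i$ and $\theta\mapsto-\theta$, apply Theorem~\ref{THM:EXP1} to the reflected family (which swaps the minimal- and maximal-selection arrows), and convert (iii) via $\xi(x):=-\tilde\xi(-x)$, $\lambda(\theta):=-\tilde\lambda(-\theta)$, exactly as the paper does.

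There is one small slip in your accuracy check, a step the paper simply takes for granted. Matching under the lower reflected state $-\theta'$ corresponds to matching cdfs under the \emph{higher} original state $\theta'$, not $\theta$. So the reflected condition is not literally the original one, but it is equivalent to it, since both say that the quantile map $x\mapsto F_{A}^{-1}(F_{B}(x|\theta)|\theta)$ is increasing in $\theta$.
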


\subsection{Location Experiments}

In a location experiment, the random variables $X_i$ take the form $X_i=\theta+\ve_i$. The noise term $\ve_i$ is drawn from a distribution $F$ with density $f$, such that $F(x-\theta)=F(x|\theta)$. The conditions in Theorems~\ref{THM:ACCORDION1} and \ref{THM:ACCORDION2} reduce to shape restrictions on the two tails of $F$. Indeed, log-supermodularity of $H$ is equivalent to logconcavity of $-\log(1-F)$, while log-supermodularity of $R$ is equivalent to logconcavity of $-\log F$. Thus, both more minimal and more maximal selection increase accuracy e.g.~with normal or logistic noise.\footnote{Note that our comparisons do not have a stronger, Blackwell-dominance counterpart. \citet{Lehmann} himself remarks the shortcoming of Blackwell's order, that no non-normal location experiment can be better than a normal location experiment. This applies here when $X_{1,1}$ is normal, given that $X_{1,n}$ is not normal for $n\geq2$.} The Gumbel distribution $F(\ve)=\exp(-e^{-\ve})$ is a boundary case: the reverse hazard function $-\log F(\ve)=e^{-\ve}$ is loglinear. As a result, the maximum $X_{1,n}$ is equally accurate for all $n$, while the remaining same-rank comparisons are still accuracy-improving.

When specialized to location experiments, the proof of Theorem~\ref{THM:EXP1} forces the transformations $x\mapsto \xi(x)$ and $\theta\mapsto\lambda(\theta)$ in representation (iii) to be affine. Hence the noise itself must be a shifted and rescaled exponential: $X_i=\theta+b+E_i/\rho$ for some $b\in\R$ and $\rho>0$. This is exactly the case in which more maximal selection decreases accuracy while more minimal selection increases it. By reflection, Theorem~\ref{THM:EXP2} gives the opposite pattern: the only location experiments in which more minimal selection decreases and more maximal selection increases accuracy are those with shifted and rescaled reflected-exponential noise, $X_i=\theta-b-E_i/\rho$.

\section{Large Samples\label{SEC:EXTREME}}

In this section we characterize the information content of order statistics when the sample size $n$ grows large. Given a sequence $k_1,k_2,\ldots$ such that $k_n\leq n$ for every $n$, we give conditions on the basic distribution $F$ and on the sequence $k_n$ that are necessary and sufficient for asymptotic learning of the state, based on the observation of the statistic $X_{k_n,n}$ alone. In the complementary case where learning is less than full in the limit, we provide information comparisons among the nondegenerate limit experiments to which the order statistics converge.

\subsection{Asymptotic Full Accuracy}
Our formalization of asymptotic learning is standard. We say that a sequence of experiments $Y_n$ has \emph{asymptotic full accuracy} (\emph{AFA}) if the state can be consistently estimated from the experiment's observation, that is, if there exists a sequence of functions $\phi_n:\R\to\R$ such that, for every $\theta\in\Theta$,
\[
 \phi_n(Y_n) \, \tconv\, \theta.
\]

In our context, the sequence of experiments is $X_{k_n,n}$. Thus, AFA necessarily fails if the distribution of the variables $X_i$ at two different states coincide. In this section we therefore make the following minimal identification assumption: for every pair of states $\theta$ and $\theta'$,
\begin{equation}
\label{EQ:ID}\tag{ID}
\theta'\neq\theta
\qquad
\Longrightarrow
\qquad
F(\cdot|\theta')\neq F(\cdot|\theta).
\end{equation}
Observe that, pairing \eqref{EQ:MLR} with \eqref{EQ:ID}, the stochastic dominance \eqref{EQ:D} becomes strict:\footnote{To see this, suppose that \eqref{EQ:SD} does not hold. Then there exists $\underline{x}_\theta<x_0<\overline{x}_{\theta'}$ such that $F(x_0|\theta')=F(x_0|\theta)$. By \eqref{EQ:MLR2} the ratio $F(x|\theta')/F(x|\theta)$ is increasing, while by \eqref{EQ:D} it is at most one. Thus, $F(x|\theta')=F(x|\theta)$ for all $x\geq x_0$. Similarly, by \eqref{EQ:MLR2} the ratio $(1-F(x|\theta'))/(1-F(x|\theta))$ is increasing, while by \eqref{EQ:D} it is at least one. Thus, $1-F(x|\theta')=1-F(x|\theta)$ and hence $F(x|\theta')=F(x|\theta)$ for all $x\leq x_0$. This contradicts \eqref{EQ:ID}.} 
\begin{equation}
\label{EQ:SD}\tag{SD}
\theta'>\theta
\qquad
\Longrightarrow
\qquad
F(x|\theta') < F(x|\theta)
\quad\text{for all} \;\; \underline{x}_\theta<x<\overline{x}_{\theta'}.
\end{equation}

The following theorem shows that for middle order statistics, those such that both $k_n\to\infty$ and $n-k_n\to\infty$, no condition on $F$ beyond \eqref{EQ:MLR} and \eqref{EQ:ID} is needed for AFA. In particular, the median, and more generally all those such that $k_n/n$ converges to some number in (0,1), are fully informative in the limit. For bounded lower-rank or upper-rank statistics, instead, AFA depends on how the tails of $F$ at different states compare. For bounded lower-rank statistics, those such that the sequence $n-k_n$ is bounded, AFA obtains if and only if the following \emph{lower unbounded informativeness} condition holds: for every pair of states $\theta$ and $\theta'$,
\begin{align}
\label{EQ:LI}\tag{LI}
\theta'>\theta
\qquad
&\Longrightarrow
\qquad
\inf_ x\frac{f(x|\theta')}{f(x|\theta)}=0.
\end{align}
For bounded upper-rank statistics, those such that $k_n$ is bounded, AFA obtains if and only if the symmetric \emph{upper unbounded informativeness} condition holds:
\begin{align}
\label{EQ:UI}\tag{UI}
\theta'>\theta
\qquad
&\Longrightarrow
\qquad
\sup_ x\frac{f(x|\theta')}{f(x|\theta)}=\infty.
\end{align}

\begin{theoremAFA}
\label{THM:AFA} Assume \eqref{EQ:ID}.
\begin{enumerate}
\item[(i)] If both $k_n\to\infty$ and $n-k_n\to\infty$, then $X_{k_n,n}$ has AFA.
\item[(ii)] If $n-k_n$ is a bounded sequence, then $X_{k_n,n}$ has AFA if and only if \eqref{EQ:LI} holds.
\item[(iii)] If $k_n$ is a bounded sequence, then $X_{k_n,n}$ has AFA if and only if \eqref{EQ:UI} holds.
\end{enumerate}
\end{theoremAFA}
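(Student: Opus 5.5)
The plan is to work on the uniform scale. Write $U_i=F(X_i|\theta_0)$ for a reference state. Under state $\theta$, the statistic $X_{k_n,n}$ satisfies $F(X_{k_n,n}|\theta)=U_{k_n,n}$, where $U_{k_n,n}$ is the $k_n$-th highest of $n$ i.i.d.\ uniforms. Its distribution is
\[
\Pr\nolimits_\theta(X_{k,n}\leq x)=\Pr\big(\mathrm{Bin}(n,1-F(x|\theta))\leq k-1\big).
\]
All three parts reduce to the behaviour of this binomial tail, and part (iii) follows from (ii) by the reflection $X_i\mapsto -X_i$, $\theta\mapsto-\theta$. Under this reflection \eqref{EQ:LI} becomes \eqref{EQ:UI}, so only (i) and (ii) need proof.

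\textbf{Part (i).} Pass to a subsequence along which $k_n/n\to p\in[0,1]$. When $p\in(0,1)$, the binomial law of large numbers gives $F(X_{k_n,n}|\theta)\tconv 1-p$. Hence $X_{k_n,n}$ converges in probability to the $(1-p)$-quantile $q_\theta$ of $F(\cdot|\theta)$. By \eqref{EQ:SD}, $q_\theta$ is strictly increasing in $\theta$, so it is invertible, and $\phi_n=q^{-1}$ (suitably extended) is consistent. The cases $p=0$ and $p=1$ are the delicate ones, and since $p$ may depend on the subsequence, one cannot fix a single quantile. I would therefore build the estimator directly, as a minimum-distance estimator
\[
\phi_n(x)=\arg\min_\theta\big|F(x|\theta)-(1-k_n/n)\big|
\]
measured on a suitably normalized scale. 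The normalized spacing $(F(X_{k_n,n}|\theta)-(1-k_n/n))\sqrt{n}/\sqrt{k_n(n-k_n)/n}$ is tight. So it suffices that, for $\theta'\neq\theta$, the difference $F(x|\theta')-F(x|\theta)$ at the relevant quantile is eventually of larger order than $\sqrt{k_n(n-k_n)}/n$. For $p=0$ this needs $1-F(x|\theta')$ and $1-F(x|\theta)$ to differ by a non-vanishing ratio at the upper tail. \eqref{EQ:MLR2} gives monotone ratios, which tend to a limit $L\geq1$.

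\textbf{Main obstacle.} The hardest case is $L=1$, i.e.\ no upper-tail separation; then I need $k_n\to\infty$ to rescue identification. My first attempt would be to compare $\theta$ and $\theta'$ through the Poisson/normal approximation of the binomial count above a fixed cutoff. Suppose $L=1$ but $\theta'\neq\theta$. By monotonicity of the ratio, $(1-F(x|\theta'))/(1-F(x|\theta))$ is strictly greater than 1 for all $x$ below the top of the support; it merely tends to 1. Take $t_n$ with $n(1-F(t_n|\theta))=k_n$. Then the expected count under $\theta'$ is $k_n r(t_n)$ with $r(t_n)>1$. The question is whether $(r(t_n)-1)\sqrt{k_n}\to\infty$, and this need not hold in general. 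I therefore expect AFA in the regime $k_n/n\to0$ to be proved via a different device: a test-based construction that uses the pair $(\theta,\theta')$ at a \emph{fixed} interior cutoff $x$, replacing the quantile cutoff. Since $k_n\to\infty$, the event $\{X_{k_n,n}>x\}$ has probability tending to 1 under any state with $F(x|\theta)<1$. Combining such fixed cutoffs over a countable dense set of states, with \eqref{EQ:SD} and continuity in $\theta$, gives a consistent estimator by a standard diagonal argument. Making this rigorous is where I expect most of the work.

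\textbf{Part (ii).} Write $r=n-k_n+1$, which is bounded; the statistic is the $r$-th lowest draw. Then $\Pr_\theta(X\leq x)\approx\Pr(\mathrm{Poisson}(nF(x|\theta))\geq r)$. Take cutoffs $x_n$ with $nF(x_n|\theta)=c$. The limit of $nF(x_n|\theta')$ is $c\,\ell$, where
\[
\ell=\lim_{x\downarrow}\frac{F(x|\theta')}{F(x|\theta)}=\inf_x\frac{f(x|\theta')}{f(x|\theta)}
\]
by \eqref{EQ:MLR} and l'H\^opital. If \eqref{EQ:LI} holds, then $\ell=0$. Choosing $c$ large makes the type-I error small while the type-II error stays small, so pairwise tests are consistent. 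Aggregating them over a dense countable set of states, using monotonicity, gives $\phi_n$. Conversely, if $\ell>0$ for some pair, the limit laws of $nF(X|\theta)$ under $\theta$ and $\theta'$ (Gamma$(r)$ versus a rescaled Gamma) are mutually absolutely continuous. Hence no sequence of tests separates them, and AFA fails.
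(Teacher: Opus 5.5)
\textbf{Part (i): the main obstacle is misdiagnosed, and this leaves a genuine gap.} Your argument fails exactly in the regimes $k_n/n\to0$ and $k_n/n\to1$, because you read \eqref{EQ:MLR2} in the wrong direction. The case $L=1$ that you single out as the main obstacle cannot occur. The ratio $(1-F(x|\theta'))/(1-F(x|\theta))$ is \emph{increasing} in $x$. By \eqref{EQ:SD}, it takes some value $a_0>1$ at any fixed interior point $x_0$. So it is at least $a_0$ on the whole upper tail $x\geq x_0$, and its limit at the top of the support is $L\geq a_0>1$.

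In your own Poisson computation this gives $(r(t_n)-1)\sqrt{k_n}\geq(a_0-1)\sqrt{k_n}\to\infty$ once $t_n\geq x_0$. So the separation you say ``need not hold in general'' always holds. Symmetrically, in the lower-tail regime, $F(x|\theta')/F(x|\theta)\leq a_0<1$ for all $x$ below any fixed $x_0$.

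The missing idea is to combine this uniform ratio-scale separation with relative concentration. Chebyshev applied to the Beta law of $F(X_{k_n,n}|\theta)$ gives, with $p_n:=(n-k_n+1)/(n+1)$, that $F(X_{k_n,n}|\theta)/p_n\to1$ when $n-k_n\to\infty$, and $(1-F(X_{k_n,n}|\theta))/(1-p_n)\to1$ when $k_n\to\infty$. The paper packages this in the single minimum-distance criterion $D_n(x,\theta)=|\log(F(x|\theta)/p_n)|+|\log((1-F(x|\theta))/(1-p_n))|$, which handles every subsequential limit of $p_n$ at once.

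The fallback you propose cannot work. As you note yourself, for a fixed interior cutoff $x$ the event $\{X_{k_n,n}>x\}$ has probability tending to one under every state with $F(x|\cdot)<1$. Such events therefore do not discriminate between states, and no diagonal argument over them yields a consistent estimator.

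Your normalization is also off by a factor $\sqrt n$. The standard deviation of $F(X_{k_n,n}|\theta)$ is of order $\sqrt{k_n(n-k_n)/n}/n$. With the threshold $\sqrt{k_n(n-k_n)}/n$ that you state, even the interior case $k_n/n\to p\in(0,1)$ would fail your criterion.

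\textbf{Parts (ii) and (iii).} Your sufficiency sketch in part (ii) is a workable alternative to the paper's argmin-plus-Markov argument: pairwise tests with cutoffs $nF(x_n|\theta)=c_n\to\infty$, aggregated using that these cutoffs are monotone in the hypothesized state. Deriving (iii) from (ii) by reflection is exactly what the paper does.

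The necessity step in (ii), however, is a non sequitur. Suppose $nF(X_{k_n,n}|\theta)$ converges in distribution, under $\theta$ and under $\theta'$, to mutually absolutely continuous limits. That does not preclude consistent tests, because weak convergence gives no control of the total-variation distance at finite $n$; you would need contiguity or convergence of likelihood ratios.

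A bound valid for every $n$ is easy here, and it is the paper's argument. Let $c:=\inf_x f(x|\theta')/f(x|\theta)>0$ and let $J$ bound $n-k_n+1$. Then $F(x|\theta')\geq cF(x|\theta)$ and $1-F(x|\theta')\geq1-F(x|\theta)$. Hence the density of $X_{k_n,n}$ satisfies $f_{k_n,n}(x|\theta')/f_{k_n,n}(x|\theta)\geq c^{\,n-k_n+1}\geq c^J$. It follows that $\int\min\{f_{k_n,n}(x|\theta),f_{k_n,n}(x|\theta')\}\,dx\geq c^J$, which bounds the sum of the two error probabilities of any test away from zero.
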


\bigskip

The intuition behind the theorem is simple. The proof uses the same construction in all three cases: if the observed value of $X_{k_n,n}$ is $x$, the estimator chooses the state $\theta$ that makes both $F(x|\theta)$ close to the expected percentile rank $(n-k_n+1)/(n+1)$ and $1-F(x|\theta)$ close to the complementary rank $k_n/(n+1)$.\footnote{As shown in Appendix~\ref{APP:PROOFS}, in each state $\theta$ the random variable $F(X_{k_n,n}|\theta)$ has a Beta$(n-k_n+1,k_n)$ distribution. Thus the expected percentile of $X_{k_n,n}$ is the mean of that distribution, $(n-k_n+1)/(n+1)$.} If both $k_n$ and $n-k_n$ grow unboundedly large, then in each state $\theta$ the order statistic is asymptotically equal to the corresponding population quantile: observing $X_{k_n,n}$ asymptotically reveals the value of the quantile $F^{-1}((n-k_n+1)/(n+1)|\theta)\) because, with both sides of the order statistic containing many observations, sampling noise around this quantile vanishes. Under \eqref{EQ:ID}, this quantile information is sufficient to recover the state consistently.

Cases (ii) and (iii) are different, because the same estimator is driven by tail information. If $n-k_n$ is bounded, $X_{k_n,n}$ is one of the few lowest observations in the sample. Such a statistic can identify the state only if the lower tail is arbitrarily informative, in the sense of \eqref{EQ:LI}. If instead $k_n$ is bounded, $X_{k_n,n}$ is one of the few highest observations, and consistency requires the upper-tail condition \eqref{EQ:UI}. With a bounded likelihood ratio in the relevant tail, a bounded number of extreme observations carries only a bounded amount of information, even as the sample size grows.

\bigskip\noindent\textbf{Examples.} In a normal location experiment, both \eqref{EQ:LI} and \eqref{EQ:UI} hold: likelihood ratios vanish in the lower tail, and grow unbounded in the upper tail. Therefore \emph{any} sequence of order statistics has AFA. The same is true for location families where the noise has bounded support, such as uniform noise. By contrast, with logistic noise, likelihood ratios are bounded away from both zero and infinity in both tails. Hence bounded upper-rank and bounded lower-rank order statistics fail to have AFA, but every middle order statistic is still asymptotically fully informative. Finally, with exponential noise, the lower bound of the signal distribution moves with the state, but the upper tail has only bounded likelihood ratios. Thus, \eqref{EQ:LI} holds and \eqref{EQ:UI} fails: all sequences except those with bounded upper rank have AFA.

\subsection{Characterization of Limit Experiments}
In this section we characterize the limit residual uncertainty that remains in the bounded-rank cases where AFA fails. For fixed $r\geq1$, we identify the limit experiment to which the $r$-th lowest and $r$-th highest order statistics $X_{n-r+1,n}$ and $X_{r,n}$ converge as the sample size $n$ grows large.

The unbounded lower informativeness condition \eqref{EQ:LI} requires sufficiently low realizations to separate every pair of states, in the sense that the likelihood ratio in favor of the higher state becomes arbitrarily small along the lower tail. Now suppose that \eqref{EQ:LI} holds for some pairs of states and not for others. Then we should expect $X_{n-r+1,n}$ to allow asymptotic learning of a \emph{set} of states---those not separated in the sense of \eqref{EQ:LI} from the true state---and at the same time provide some information \emph{within} that set. Symmetrically, if condition \eqref{EQ:UI} holds for some pairs of states and not for others, we should expect the analogous conclusion to hold for $X_{r,n}$.

To formalize this intuition, call two states $\theta'>\theta$ \emph{lower-nonseparable} if
\begin{equation}
\label{EQ:inf}
\inf\left\{ \frac{F(x|\theta')}{F(x|\theta)} \; : \; \underline{x}_\theta < x < \overline{x}_\theta \right\}>0.
\end{equation}
A \emph{lower class} is a maximal subset of states $L\subseteq\Theta$ such that every pair $\theta'>\theta$ in $L$ is lower-nonseparable. Note that, by \eqref{EQ:D} and \eqref{EQ:inf}, for every such pair we must have $\underline{x}_{\theta}=\underline{x}_{\theta'}$. In what follows we write $\underline{x}_L$ to denote this common lower bound. Note also that the lower classes form a partition of $\Theta$, and every state between two states in the same lower class also belongs to the class---in other words, each class is convex, meaning an interval when $\Theta$ is an interval, or a consecutive block of states when $\Theta$ is finite.\footnote{Formally, define a relation $\sim$ on $\Theta$ by letting $\theta\sim\vartheta$ if either $\theta=\vartheta$ or the pair $\max\{\theta,\vartheta\}>\min\{\theta,\vartheta\}$ is lower-nonseparable in the sense of \eqref{EQ:inf}. This relation is clearly reflexive and symmetric. It is also transitive, because three states $\theta_1<\theta_2<\theta_3$ with $\theta_1\sim\theta_2$ and $\theta_2\sim\theta_3$ must have the same lower support bound, while $F(x|\theta_3)/F(x|\theta_1)=[F(x|\theta_3)/F(x|\theta_2)][F(x|\theta_2)/F(x|\theta_1)]$ is bounded away from zero near that bound, hence $\theta_1\sim\theta_3$. For convexity, observe that, if $\theta_1<\theta_2<\theta_3$ and $\theta_1\sim\theta_3$, then, by \eqref{EQ:D}, $F(x|\theta_3)\leq F(x|\theta_2)$. Therefore, $F(x|\theta_2)/F(x|\theta_1)\geq F(x|\theta_3)/F(x|\theta_1)$, and since the right-hand side is bounded away from zero near the common bound, $\theta_2\sim\theta_1$ follows.}

Now, for every lower class $L$, fix once and for all a reference state $\theta_{\!L}\in L$, let
\[
W^L_{r,n} := \log\big(nF(X_{n-r+1,n}|\theta_L)\big),
\]
and define a strictly increasing function $\lambda(\cdot|L):L\to\R$ as follows: for all $\theta\in L$,\footnote{\label{FN:lambda}The function $\lambda(\cdot|L)$ is indeed well-defined and strictly increasing. If $\theta>\theta_L$ then $F(x|\theta)/F(x|\theta_L)$ is increasing by \eqref{EQ:MLR2}, bounded above by one by \eqref{EQ:D}, and bounded away from zero by \eqref{EQ:inf}. Thus, its limit as $x\to\underline{x}_L$ exists and is positive. Moreover, the limit is strictly below one---if it were one, the increasing ratio, being bounded above by one, would be identically one on the interior, contradicting \eqref{EQ:SD}. Therefore, $\lambda(\theta|L)$ is finite and strictly positive. If $\theta<\theta_L$, applying the same argument to the pair $\theta_L>\theta$ shows that $\lambda(\theta|L)$ is finite and strictly negative. To see that $\lambda(\cdot|L)$ is strictly increasing, note that for all $\theta'>\theta$ in $L$ we have $\lambda(\theta'|L)-\lambda(\theta|L)=\log\lim_{x\to\underline{x}_L}F(x|\theta)/F(x|\theta')>0$, where the inequality follows from \eqref{EQ:SD} and \eqref{EQ:MLR2}.}
\[
\lambda(\theta|L)
:=
\lim_{x\to\underline{x}_L}\big[R(x|\theta)-R(x|\theta_{\!L})\big].
\]
The following theorem shows that the limit information of the sequence $X_{n-r+1,n}$ is the information of a class-stratified location experiment. First, observing $X_{n-r+1,n}$ ensures asymptotic learning of the lower class containing the true state. Second, the observationally equivalent  $W^L_{r,n}$ provides the limit within-class information,\footnote{Since the function $x\mapsto\log(nF(x|\theta_L))$ is strictly increasing only for $\underline{x}_L<x<\overline{x}_{\theta_L}$, strictly speaking the random variable $W^L_{r,n}$ is not globally observationally equivalent to $X_{n-r+1,n}$. However, for every $\theta\in L$ the event $\underline{x}_L<X_{n-r+1,n}<\overline{x}_{\theta_L}$ has $P_\theta$-probability converging to one. Hence this lack of global observational equivalence is asymptotically irrelevant. Indeed, we could alternatively define $W^L_{r,n}$ by using the function $x\mapsto\log(nF(x|\theta_L)+a_nC(x))$ for some sequence $a_n\to0$ and bounded strictly increasing function $C:\R\to(0,1)$. With this alternative definition, $W^L_{r,n}$ would be a strictly increasing transformation of $X_{n-r+1,n}$, and part (ii) of Theorem~\ref{THM:CLASS1} would still be true.} in the form of an experiment with reparametrized state $\lambda(\theta|L)$ and additive noise distributed as the logarithm of a sum of exponential random variables.

\begin{theoremfour}
\label{THM:CLASS1}
Assume \eqref{EQ:ID}. Fix $r\geq1$.
\begin{enumerate}
\item[(i)]
There exists a sequence of functions $\phi_n:\R\to\Theta\,$ such that, for every lower class $L$ and every state $\theta\in L$,
\begin{equation*}
\inf_{\vartheta\in L}\Big|\phi_n(X_{n-r+1,n})-\vartheta\Big|
\tconv0.
\end{equation*}
\item[(ii)]
Let $E_1,\ldots,E_r$ be independent exponentials and $Z_r:=\log(E_1+\cdots+E_r)$. For every lower class $L$ and every state $\theta\in L$,
\[
W^L_{r,n}
\tconvd
\lambda(\theta|L)+Z_r.
\]
\end{enumerate}
\end{theoremfour}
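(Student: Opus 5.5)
The plan is to use the probability integral transform. Under $P_\theta$, the variables $U_i:=F(X_i|\theta)$ are i.i.d.\ uniform, so $F(X_{n-r+1,n}|\theta)=U_{(r)}$, the $r$-th smallest of $n$ uniforms. The classical extreme-value fact gives the joint limit $nU_{(r)}\tconvd E_1+\cdots+E_r$. One way to see it is through the Rényi representation $U_{(r)}\tdist \Gamma_r/\Gamma_{n+1}$ with $\Gamma_j$ partial sums of exponentials, together with $\Gamma_{n+1}/n\to1$. Consequently
\[
\log\big(nF(X_{n-r+1,n}|\theta)\big)\tconvd Z_r ,
\]
and in particular $X_{n-r+1,n}\to\underline{x}_\theta$ in $P_\theta$-probability.

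For part (ii), fix $L$ and $\theta\in L$. Write
\[
W^L_{r,n}=\log\big(nF(X_{n-r+1,n}|\theta)\big)+\big[R(X_{n-r+1,n}|\theta)-R(X_{n-r+1,n}|\theta_L)\big].
\]
Since $X_{n-r+1,n}\to\underline{x}_L$ in probability, and by footnote~\ref{FN:lambda} the bracket converges to $\lambda(\theta|L)$ as $x\to\underline{x}_L$, the bracket converges in probability to $\lambda(\theta|L)$. Slutsky's lemma then yields (ii). The only care needed is the event $X_{n-r+1,n}\le\underline{x}_L$ or $X_{n-r+1,n}\ge\overline{x}_{\theta_L}$, which has vanishing probability, as noted in the footnote accompanying the definition of $W^L_{r,n}$.

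For part (i), I would construct $\phi_n$ in the spirit of Theorem~\ref{THM:AFA}(ii). Given an observation $x$, choose a state $\vartheta$ for which $nF(x|\vartheta)$ lies in a fixed compact window $[a_n^{-1},a_n]$ with $a_n\to\infty$ slowly. When several such states exist, pick one by a measurable selection rule, for example the infimum, or a near-infimum over a countable dense set of states. To show that $\phi_n(X_{n-r+1,n})$ approaches $L$ under $\theta\in L$, two facts are needed.

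First, a state in $L$ is eligible with probability tending to one. The state $\theta$ itself is eligible with probability tending to one, because $nF(X_{n-r+1,n}|\theta)$ is tight and bounded away from $0$.

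Second, no state at positive distance from $L$ stays eligible. The key dichotomy comes from lower-separability. For $\vartheta>\theta$ outside $L$, we have $\inf_x F(x|\vartheta)/F(x|\theta)=0$, which forces $F(X_{n-r+1,n}|\vartheta)/F(X_{n-r+1,n}|\theta)\to0$, since $X_{n-r+1,n}\to\underline{x}_\theta$ and the ratio is monotone by (MLR2). For $\vartheta<\theta$ outside $L$, either $\underline{x}_\vartheta<\underline{x}_\theta$, so that $nF(X_{n-r+1,n}|\vartheta)\to\infty$, or the supports start together and the ratio $F(x|\theta)/F(x|\vartheta)\to0$, giving the same conclusion. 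Hence $nF(X_{n-r+1,n}|\vartheta)$ exits the window.

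The main obstacle is uniformity over $\vartheta$ outside a neighborhood of $L$, needed because $\Theta$ may be a continuum. I would handle it by monotonicity. By (D), $nF(x|\vartheta)$ is decreasing in $\vartheta$, so it suffices to check the two boundary states just beyond a small neighborhood of $L$, namely $\sup L+\delta$ and $\inf L-\delta$ (or the adjacent states when $\Theta$ is finite). Every state further out is dominated, so the argument reduces to finitely many pointwise convergences. Choosing $a_n\to\infty$ slowly enough, by a diagonal argument over a countable family of $\delta$'s and classes, makes the window wide enough to contain $\theta$'s value while still excluding those boundary states.
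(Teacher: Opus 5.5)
Your part (ii) is the paper's argument: the probability transform, $nF(X_{n-r+1,n}|\theta)\tconvd E_1+\cdots+E_r$, the bracket $R(\cdot|\theta)-R(\cdot|\theta_L)\to\lambda(\theta|L)$ along $X_{n-r+1,n}\to\underline{x}_L$, and Slutsky. Your two facts for part (i) are also correct pointwise. The gap is in turning them into an estimator through a window.

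\textbf{The gap: uniformity over the true state.} Since $\phi_n$ cannot depend on the unknown state, you need one deterministic sequence $a_n\to\infty$ that works for every lower class $L$, every $\delta>0$ and every true $\theta\in L$. Specifically, $a_n\,nF(X_{n-r+1,n}|\sup L+\delta)\to0$ and $nF(X_{n-r+1,n}|\inf L-\delta)/a_n\to\infty$ must hold in $P_\theta$-probability for all of these. Your facts only give convergence to $0$ and $\infty$, at rates that depend on the pair of states and on the true state. Your diagonal argument runs over ``a countable family of $\delta$'s and classes''. But lower classes need not be countable: under global \eqref{EQ:LI} on an interval $\Theta$, every class is a singleton, so there is a continuum of them. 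In a general non-location family, nothing you say prevents the required rates from varying across that continuum. Even within one class, the convergence is under $P_\theta$ with $\theta$ ranging over $L$, and you do not explain why a rate found for one $\theta\in L$ serves the others. The monotonicity from (D) that you invoke gives uniformity over the alternatives $\vartheta$, not over the true state, and the latter is the binding issue.

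\textbf{How it could be repaired.} One route is to reduce to countably many pairs of states, using that $X_{n-r+1,n}$ is stochastically increasing in the state and that $F(x|\cdot)$ is decreasing, with separate care for the countably many endpoints of nondegenerate classes. That is a real additional argument, and it is missing from your proposal.

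\textbf{How the paper avoids rates.} The paper uses an adaptive estimator: $\phi_n(x)$ minimizes, over compact $\Theta_n\uparrow\Theta$, the discrepancy $D_n(x,\vartheta)=|\log(F(x|\vartheta)/p_n)|+|\log((1-F(x|\vartheta))/(1-p_n))|$ with $p_n=r/(n+1)$. For fixed $\theta$, $\ve$ and $\delta$, it then suffices to find some $M$, which may depend on $\theta$, such that two events each have probability at least $1-\delta$ for large $n$:
\begin{itemize}
\item $D_n(X_{n-r+1,n},\theta)\leq M$;
\item $\inf_{\vartheta\in\Theta^+_n}D_n(X_{n-r+1,n},\vartheta)>M$, and likewise over $\Theta^-_n$.
\end{itemize}
That is exactly what your two facts deliver, with (D) reducing the infimum to the boundary states $\theta_+$ and $\theta_-$, and no rate is needed. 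Replacing your window rule by this argmin closes the gap. The compact $\Theta_n$ also guarantees that the selected point lies in $\Theta$, which your infimum rule need not do when $\Theta$ is open.
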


The lower-class learning statement in the theorem, part (i), uses the same percentile-matching construction as in Theorem~\ref{THM:AFA}. For fixed $r$, the relevant percentile rank $r/(n+1)$ tends to zero. Lower-tail observations do not separate states within a lower class, so the estimator cannot be state-consistent, but it remains class-consistent: it identifies the lower class containing the true state. Part (ii) relies on a standard result in extreme value theory: under the true state $\theta$, the probability transform $F(X_{n-r+1,n}|\theta)$, scaled by $n$, converges in distribution to a sum of $r$ exponentials. Replacing $\theta$ by the reference state $\theta_L$ adds a factor $\exp(\lambda(\theta|L))$, and taking logs yields the location form in the statement.

\medskip

The arguments for the fixed upper-rank statistic $X_{r,n}$ mirror those for $X_{n-r+1,n}$. Call two states $\theta'>\theta$ \emph{upper-nonseparable} if
\begin{equation}
\label{EQ:sup}
\sup\left\{ \frac{1-F(x|\theta')}{1-F(x|\theta)} \; : \; \underline{x}_\theta < x < \overline{x}_\theta \right\}<\infty.
\end{equation}
An \emph{upper class} is a maximal subset of states $U\subseteq\Theta$ such that every pair $\theta'>\theta$ in $U$ is upper-nonseparable. By \eqref{EQ:D} and \eqref{EQ:sup}, for every such pair we must have $\overline{x}_{\theta}=\overline{x}_{\theta'}$, so we can write $\overline{x}_U$ to denote the common upper bound. Just like lower classes, upper classes form a partition of $\Theta$, and again each class is convex. For every upper class $U$, fixing a reference state $\theta_{U}\in U$ and defining
\[
Y^U_{r,n} := -\log\big(n\big[1-F(X_{r,n}|\theta_U)\big]\big),
\]
as well as the increasing function $\upsilon_U:U\to\R$ by\footnote{The argument for why $\upsilon(\cdot|U)$ is well defined and strictly increasing mirrors the one given in Footnote~\ref{FN:lambda} for $\lambda(\cdot|L)$.}
\[
\upsilon_U(\theta)
:=
\lim_{x\to\overline{x}_U}\big[H(x|\theta_U)-H(x|\theta)\big],
\]
we obtain the following.

\begin{theoremfourstar}
\label{THM:CLASS2}
Assume \eqref{EQ:ID}. Fix $r\geq1$.
\begin{enumerate}
\item[(i)]
There exists a sequence of functions $\phi_n:\R\to\Theta\,$ such that, for every upper class $U$ and every state $\theta\in U$,
\begin{equation*}
\inf_{\vartheta\in U}\Big|\phi_n(X_{r,n})-\vartheta\Big|
\tconv0.
\end{equation*}
\item[(ii)]
Let $E_1,\ldots,E_r$ be independent exponentials and $Z_r:=\log(E_1+\cdots+E_r)$. For every upper class $U$ and every state $\theta\in U$,
\[
Y^U_{r,n}
\tconvd
\upsilon(\theta|U)-Z_r.
\]
\end{enumerate}
\end{theoremfourstar}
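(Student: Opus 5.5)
The plan is to deduce Theorem~\ref{THM:CLASS2} from Theorem~\ref{THM:CLASS1} by the reflection argument already used to pass from Theorem~\ref{THM:ACCORDION1} to Theorem~\ref{THM:ACCORDION2}. Set $\tilde X_i:=-X_i$ and $\tilde\theta:=-\theta$. The reflected experiment has distribution $\tilde F(y|\tilde\theta)=1-F(-y|\theta)$. It satisfies \eqref{EQ:MLR} and \eqref{EQ:ID}, because the likelihood ratio is unchanged up to reversing both orders. The $r$-th highest original draw becomes the $r$-th lowest reflected draw: $X_{r,n}=-\tilde X_{n-r+1,n}$.

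First I check that the classes correspond. For $\theta'>\theta$ we have $-\theta>-\theta'$, and
\[
\frac{\tilde F(y|-\theta)}{\tilde F(y|-\theta')}=\frac{1-F(-y|\theta)}{1-F(-y|\theta')},
\]
so lower-nonseparability \eqref{EQ:inf} in the reflected model is exactly $\inf (1-F(x|\theta))/(1-F(x|\theta'))>0$. This is \eqref{EQ:sup}, up to the range of $x$. On the range, the reflected support bounds of $\tilde\theta=-\theta$ are $-\overline x_\theta$ and $-\underline x_\theta$, so the interval in \eqref{EQ:inf} maps to $(\underline x_\theta,\overline x_\theta)$, matching \eqref{EQ:sup}. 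Hence upper classes $U$ are the negatives of reflected lower classes $\tilde L=-U$, and $\underline{\tilde x}_{\tilde L}=-\overline x_U$. Part (i) then follows at once: apply part (i) of Theorem~\ref{THM:CLASS1} to $\tilde X_{n-r+1,n}$ to get $\tilde\phi_n$, and set $\phi_n(x):=-\tilde\phi_n(-x)$. Distances to the class are preserved under negation.

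For part (ii), take the reference state $\tilde\theta_{\tilde L}=-\theta_U$. Then
\[
\tilde W^{\tilde L}_{r,n}=\log\big(n\tilde F(\tilde X_{n-r+1,n}|-\theta_U)\big)=\log\big(n[1-F(X_{r,n}|\theta_U)]\big)=-Y^U_{r,n}.
\]
The reflected reverse cumulative hazard is $\tilde R(y|-\theta)=-\log(1-F(-y|\theta))=H(-y|\theta)$. Therefore
\[
\lambda(-\theta|\tilde L)=\lim_{y\to-\overline x_U}\big[H(-y|\theta)-H(-y|\theta_U)\big]=-\upsilon(\theta|U).
\]
Theorem~\ref{THM:CLASS1}(ii) gives $-Y^U_{r,n}\tconvd -\upsilon(\theta|U)+Z_r$. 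Negating, by the continuous mapping theorem, yields $Y^U_{r,n}\tconvd\upsilon(\theta|U)-Z_r$.

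The main obstacle is bookkeeping rather than substance. The sign conventions on $\upsilon$ must match: its definition uses $H(\cdot|\theta_U)-H(\cdot|\theta)$, and the computation above confirms the sign. The reflected model must also meet all standing assumptions, namely continuity in the state and positive density on the interior of the support. Both properties are preserved under $x\mapsto -x$, $\theta\mapsto-\theta$, so the reduction is complete.
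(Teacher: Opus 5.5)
Your proposal is correct and is essentially the paper's own proof. Like the paper, you reflect via $\tilde X_i=-X_i$, $\tilde\theta=-\theta$, identify upper classes $U$ with reflected lower classes $-U$ (with $\underline{\tilde x}_{\tilde L}=-\overline x_U$), set $\phi_n(x):=-\tilde\phi_n(-x)$ for part (i), and use $\tilde W^{\tilde L}_{r,n}=-Y^U_{r,n}$ and $\lambda(-\theta|\tilde L)=-\upsilon(\theta|U)$ for part (ii).

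There is one small indexing slip in your class check. In the reflected \eqref{EQ:inf} the lower state of the pair is $-\theta'$, so the range is $(\underline x_{\theta'},\overline x_{\theta'})$, not $(\underline x_\theta,\overline x_\theta)$. The conclusion is unaffected: $[1-F(x|\theta')]/[1-F(x|\theta)]$ is increasing by \eqref{EQ:MLR2}, so both conditions reduce to finiteness of its limit at the common right endpoint when $\overline x_\theta=\overline x_{\theta'}$, and both fail when $\overline x_\theta<\overline x_{\theta'}$.
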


\medskip\noindent\textbf{Maximum in Location Experiments.} Consider a location experiment $X_i=\theta+\ve_i$ and suppose that the noise hazard rate has a finite limit: for some $\alpha>0$, we have $f(\ve)/[1-F(\ve)]\to1/\alpha$ as $\ve\to\infty$. Then, for $\theta'>\theta$, we have $[1-F(x|\theta')]/[1-F(x|\theta)]\to\exp[(\theta'-\theta)/\alpha]$ as $x\to\infty$. Thus condition \eqref{EQ:UI} fails, there is only one upper class $U$, namely $\Theta$ itself, and $\upsilon(\theta|U)=(\theta-\theta_U)/\alpha$. For the maximum, $r=1$, Theorem~\ref{THM:CLASS2} therefore gives the limit experiment
\[
\upsilon(\theta|U)-Z_1=\frac{\theta-\theta_U}{\alpha}-\log E_1.
\]
Multiplying by $\alpha$ and adding $\theta_U$, this is the location experiment $\theta-\alpha\log E_1$, where $-\log E_1$ has the Gumbel distribution. This is exactly the non-full information limit in Theorem~3 of \citet{SSS}. If instead the hazard rate diverges, the upper-tail likelihood ratio becomes unbounded, upper classes are singletons, and the limit is fully informative, again matching the full-information case of that theorem.

\subsection{Comparison of Limit Experiments}

Theorems~\ref{THM:CLASS1} and \ref{THM:CLASS2} characterize the limit information of the statistics $X_{n-r+1,n}$ and $X_{r,n}$ for fixed $r$ as the sample size $n$ grows large. Once the lower class $L$ has been learned, the residual within-class information is described by $\lambda(\theta|L)+Z_r$. Taking $L$ as the state space, and viewing $\lambda(\theta|L)+Z_r$ as an experiment on this restricted state space, in this section we compare the informativeness of this experiment as $r$ varies. In other words, we ask: is the $r$-th lowest observation in a very large sample more or less accurate than the $(r+1)$-th lowest? Similarly, taking an upper class $U$ and viewing $\upsilon(\theta|U)-Z_r$ as an experiment on $U$, we ask whether the $r$-th highest is more or less accurate than the $(r+1)$-th highest.

The following theorem shows that accuracy increases with $r$, that is, more central extreme order statistics are more informative. In fact, increasing $r$ makes the limit experiment Blackwell-more informative. Moreover, connecting back to part (i) of Theorem~\ref{THM:AFA}, information becomes full as, after taking $n$ to infinity, we take $r$ to infinity.

\begin{theoremfive}
\label{THM:BLACKWELL1}
Fix a lower class $L$. For every $r\geq1$, the experiment $\{\lambda(\theta|L)+Z_r:\theta\in L\}$ becomes Blackwell-more informative as $r$ increases. Moreover, it is fully informative in the limit as $r\to\infty$. In particular, letting $\phi_r(x):=x-\log r$, for every $\theta\in L$ we have
\[
\phi_r(\lambda(\theta|L) + Z_r)
=
\lambda(\theta|L) + Z_r -\log r
\;\tconv\;
\lambda(\theta|L)
\qquad\text{as $r\to\infty$}.
\]
\end{theoremfive}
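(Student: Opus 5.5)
The plan is to work directly with the location experiment $\{\lambda+Z_r\}$, with $\lambda=\lambda(\theta|L)$ ranging over $\lambda(L)\subseteq\R$. Since $\lambda(\cdot|L)$ is strictly increasing, it suffices to prove the Blackwell comparison for the parameter $\lambda$. Write $S_r:=E_1+\cdots+E_r$, which has a Gamma$(r,1)$ law, so that $\lambda+Z_r=\log(e^{\lambda}S_r)$. Observing $\lambda+Z_r$ is therefore equivalent, via the bijection $y\mapsto e^y$, to observing $e^{\lambda}S_r$. This is a scale experiment with unknown scale $\beta:=e^{\lambda}$ and $r$ i.i.d.\ exponential draws $\beta E_1,\ldots,\beta E_r$ summarized by their sum.

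\textbf{Blackwell monotonicity via sufficiency.} Consider the experiment that reveals the full vector $(\beta E_1,\ldots,\beta E_{r+1})$. Since the exponential family with scale $\beta$ has the sum as a sufficient statistic, this full vector is Blackwell-equivalent to $\beta S_{r+1}$, that is, to $\lambda+Z_{r+1}$. On the other hand, $\beta S_r$ is a (deterministic) function of the full vector, so it is a garbling of it. Combining the two facts, $\lambda+Z_{r+1}$ is Blackwell more informative than $\lambda+Z_r$. An explicit garbling can also be given: if $B\sim$Beta$(r,1)$ is independent of $S_{r+1}$, then $B S_{r+1}\overset{d}{=}S_r$ by the beta--gamma algebra. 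Hence $\lambda+Z_r\overset{d}{=}(\lambda+Z_{r+1})+\log B$, which is the addition of state-independent noise. I would present this explicit form, since it avoids measure-theoretic sufficiency details, and I would cite the standard beta--gamma identity, namely that $S_r/S_{r+1}$ is independent of $S_{r+1}$ and Beta$(r,1)$ distributed.

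\textbf{Full informativeness.} By the law of large numbers, $S_r/r\to1$ almost surely. Hence $Z_r-\log r=\log(S_r/r)\to0$ in probability under every $\theta$, and therefore $\phi_r(\lambda+Z_r)\to\lambda(\theta|L)$. To estimate $\theta$ itself, compose with $\lambda^{-1}(\cdot|L)$. One caveat needs care: this inverse is continuous on the image because $\lambda$ is strictly increasing. The estimate may leave the image, so I would project it onto the closure of the image. Monotonicity then ensures consistency.

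\textbf{Main obstacle.} The only delicate point is making the Blackwell statement rigorous: one must verify that a state-independent garbling exists, rather than merely an accuracy comparison. The beta--gamma independence handles this, and it is the key lemma to invoke.
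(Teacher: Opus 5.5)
Your proposal is correct and follows the paper's proof. The paper likewise uses the beta--gamma identity $Z_r=Z_{r+1}+\log B_r$, with $B_r=S_r/S_{r+1}\sim\mathrm{Beta}(r,1)$ independent of $S_{r+1}$, so that $\lambda(\theta|L)+Z_r$ is $\lambda(\theta|L)+Z_{r+1}$ plus state-independent noise, and it obtains full informativeness from the law of large numbers applied to $S_r/r$. Your sufficiency framing is an equivalent view of the same garbling, since given $S_{r+1}=s$ the statistic $S_r$ is distributed as $sB_r$, and your extra step of mapping back to $\theta$ through $\lambda^{-1}(\cdot|L)$ goes slightly beyond what the paper states but is harmless.
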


\begin{theoremfivestar}
\label{THM:BLACKWELL2}
Fix an upper class $U$. For every $r\geq1$, the experiment $\{\upsilon(\theta|U)-Z_r:\theta\in U\}$ becomes Blackwell-more informative as $r$ increases. Moreover, it is fully informative in the limit as $r\to\infty$. In particular, letting $\phi_r(x):=x+\log r$, for every $\theta\in U$ we have
\[
\phi_r(\upsilon(\theta|U) - Z_r)
=
\upsilon(\theta|U) - Z_r +\log r
\;\tconv\;
\upsilon(\theta|U)
\qquad\text{as $r\to\infty$}.
\]
\end{theoremfivestar}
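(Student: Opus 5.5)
The plan is to deduce Theorem~\ref{THM:BLACKWELL2} from Theorem~\ref{THM:BLACKWELL1} by reflection, and to prove the latter directly from the structure of $Z_r$. Reflection goes as follows. Let $\tilde\theta:=-\upsilon(\theta|U)$ and observe $-(\upsilon(\theta|U)-Z_r)=\tilde\theta+Z_r$. This is a strictly decreasing transformation of both state and observation, so it preserves Blackwell comparisons and convergence in probability. Hence it suffices to treat the location experiment $\{\mu+Z_r:\mu\in M\}$, where $M:=\upsilon(U|U)\subseteq\R$.

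\textbf{Blackwell monotonicity.} Write $S_r:=E_1+\cdots+E_r$, so that $\upsilon(\theta|U)-Z_r=-\log(e^{-\upsilon(\theta|U)}S_r)$. Up to the monotone map $y\mapsto e^{-y}$, the observation is $\sigma S_r$ with scale $\sigma:=e^{-\upsilon(\theta|U)}$. I will exhibit a state-independent garbling from $(r+1)$ to $r$ using the Beta--Gamma algebra: if $B\sim\mathrm{Beta}(r,1)$ is independent of $S_{r+1}$, then $BS_{r+1}\overset{d}{=}S_r$. Equivalently, $S_r$ equals $S_{r+1}$ times the independent factor $U^{1/r}$, with $U$ uniform. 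Therefore
\[
\sigma S_r\overset{d}{=}\sigma S_{r+1}\cdot U^{1/r}
\]
for every $\sigma$. The Markov kernel $y\mapsto yU^{1/r}$ does not depend on the state, so the $(r+1)$ experiment is Blackwell more informative than the $r$ experiment. Taking logs, the kernel becomes additive noise $\tfrac1r\log U$ that is independent of the state.

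The main point to verify is this Gamma identity. I would check it via Mellin transforms: $E[S_{r+1}^s]=\Gamma(r+1+s)/\Gamma(r+1)$ and $E[B^s]=r/(r+s)$, and their product is $\Gamma(r+s)/\Gamma(r)=E[S_r^s]$. Alternatively, it follows from the standard fact that $S_r/S_{r+1}$ is independent of $S_{r+1}$ and has a $\mathrm{Beta}(r,1)$ distribution.

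\textbf{Full information as $r\to\infty$.} For the displayed consistency, note that $\upsilon-Z_r+\log r=\upsilon-\log(S_r/r)$. By the weak law of large numbers, $S_r/r\to1$ in probability. By continuity of $\log$ at $1$, $\log(S_r/r)\to0$ in probability under every state, which gives the displayed convergence. Since $\upsilon(\cdot|U)$ is strictly increasing, composing with its inverse on the image yields a consistent estimator of $\theta$ itself. This shows that the experiment is fully informative in the limit.
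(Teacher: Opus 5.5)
Your proposal is correct and is essentially the paper's own argument: the Beta--Gamma identity $S_r \overset{d}{=} B_r S_{r+1}$, with $B_r\sim\mathrm{Beta}(r,1)$ independent of $S_{r+1}$, shows that the rank-$r$ experiment is the rank-$(r+1)$ experiment plus the independent, state-independent noise $-\log B_r$ (your $-\tfrac1r\log U$), and the weak law of large numbers gives $Z_r-\log r\to0$. Your opening reflection step is never actually used, and as written $M$ should be $-\upsilon(U|U)$ rather than $\upsilon(U|U)$, but neither point affects the proof.
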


The comparison in the first part of the theorem may seem surprising at first. The limit noise for rank $r+1$ involves the sum of one more independent exponential random variable than the limit noise for rank $r$. This implies that $E_1+\cdots+E_{r}$ is less dispersed than $E_1+\cdots+E_{r+1}$. However, the relevant noise terms are not the sums themselves, but their logarithms $Z_r$ and $Z_{r+1}$, and taking logs reverses the dispersive-order comparison in this particular case.\footnote{The simple proof of the theorem, in Appendix~\ref{APP:PROOF-BLACKWELL}, exploits the beta-gamma identity to obtain $\log(E_1+\cdots+E_{r})=\log(E_1+\cdots+E_{r+1})+\log B_r$, where $B_r$ has a Beta distribution and is independent of $E_1,\ldots,E_{r+1}$, so the limit experiment for rank $r$ is obtained from that for rank $r+1$ by adding independent noise.} The reason is that the additional exponential term increases the sum, but it also dampens fluctuations. Thus, increasing the fixed lower rank makes the residual noise less dispersed and the limit experiment more accurate. Indeed, the theorem shows that it does more than that: it makes the experiment for rank $r+1$ Blackwell-more informative.

\section{Multidimensional Experiments\label{SEC:MULTI}}

The results obtained so far concern single order statistics. Many selected-data scenarios, however, naturally lead to multidimensional experiments. A researcher or platform may disclose a block of observations remaining after eliminating some extreme draws on either side. An auction may reveal a set of winning or losing bids rather than a single price. A committee may be selected from a larger pool, as under peremptory challenge, where one side removes members with high signals and the other removes members with low signals. The relevant experiment in such cases is not a single statistic $X_{k,n}$, but rather a vector of consecutive order statistics. Given a block size $m$, let
\[
\mathbf{X}^m_{k,n}:=
\big(X_{k+m-1,n},\ldots,X_{k,n}\big).
\]
denote the block of $m$ consecutive order statistics (with $k+m-1\leq n$). As in the unidimensional case, more maximal selection means increasing the sample size $n$ while keeping the upper rank $k$ fixed. More minimal selection means increasing both $n$ and $k$ by the same number. The block size $m$ is kept fixed in both comparisons.

\subsection{Minimal and Maximal Block-Selection}
To compare vectors of order statistics, we adopt \citeapos{SSS} multidimensional accuracy notion. We detail the notationally easier case of more maximal selection---the case of more minimal selection is symmetric. Fix an upper rank $k$ and a block size $m$. For every $n\geq k+m-1$ and every state $\theta$, define the mapping $\varphi(\theta,\cdot):\R^m\to\R^m$ as follows: $\varphi(\theta,x_1,\ldots,x_m)=(y_1,\ldots,y_m)$, where $y_1,\ldots,y_m$ are defined recursively by
\begin{equation}
\label{EQ:multi1}
\Pr\nolimits_\theta
\big(
X_{k,n+1}\leq y_1
\big)
=
\Pr\nolimits_\theta
\big(
X_{k,n}\leq x_1
\big)
\end{equation}
and, for $\ell=1,\ldots,m-1$,
\begin{equation}
\label{EQ:multi2}
\Pr\nolimits_\theta
\big(
X_{k+\ell,n+1}\leq y_{\ell+1}
\,\big\vert\,
X_{k+\ell-1,n+1}=y_{\ell}
\big)
=
\Pr\nolimits_\theta
\big(
X_{k+\ell,n}\leq x_{\ell+1}
\,\big\vert\,
X_{k+\ell-1,n}=x_{\ell}
\big).
\end{equation}
The more-maximally-selected experiment $\mathbf{X}^m_{k,n+1}$ is more accurate than $\mathbf{X}^m_{k,n}$ if $\varphi(\theta,x_1,\ldots,x_m)$ is increasing in $\theta$ for every $(x_1,\ldots,x_m)$. Intuitively, after matching the false-positives induced by each coordinate conditional on the previous ones, the more accurate block also gives lower false-negatives. The definition for more minimal selection increasing accuracy is analogous.\footnote{The definition of multidimensional accuracy is order-dependent because the mapping $\varphi$ is defined recursively. The order used above is the one that makes the conditional distributions of successive coordinates right-truncated. For more minimal selection, we use the reverse ordering of the same block, so that successive conditional distributions are left-truncated. Since different orderings of a block are mere relabelings of the same observed vector, this order dependence concerns the verification of accuracy, not the information contained in the block itself.}

The following result gives the block counterpart of Theorems~\ref{THM:ACCORDION1} and~\ref{THM:ACCORDION2}. The theorem presents sufficient conditions. For $m=1$, they imply the sharp conditions in Theorems~\ref{THM:ACCORDION1} and~\ref{THM:ACCORDION2}, but they are stronger: as we explain below, observing a block involves controlling the behavior of the distribution $F$ after truncations, not only at the bounds of its support.

\begin{theoremsix}
\label{THM:MULTI}
If the hazard rate $h(x|\theta):=f(x|\theta)/[1-F(x|\theta)]$ is log-supermodular, then more minimal selection increases accuracy: $\mathbf{X}^m_{k+1,n+1}$ is more accurate than $\mathbf{X}^m_{k,n}$ for all $k,n,m$. If the reverse hazard rate $r(x|\theta):=f(x|\theta)/F(x|\theta)$ is log-supermodular, then more maximal selection increases accuracy: $\mathbf{X}^m_{k,n+1}$ is more accurate than $\mathbf{X}^m_{k,n}$ for all $k,m,n$.
\end{theoremsix}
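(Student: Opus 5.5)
We treat the maximal-selection half; the minimal-selection half then follows from the reflection $X_i\mapsto -X_i$, $\theta\mapsto-\theta$ used to derive Theorem~\ref{THM:ACCORDION2} from Theorem~\ref{THM:ACCORDION1}. That reflection turns the reverse hazard rate $r$ into the hazard rate $h$ and reverses the ordering of the block, which matches the left-truncated ordering in the footnote to the definition. The plan is to verify, by induction on $\ell$, that each coordinate $y_\ell$ of $\varphi(\theta,x_1,\ldots,x_m)$ is increasing in $\theta$.

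\emph{First coordinate.} Here $y_1$ is defined by a scalar matching of $X_{k,n+1}$ against $X_{k,n}$. By Theorem~\ref{THM:ACCORDION2}, $y_1$ is increasing in $\theta$ as soon as $R$ is log-supermodular. This follows from log-supermodularity of $r$. Indeed, $R(x|\theta)=\int_x^{\infty} r(t|\theta)\,dt$, which we write as $\int r(t|\theta)\mathbf 1\{t\ge x\}\,dt$. The indicator $\mathbf 1\{t\ge x\}$ is log-supermodular in $(t,x)$, since $t\ge x$ and $t'\ge x'$ imply $t\vee t'\ge x\vee x'$ and $t\wedge t'\ge x\wedge x'$. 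The integrand is therefore log-supermodular in $(t,x,\theta)$, and integrating out $t$ preserves this property by the Karlin--Rinott theorem. (For $k=1$ one could also argue directly.)

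\emph{Recursive coordinates.} Conditional on $X_{k+\ell-1,n}=x_\ell$, the statistic $X_{k+\ell,n}$ is the maximum of $N:=n-k-\ell+1$ i.i.d.\ draws from $F(\cdot|\theta)$ right-truncated at $x_\ell$. Its conditional CDF is therefore $(F(y|\theta)/F(x_\ell|\theta))^{N}$. In the $(n+1)$-sample the analogous count is $N+1$. Taking logs, \eqref{EQ:multi2} becomes
\[
(N+1)\,G(y_{\ell+1},y_\ell|\theta)=N\,G(x_{\ell+1},x_\ell|\theta),
\qquad
G(a,b|\theta):=R(a|\theta)-R(b|\theta)=\int_a^b r(t|\theta)\,dt .
\]
Assume inductively that $y_\ell=y_\ell(\theta)$ is increasing, and split the dependence of $y_{\ell+1}$ on $\theta$ into two effects.

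The \emph{indirect effect}, through $y_\ell$, has the right sign. Since $G(a,b|\theta)$ increases in $b$ and decreases in $a$, raising $y_\ell$ raises the left-hand side. To keep the equality, $y_{\ell+1}$ must then rise.

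For the \emph{direct effect}, fix $y_\ell$ and ask whether the solution $y_{\ell+1}$ rises with $\theta$. Equivalently, for $\theta'>\theta$ and the pair $(y,b)$ matched at $\theta$, we need
\[
\frac{G(y,b|\theta')}{G(x,a|\theta')}\le\frac{G(y,b|\theta)}{G(x,a|\theta)}.
\]
Once this holds, increasing $y$ restores the equality at $\theta'$. To obtain it, write $G(a,b|\theta)=\int r(t|\theta)\mathbf 1\{a\le t\le b\}\,dt$. The indicator is log-supermodular in $(t,(-a,b))$ for $a\le b$ under the coordinatewise lattice order, so Karlin--Rinott makes $G$ log-supermodular in the pair (interval, $\theta$). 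Hence the ratio of $G$ over two intervals behaves monotonically in $\theta$ whenever one interval is suitably ordered relative to the other.

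\emph{Main obstacle.} The required interval comparison must be established at the matched point. The first coordinate and the preceding step presumably give $y_\ell\le x_\ell$, and the matching should force $y_{\ell+1}\le x_{\ell+1}$. However, the intervals $[y_{\ell+1},y_\ell]$ and $[x_{\ell+1},x_\ell]$ need not be nested, and the ratio comparison above involves both endpoints at once. I would first try to show that the matched intervals are ordered in the strong set order, meaning the lower endpoint and the upper endpoint are both lower for the $y$-interval. Log-supermodularity of $G$ in $((-a,b),\theta)$ then yields the inequality directly. If that ordering fails in some configuration, the fallback is to split $G(y,b)$ at the point $a$ or $b$ into sub-intervals that are nested. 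One then applies the log-supermodularity of $r$ on each piece, using the indirect effect to absorb any mismatched piece.
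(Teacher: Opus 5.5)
Your decomposition matches the paper's up to the direct effect. The log form $(N+1)G(y_{\ell+1},y_\ell|\theta)=N\,G(x_{\ell+1},x_\ell|\theta)$ and the observation that raising $y_\ell$ only helps are exactly the passage from \eqref{EQ:sss1bis} to \eqref{EQ:sss3}. However, the step that carries all the content is left open, and each of the three ingredients you line up for it has the wrong sign.

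(1) What is needed is
\[
\frac{G(y_{\ell+1},y_\ell|\theta')}{G(x_{\ell+1},x_\ell|\theta')}\;\geq\;\frac{G(y_{\ell+1},y_\ell|\theta)}{G(x_{\ell+1},x_\ell|\theta)}=\frac{N}{N+1},
\]
which is \eqref{EQ:sss6}. With your $\leq$, the left side of the $\theta'$-equation is too small at $y_{\ell+1}$. Since $G$ decreases in its lower endpoint, restoring equality would require \emph{lowering} $y_{\ell+1}$.

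(2) $G$ is log-supermodular in $((a,b),\theta)$, not in $((-a,b),\theta)$. The ratio $G(a,b|\theta')/G(a,b|\theta)$ is the average of the increasing function $r(\cdot|\theta')/r(\cdot|\theta)$ over $[a,b]$, with weights proportional to $r(\cdot|\theta)$. So it increases in \emph{both} endpoints. This is consistent with your own first-coordinate step, where $\mathbf{1}\{t\geq x\}$ enters with $+x$.

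(3) Your guess $y_\ell\leq x_\ell$ is backwards. Since $X_{k,n+1}$ first-order stochastically dominates $X_{k,n}$, we have $y_1\geq x_1$. If $y_\ell\geq x_\ell$, write $q:=F(x_{\ell+1}|\theta)/F(x_\ell|\theta)\in[0,1]$. The right side of \eqref{EQ:sss1} is $q^N\geq q^{N+1}$, so $F(y_{\ell+1}|\theta)/F(y_\ell|\theta)\geq q$. Then $F(y_\ell|\theta)\geq F(x_\ell|\theta)$ gives $y_{\ell+1}\geq x_{\ell+1}$.

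If the intervals were ordered as you plan to show, log-supermodularity would deliver exactly the inequality you wrote, i.e.\ a \emph{decreasing} direct effect. So the plan cannot be completed as stated, and the fallback of splitting into nested pieces does not address the problem.

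Once the signs are fixed, your route closes in one line and differs from the paper's only in the final step. The interval $[y_{\ell+1},y_\ell]$ lies above $[x_{\ell+1},x_\ell]$ endpointwise. Log-supermodularity of $G$ in $(a,b,\theta)$, via Karlin--Rinott applied to $r(t|\theta)\mathbf{1}\{a\leq t\leq b\}$, then gives the displayed inequality directly; no nesting is needed. The paper instead proves \eqref{EQ:sss6} by hand. It defines $v(x)$ through the matching \eqref{EQ:sss4}, uses the same dominance fact to get $v(x)\geq x$, changes variables $y=v(x)$, and applies pointwise log-supermodularity of $r$. Your corrected version trades that computation for the multivariate preservation theorem. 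Both arguments hinge on the fact you were missing: the matched points lie \emph{above} the original ones.
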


To get an intuition, consider more maximal selection. Conditional on $X_{k,n}=x_1$, the next coordinate $X_{k+1,n}$ is the maximum of $n-k$ draws from the truncated distribution $F(\cdot|\theta)/F(x_1|\theta)$. Similarly, conditional on $X_{k+1,n}=x_2$, the next coordinate $X_{k+2,n}$ is the maximum of $n-k-1$ draws from $F(\cdot|\theta)/F(x_2|\theta)$, and so on. Thus, the relevant objects for more maximal selection are cumulative reverse hazard \emph{differences}, that is, objects of the form
\[
-\log\left[\frac{F(x_{\ell+1}|\theta)}{F(x_{\ell}|\theta)}\right]^{n-k-\ell+1}
\;\propto\;\;
R(x_{\ell+1}|\theta)-R(x_{\ell}|\theta)
\;=\;
\int_{x_{\ell+1}}^{x_{\ell}} r(t|\theta)\,dt.
\]
Log-supermodularity of the reverse hazard rate $r$ is the appropriate sufficient condition for controlling these differences---guaranteeing that the state comparisons dictated by log-supermodularity hold not only for the tail, but also for every truncated interval. The more-minimal-selection comparison is symmetric. More minimal selection involves truncation from below, and the relevant objects are the cumulative hazard differences, controlled by the hazard rate $h$.\footnote{The one-dimensional results in Theorems~\ref{THM:ACCORDION1} and~\ref{THM:ACCORDION2} can be viewed as boundary cases of the same argument. With a single order statistic, there is no previously observed coordinate, so there is ``truncation'' at the lower or upper bound of the support. Indeed, $H(x|\theta)=H(x|\theta)-H(\underline x_\theta|\theta)$ and $R(x|\theta)=R(x|\theta)-R(\overline x_\theta|\theta)$. For blocks of observations, the bounds are replaced by endogenous truncation points, and this is why the stronger rate conditions are needed.}

\subsection{Block Comparisons in Large Samples}

The large-sample results have immediate multidimensional counterparts. Here, as before, in large samples it is natural to allow the top rank $k$, and this time also the block size $m$, to depend on sample size. Thus, let $k_n$ and $m_n$ be sequences such that $k_n+m_n-1\leq n$, and consider the corresponding sequence of blocks $\mathbf{X}^{m_n}_{k_n,n}$. If the sequence contains a middle order statistic, that is, if for some sequence $j_n$ with $k_n\leq j_n\leq k_n+m_n-1$ we have both $j_n\to\infty$ and $n-j_n\to\infty$, then under \eqref{EQ:ID} the block asymptotically identifies the state. Indeed, the sequence $X_{j_n,n}$ already does so alone, and observing the whole block can only add information. If instead the block remains either entirely in the lower tail ($n-k_n$ is bounded) or entirely in the upper tail ($k_n+m_n$ is bounded), then the same tail conditions as in parts (ii) and (iii) of Theorem~\ref{THM:AFA} are necessary and sufficient for full learning.

When these conditions fail, the residual limit experiment is multidimensional. For example, fix $m\geq1$ and consider the block of the $m$ lower order statistics, $(X_{n,n},X_{n-1,n},\ldots,X_{n-m+1,n})$. By Theorem~\ref{THM:CLASS1}, given any lower class $L$ and state $\theta\in L$ we have
\[
\left(W^L_{1,n},\ldots,W^L_{m,n}\right)
\tconvd
\big(
\lambda(\theta|L)+Z_1,
\ldots,
\lambda(\theta|L)+Z_m
\big).
\]
Similarly, by Theorem~\ref{THM:CLASS2}, for every upper class $U$ and state $\theta\in U$,
\[
\left(Y^U_{1,n},\ldots,Y^U_{m,n}\right)
\tconvd
\big(\upsilon(\theta|U)-Z_1,\ldots,\upsilon(\theta|U)-Z_m\big).
\]
Thus, blocks refine the residual information characterized in Theorems~\ref{THM:CLASS1} and \ref{THM:CLASS2}, but they do not change the underlying class structure. The class is learned and, within the class, the limit is a multivariate location experiment with noise distributed as a dependent vector of logarithms of sums of exponentials.

\section{Information Aggregation in Auctions\label{SEC:AUCTIONS}}

Our theorems directly contribute to the literature on information aggregation in auctions pioneered by \citet{Wilson77} and \citet{Milgrom79}. Consider a symmetric auction with $n$ unit-demand bidders and $k<n$ identical objects. The auction is either discriminatory---each of the $k$ highest bidders receives an object at a price equal to the submitted bid---or uniform-price---each of the $k$ highest bidders receives an object for a price equal to the highest rejected bid.

Bidder $i$ observes a private signal $X_i$, and conditional on a parameter $\theta$ the signals are i.i.d.~with distribution $F(\cdot|\theta)$. Values are symmetric and affiliated, as in \citet{Milgrom-Weber,Milgrom-Weber2}, so in the standard symmetric equilibria bids are strictly increasing functions of signals. The order of bids then coincides with the order of signals. In particular, the lowest winning bid is informationally equivalent to the $k$-th highest signal $X_{k,n}$. Similarly, the highest rejected bid is informationally equivalent to $X_{k+1,n}$. Thus, information comparisons based on the price---intended as either the lowest winning bid or the highest losing bid---are a direct application of our order statistics comparisons. Asking whether more competition improves information aggregation is the same as asking how the accuracy of these order statistics changes when one more bidder is added.

If the number of objects is held fixed, we compare $X_{k,n}$ with $X_{k,n+1}$, a more-maximal-selection comparison. If instead there is also an additional object, we compare $X_{k,n}$ with $X_{k+1,n+1}$, a more-minimal-selection comparison. Thus, Theorems~\ref{THM:ACCORDION1} and \ref{THM:ACCORDION2} translate immediately into the comparative statics stated in Proposition~\ref{PRP:AUCTIONS1} below. This is the finite-auction counterpart of the information-aggregation results of \citet{Wilson77} and \citet{Milgrom79}, applying away from the perfectly competitive limit, and covering both discriminatory and uniform-price formats.

\begin{proposition}
\label{PRP:AUCTIONS1}
The signal distribution has log-supermodular cumulative reverse hazard $R(\cdot|\theta)=-\log F(\cdot|\theta)$ if and only if, given any number of bidders and objects, holding fixed the number of objects, the informativeness of the equilibrium price increases in the number of bidders. The cumulative hazard $H(\cdot|\theta)=-\log(1-F(\cdot|\theta))$ is log-supermodular if and only if, given any number of bidders and objects, the informativeness of the equilibrium price increases as the number of bidders and the number of objects are both increased by the same number.
\end{proposition}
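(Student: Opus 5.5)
The plan is to reduce Proposition~\ref{PRP:AUCTIONS1} to Theorems~\ref{THM:ACCORDION1} and \ref{THM:ACCORDION2} through the informational equivalence between the equilibrium price and an order statistic.

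\emph{Step 1: the price is an order statistic.} In the standard symmetric equilibrium with $n$ bidders and $k$ objects, bids are $b_{k,n}(X_i)$ with $b_{k,n}$ strictly increasing. The bid order therefore coincides with the signal order. The lowest winning bid is $b_{k,n}(X_{k,n})$, and the highest rejected bid is $b_{k,n}(X_{k+1,n})$.

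\emph{Step 2: invariance of the accuracy order.} Lehmann's order is invariant to strictly increasing transformations of the observation, and each game $(k,n)$ may have its own strictly increasing bid function. Hence comparing prices across auctions is the same as comparing the underlying order statistics. One caveat: the transformation must be strictly increasing on the relevant support. Any flat parts or atoms of the bid function outside the support are irrelevant, because they occur with zero probability.

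\emph{Step 3: identify the two comparisons.}
\begin{itemize}
\item Lowest winning bid, one bidder added with the number of objects fixed: this compares $X_{k,n}$ with $X_{k,n+1}$, which is more maximal selection.
\item Lowest winning bid, one bidder and one object added: this compares $X_{k,n}$ with $X_{k+1,n+1}$, which is more minimal selection.
\item Highest rejected bid: the same pairs arise with $k$ replaced by $k+1$, namely $X_{k+1,n}$ versus $X_{k+1,n+1}$, and $X_{k+1,n}$ versus $X_{k+2,n+1}$.
\end{itemize}
Increasing both numbers by the same amount $j$ is handled by chaining $j$ adjacent comparisons. This requires transitivity of Lehmann accuracy, which follows directly from its definition as a matching of quantile maps.

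\emph{Step 4: apply the theorems.} Sufficiency follows immediately from Theorems~\ref{THM:ACCORDION1} and \ref{THM:ACCORDION2}.

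For necessity, suppose that for all $(k,n)$ the price becomes more informative. Then the relevant adjacent order-statistic comparisons hold at every node that corresponds to some auction. The ``only if'' parts of the theorems then deliver log-supermodularity of $R$ or $H$.

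\emph{Main obstacle: the range of nodes.} Auctions require $k<n$, while the theorems are stated for all nodes of the accordion. I would handle this in one of two ways:
\begin{itemize}
\item check that the necessity proofs of Theorems~\ref{THM:ACCORDION1} and \ref{THM:ACCORDION2} use only nodes reachable with $k<n$ (for instance, the interior nodes with large $n$ alone force the tail condition); or
\item note that the uniform-price highest-rejected-bid version covers $X_{k+1,n}$ with $k+1\le n$, which includes the boundary node $X_{n,n}$.
\end{itemize}
I would try the second route first, since it covers every node, including the boundary node the auction range would otherwise miss.
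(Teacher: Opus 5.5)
Your Steps 1--4 are the paper's own argument. The paper simply observes that the price is a strictly increasing transform of $X_{k,n}$ (or $X_{k+1,n}$) and says Theorems~\ref{THM:ACCORDION1} and~\ref{THM:ACCORDION2} ``translate immediately.'' You correctly notice what this glosses over: the ``only if'' halves of those theorems are proved only from boundary nodes. For $H$ the proof compares $X_{m,m}$ with $X_{n,n}$; for $R$, by reflection, it compares $X_{1,m}$ with $X_{1,n}$.

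So your first route fails, because the existing necessity proofs do not use only auction-feasible nodes. Your preferred second route fails too. The highest rejected bid $X_{k+1,n}$ with $k\ge1$ never equals the maximum $X_{1,n}$, so it does not ``cover every node.'' It only trades the missing bottom node for a missing top node. With the lowest winning bid as the price you obtain the $R$ half (take $k=1$) but not the $H$ half. With the highest rejected bid you obtain the $H$ half (take $k=n-1$) but not the $R$ half. Using a different price notion for each half is legitimate only if the hypothesis is imposed on both prices simultaneously. That is stronger than the statement for a single format, such as a uniform-price auction.

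To get the ``only if'' parts for a single price notion, you need a new argument showing that interior nodes alone force log-supermodularity on the whole support. A tail condition, as your parenthetical suggests, would not be enough. A large-$n$ argument at \emph{fixed} $x$, in the spirit of Lemma~\ref{LEM:ROCEXP}, does the job.

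For the $R$ half, fix a rank $j\ge2$ (for the highest rejected bid, $j=k+1$) and states $\theta'>\theta$. Set $y=R(x|\theta)$ and $T(y)=R(R^{-1}(y|\theta)|\theta')$. Let $\eta_n$ be the quantile map from the $j$-th lowest of $n$ standard exponentials to the $j$-th lowest of $n+1$. Then $X_{j,n+1}$ being more accurate than $X_{j,n}$ is equivalent to $\eta_n(T(y))\le T(\eta_n(y))$ for all $y$. A binomial-tail expansion at fixed $y$ gives $n(y-\eta_n(y))\to y$. Divide $T(y)-T(\eta_n(y))\le T(y)-\eta_n(T(y))$ by $y-\eta_n(y)$ and let $n\to\infty$. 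This shows that the lower left Dini derivative of $T$ at $y$ is at most $T(y)/y$. Since $T$ is continuous, $T(y)/y$ is then nonincreasing, which is exactly log-supermodularity of $R$.

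The $H$ half for the lowest winning bid follows by reflection. Hold the number of losers $n-k\ge1$ fixed while objects and bidders grow together.
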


Thus, under the two log-supermodularity assumptions, soliciting an additional bidder can only improve the information contained in the price, provided the number of objects either stays fixed or increases by one. Holding the number of objects fixed pushes the relevant price statistic further into the upper tail: the price remains the same-rank order statistic but is selected from a larger pool of bidders. This comparison is governed by the cumulative reverse hazard $R(x|\theta)=-\log F(x|\theta)$. Increasing both bidders and objects by one instead keeps the number of rejected bidders fixed, and pushes the relevant statistic toward the lower tail; this comparison is governed by the cumulative hazard $H(x|\theta)=-\log(1-F(x|\theta))$. Thus the two branches of the accordion correspond exactly to the two natural ways one can think about increasing market size.

Note that the proposition above compares informativeness of equilibrium prices, not their levels. Competition may raise or lower expected prices, depending on the auction format and the exact assumed environment. What matters here is that, because equilibrium bids are strictly increasing in signals, the price is a monotone transformation of an order statistic. Therefore any accuracy comparison among order statistics carries over to the price experiment. In particular, with signal distributions such as normal or logistic location, for which both $H$ and $R$ are log-supermodular, price information improves along both possible paths.

The next proposition, a direct implication of Theorem~\ref{THM:AFA}, gives the perfectly competitive limit. If both the number of objects and the number of losers grow without bound, the price is an intermediate order statistic. Full information aggregation follows from \eqref{EQ:MLR} and the identification condition \eqref{EQ:ID}. If the number of objects remains bounded, the winning price, or the highest rejected bid in the uniform-price auction, is one of the top order statistics, and full information aggregation requires \eqref{EQ:UI}. Symmetrically, if the number of losers $n-k_n$ remains bounded, full information aggregation requires \eqref{EQ:LI}. 

\begin{proposition}
\label{PRP:AUCTIONS2}
A sequence of auctions satisfies full information aggregation if and only if both \eqref{EQ:ID} and one of the following conditions hold:
\begin{enumerate}
\item[(i)] Both the number of objects $k_n$ and the number of losers $n-k_n$ grow unbounded.
\item[(ii)] The number of objects $k_n$ is a bounded sequence and the signals satisfy \eqref{EQ:UI}.
\item[(iii)] The number of losers $n-k_n$ is a bounded sequence and the signals satisfy \eqref{EQ:LI}.
\end{enumerate}
\end{proposition}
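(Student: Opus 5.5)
The plan is to reduce Proposition~\ref{PRP:AUCTIONS2} to Theorem~\ref{THM:AFA}, using the fact that in the symmetric equilibrium the price is a strictly increasing transformation of an order statistic. We read ``full information aggregation'' as asymptotic full accuracy of the price sequence: there exist $\phi_n$ with $\phi_n(P_n)\tconv\theta$ for all $\theta$. In the $n$-bidder, $k_n$-object auction, bids are strictly increasing functions $\beta_n$ of signals. Hence the lowest winning bid is $\beta_n(X_{k_n,n})$ and the highest rejected bid is $\beta_n(X_{k_n+1,n})$. Both are observationally equivalent to $X_{k_n,n}$ and $X_{k_n+1,n}$ respectively, because $\beta_n^{-1}$ is a measurable map on the range of bids. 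Consequently, AFA of the price sequence holds if and only if AFA of the corresponding order statistic sequence holds: compose estimators with $\beta_n$ or with $\beta_n^{-1}$.

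Next, I will treat necessity of (ID). If $F(\cdot|\theta)=F(\cdot|\theta')$ for some $\theta\neq\theta'$, the price has the same law under both states. Then $\phi_n(P_n)$ cannot converge in probability to two different limits, so AFA fails.

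Assuming (ID), I will split into cases on the sequence $(k_n)$.
\begin{enumerate}
\item[(a)] If $k_n\to\infty$ and $n-k_n\to\infty$, then Theorem~\ref{THM:AFA}(i) gives AFA. For the highest rejected bid, the index $k_n+1$ satisfies the same divergence conditions.
\item[(b)] If $k_n$ is bounded, then $k_n+1$ is bounded too, and Theorem~\ref{THM:AFA}(iii) gives AFA if and only if (UI) holds.
\item[(c)] If $n-k_n$ is bounded, then $n-(k_n+1)$ is bounded too. Here we must keep $k_n+1\leq n$, i.e.\ at least one loser, which holds since $k<n$. Theorem~\ref{THM:AFA}(ii) then gives AFA if and only if (LI) holds.
\end{enumerate}

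The main obstacle is that an arbitrary sequence $k_n$ need not fall into any one of these three regimes. For example, $k_n$ could oscillate between bounded and divergent values. I would handle this by stating the proposition for sequences along which one regime holds, and by arguing via subsequences. AFA of the whole sequence implies AFA along every subsequence, and every subsequence has a further subsequence in one of the regimes (a)--(c). So if AFA holds, the conditions of Theorem~\ref{THM:AFA} apply along each such sub-subsequence. Conversely, if every such further subsequence has AFA, then the full sequence has AFA, because convergence in probability is characterized by sub-subsequences, once the estimators are glued appropriately. Since the estimator in Theorem~\ref{THM:AFA} is the common percentile-matching construction, the estimators are compatible across regimes. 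I expect to spend the care here, ensuring the ``if and only if'' matches the proposition's trichotomy, with the equivalence between the price and the order statistic being routine.
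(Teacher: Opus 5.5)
Your reduction is the paper's own: the proposition is presented as a direct consequence of Theorem~\ref{THM:AFA}, because equilibrium bids are strictly increasing in signals, so the lowest winning bid and the highest rejected bid carry the information of $X_{k_n,n}$ and $X_{k_n+1,n}$. Your treatment of the necessity of \eqref{EQ:ID}, of observational equivalence, and of the index shift for the rejected bid (using $k_n<n$) is correct.

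The one thing to correct is the stated aim of your last paragraph. No sub-subsequence argument can make the ``if and only if'' match the trichotomy, because for oscillating rank sequences the literal statement is false. Take normal location signals, so that \eqref{EQ:ID}, \eqref{EQ:LI} and \eqref{EQ:UI} all hold, and set $k_n=1$ for odd $n$ and $k_n=n-1$ for even $n$. The price aggregates information along odd $n$ and along even $n$, hence along all $n$, yet none of (i)--(iii) holds. If ``grow unbounded'' is read as ``is unbounded'' rather than ``$\to\infty$'', the \emph{if} direction fails instead, for example with logistic signals and $k_n$ alternating between $1$ and $\lfloor n/2\rfloor$.

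So restricting to rank sequences that lie in a single regime is necessary, not a convenience. This is how Theorem~\ref{THM:AFA} is set up, with each part conditioned on its regime, and it is implicit in the paper here. What your subsequence argument actually proves is the correct general statement: under \eqref{EQ:ID}, the price aggregates information if and only if \eqref{EQ:UI} holds whenever $\liminf_n k_n<\infty$ and \eqref{EQ:LI} holds whenever $\liminf_n (n-k_n)<\infty$.

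Two details make this go through.
\begin{enumerate}
\item[(a)] Both directions of Theorem~\ref{THM:AFA} hold along any subsequence of sample sizes. You must read this off the proof, since the statement alone does not give failure of AFA along a subsequence. The Chebyshev and Markov bounds and the overlap bound \eqref{EQ:min} all hold for each $n$ separately.
\item[(b)] The gluing is legitimate because one estimator serves every regime: the same $D_n$-minimizer with $p_n=(n-k_n+1)/(n+1)$ is used throughout. The reflection in part (iii) sends $p_n$ to $1-p_n$ and leaves $D_n$ unchanged.
\end{enumerate}
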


Part (ii) of the proposition recovers the logic of \citeapos{Milgrom79} unbounded-informativeness condition: with only finitely many winning signals, the price fully reveals the state only if sufficiently high realizations make the likelihood ratio arbitrarily large. More generally, the proposition unifies the standard full-information aggregation theorem with its lower-tail counterpart and with the intermediate case, where full aggregation is automatic under \eqref{EQ:ID}.

Proposition~\ref{PRP:AUCTIONS2} is also closely related to \citet{Pesendorfer-Swinkels-1997}. In their pure common-value model, the equilibrium price in a $k$-unit uniform-price auction aggregates information if and only if the sequence of auctions is ``doubly large'': both the number of objects $k_n$ and the number of losing bidders $n-k_n$ diverge. Conditional on pivotality of a bid, the winner's curse and the loser's curse convey information about the common value.\footnote{\citet{Pesendorfer-Swinkels-2000} consider an environment with both a common quality component and idiosyncratic tastes, where information aggregation must be reconciled with allocative efficiency.} As Proposition~\ref{PRP:AUCTIONS2} shows, \citeapos{Pesendorfer-Swinkels-1997} result critically hinges on failure of both \eqref{EQ:LI} and \eqref{EQ:UI}. When both $k_n$ and $n-k_n$ diverge, the price is a middle order statistic: full information aggregation obtains by \eqref{EQ:MLR} and \eqref{EQ:ID} alone, matching the double-largeness intuition. However, when one side of the market remains bounded, the price is an extreme order statistic, and full aggregation requires unbounded informativeness: \eqref{EQ:UI} if the number of objects is bounded, \eqref{EQ:LI} if the number of losers is bounded. Thus our result recovers the double-largeness condition as the interior case, while also characterizing precisely what additional tail conditions are needed at the two boundaries.

\section{Information Aggregation in Voting\label{SEC:VOTING}}

We now turn to voting, an application developed in substantially more detail in the companion paper \citet{DOS-voting}. Consider a committee or jury of size $n$ facing a binary decision, say acquittal versus conviction. Juror $i$ observes a private signal $X_i$, and conditional on the state $\theta$ the signals are i.i.d.~with distribution $F(\cdot|\theta)$. We consider the following \emph{rule $k$} for voting: the jury convicts if and only if at least $k$ jurors vote for conviction. This is the canonical environment studied in the strategic-voting literature on juries, including \citet{Feddersen-Pesendorfer} in a binary-signal model and \citet{Duggan-Martinelli} in a continuous-signal model.

The key insight in \citet{DOS-voting} is that strategic voting turns the jury problem into an order-statistic experiment. If all jurors use cutoff $x$, then the jury convicts if and only if at least $k$ signals exceed $x$, equivalently if and only if $X_{k,n}\geq x$. Thus, the decisive signal is $X_{k,n}$, the $k$-th highest signal. Under strategic voting the cutoff is the signal at which an individual juror, conditioning on being pivotal, is indifferent between the two alternatives. Thus, the cutoff is such that a juror with signal $x$, conditional on exactly $k-1$ other jurors voting for conviction and $n-k$ voting for acquittal, is indifferent. Therefore, the jury decision coincides with the decision that the juror with signal $X_{k,n}$ would take if she \emph{knew} that her signal is the $k$-th highest.

\begin{proposition}
\label{PRP:VOTING1}
The symmetric equilibrium in a jury of size $n+1$ with rule $k+1$ (resp.~rule $k$) achieves a more accurate decision than the symmetric equilibrium in the jury of size $n$ with rule $k$, for every $n$ and $k$, if and only if the signals' cumulative hazard (resp.~cumulative reverse hazard) is log-supermodular.\end{proposition}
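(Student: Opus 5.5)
The plan is to reduce Proposition~\ref{PRP:VOTING1} to Theorems~\ref{THM:ACCORDION1} and~\ref{THM:ACCORDION2}. The first step is the pivotal-voter representation recalled above. In a symmetric equilibrium of the jury of size $n$ under rule $k$, every juror uses a cutoff $x^*_{k,n}$ and the jury convicts if and only if $X_{k,n}\geq x^*_{k,n}$. The equilibrium cutoff is pinned down by indifference conditional on pivotality. This is exactly the condition that the posterior given $X_{k,n}=x^*_{k,n}$ makes the decision maker indifferent: the event ``exactly $k-1$ of the other $n-1$ signals exceed $x$ and own signal equals $x$'' has the same likelihood ratio across states as the density of $X_{k,n}$ at $x$, up to a state-independent combinatorial factor. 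Under \eqref{EQ:MLR}, the family $X_{k,n}$ inherits MLR, since its density is $f\,F^{n-k}(1-F)^{k-1}$ times a constant. Hence the equilibrium decision is the optimal monotone (cutoff) decision based on the single observation $X_{k,n}$ in the binary-action problem with the jurors' common payoff.

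\textbf{Sufficiency.} Suppose $H$ is log-supermodular. By Theorem~\ref{THM:ACCORDION1}, $X_{k+1,n+1}$ is more accurate than $X_{k,n}$ in Lehmann's sense. Since Lehmann accuracy under MLR yields higher expected payoff in every monotone decision problem (in particular in the two-action, two-or-more-state jury problem, which lies in the IDO class), the optimal cutoff rule on $X_{k+1,n+1}$ does at least as well as the optimal cutoff rule on $X_{k,n}$. By the first step these optimal rules are precisely the symmetric-equilibrium jury decisions. This gives ``more accurate decision'' for the rule-$(k+1)$, size-$(n+1)$ jury. The same argument applies with $R$, Theorem~\ref{THM:ACCORDION2} and rule $k$ in size $n+1$.

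\textbf{Necessity.} This is the main obstacle. Equilibrium comparisons only use the particular cutoffs selected by the jurors' payoffs, so we must vary the payoff parameter (the conviction threshold, i.e.\ relative costs of wrongful conviction and acquittal) and the prior. Here ``for every $n$ and $k$'' is read as holding for all such payoff specifications. The plan is to let these range so that the equilibrium cutoff $x$ sweeps the support and the pair of states tested is arbitrary. The step then uses the standard fact that if optimal cutoff decisions based on $Y'$ dominate those based on $Y$ for all two-state, two-action problems at all priors and losses, then the type-I/type-II frontiers are ordered. This is precisely Lehmann's condition $\Pr_\theta(Y'\le x')=\Pr_\theta(Y\le x)\Rightarrow \Pr_{\theta'}(Y'\le x')\le \Pr_{\theta'}(Y\le x)$ at every point, since the frontier of a monotone-likelihood experiment is convex and every point is exposed by some loss ratio. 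Hence $X_{k+1,n+1}$ (resp.\ $X_{k,n+1}$) is more accurate than $X_{k,n}$ for all nodes, and the ``only if'' halves of Theorems~\ref{THM:ACCORDION1} and~\ref{THM:ACCORDION2} deliver log-supermodularity of $H$ (resp.\ $R$).

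The delicate points to check are two: that each point of the error frontier is attained as an equilibrium cutoff for some admissible payoff, which uses continuity and strict MLR so the likelihood ratio of $X_{k,n}$ at $x$ is continuous and monotone; and that with a continuum of states the two-state reduction is legitimate, which I would handle by placing priors concentrated on $\{\theta,\theta'\}$.
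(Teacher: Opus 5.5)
Your proposal is correct and follows the paper's route. The pivotal-voter identity makes the symmetric-equilibrium verdict the optimal cutoff decision based on $X_{k,n}$, so the two jury expansions are exactly the more-minimal and more-maximal selection comparisons, and Theorems~\ref{THM:ACCORDION1} and~\ref{THM:ACCORDION2} give both directions. The paper simply reads ``more accurate decision'' as Lehmann accuracy of the pivotal statistic, so your step recovering Lehmann's condition from payoff dominance over all two-state priors and loss ratios (via Neyman--Pearson and convexity of the error frontier) is a legitimate elaboration that the paper leaves implicit.
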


The proposition compares two natural ways of expanding a jury. Starting from size $n$ with rule $k$, adding one juror while raising the quota to $k+1$ compares $X_{k,n}$ with $X_{k+1,n+1}$. This is the more-minimal-selection comparison, governed by the cumulative hazard $H$. Adding one juror while keeping the quota fixed at $k$ compares $X_{k,n}$ with $X_{k,n+1}$. This is the more-maximal-selection comparison, governed by the cumulative reverse hazard $R$. Thus the two branches of the accordion correspond to the two basic ways of expanding a voting rule: keeping fixed the number of votes needed for conviction, or keeping fixed the number of votes needed for acquittal.

Considering a sequence of voting rules $k_n$ and letting the number of voters become large, our Theorem~\ref{THM:AFA} delivers the Condorcet-type implication stated in Proposition~\ref{PRP:VOTING2} below. A proportional-quota rule is one for which both the number of votes required for conviction ($k_n$) and the number of votes required for acquittal ($n-k_n$) grow without bound. Equivalently, the pivotal signal $X_{k,n}$ is an intermediate order statistic. The pivotal voter then conditions on many signals above and many signals below her own, sampling noise around the corresponding quantile vanishes, and under \eqref{EQ:ID} the state is asymptotically learned---the probability of a correct decision converges to one.\footnote{The classic Condorcet result with sincere voting is different. In a two-state model with innocence $\theta_0$ and guilt $\theta_1>\theta_0$, suppose juror $i$ votes to convict if and only if $X_i\geq c$. Then a conviction vote has probability $1-F(c|\theta)$ in state $\theta$. If $k_n/n\to q\in(0,1)$ and $1-F(c|\theta_0)<q<1-F(c|\theta_1)$, the law of large numbers (applied to the fraction of conviction votes) implies that the probability of conviction converges to zero in state $\theta_0$ and to one in state $\theta_1$. Strategic voting changes the cutoff, because voters condition on pivotality.} By contrast, with a finite quota for acquittal, e.g.~with unanimity for conviction, a bounded number of low signals is enough to acquit, and the pivotal signal is a lower-order statistic. Asymptotic correctness then requires condition \eqref{EQ:LI}. Symmetrically, if the rule has a finite quota for conviction, a bounded number of high signals is enough to convict, and asymptotic correctness requires \eqref{EQ:UI}.

\begin{proposition}
\label{PRP:VOTING2}
Assume signals satisfy \eqref{EQ:ID}. The probability of a correct decision converges to one if and only if (i) the rule is a proportional-quota rule, or (ii) the rule is a finite-quota rule for acquittal (resp.~conviction), and signals satisfy \eqref{EQ:LI} (resp.~\eqref{EQ:UI}). 
\end{proposition}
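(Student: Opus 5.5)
The plan is to reduce Proposition~\ref{PRP:VOTING2} to Theorem~\ref{THM:AFA}. The key identity, recalled above from \citet{DOS-voting}, is that in the symmetric equilibrium with rule $k_n$ the jury convicts if and only if $X_{k_n,n}\geq x_n^*$. Here $x_n^*$ is the cutoff at which a juror is indifferent conditional on being pivotal. Hence the decision is a monotone cutoff test based on the single statistic $X_{k_n,n}$. The three cases of the proposition then match cases (i)--(iii) of Theorem~\ref{THM:AFA}: the proportional-quota case is (i), a finite quota for acquittal ($n-k_n$ bounded) is (ii), and a finite quota for conviction ($k_n$ bounded) is (iii). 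A sequence that falls in none of these cases can be split into subsequences, each of which lies in one of them, so it suffices to treat the three cases separately.

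\textbf{Sufficiency.} If $X_{k_n,n}$ has AFA, then its posterior concentrates on the true state. Since the equilibrium decision is the optimal cutoff decision given $X_{k_n,n}$, by the pivotal-voter equivalence, its expected payoff converges to the full-information payoff. I would check this by comparing the equilibrium cutoff test with a feasible test: convict if $\phi_n(X_{k_n,n})$ exceeds the threshold state, with $\phi_n$ the consistent estimator from Theorem~\ref{THM:AFA}. The equilibrium test does at least as well as this feasible test, whose error probabilities vanish. Bounded loss functions therefore give a probability of a correct decision tending to one. Two points need care here. First, states exactly at the indifference threshold must be handled, for instance by assuming the threshold state is not an atom, or that in the binary-state setting the two states lie on different sides. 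Second, one needs that optimality of the pivotal-voter decision, in the sense of the ex ante expected loss, implies vanishing error probabilities. This holds as long as the prior puts positive weight on each relevant state.

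\textbf{Necessity.} Suppose the probability of a correct decision converges to one. Then the tests $1\{X_{k_n,n}\geq x_n^*\}$ separate the states below and above the decision threshold. Within case (ii), I would show that failure of \eqref{EQ:LI} prevents this, using Theorem~\ref{THM:CLASS1}. If \eqref{EQ:LI} fails for some pair $\theta'>\theta$, those states share a lower class $L$. The experiment $W^L_{r,n}$ then converges to the nondegenerate location experiment $\lambda(\cdot|L)+Z_r$, so the two states cannot be asymptotically perfectly discriminated by any test. The error probabilities stay bounded away from zero, because $Z_r$ has full support. Case (iii) is symmetric via Theorem~\ref{THM:CLASS2}. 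In case (ii) the bounded $n-k_n$ must be handled by passing to subsequences along which $n-k_n+1=r$ is constant.

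The main obstacle is making ``correct decision'' match ``learning the state'' in the necessity direction. Perfect discrimination is only needed between states on opposite sides of the decision threshold. So I would need the failing pair in \eqref{EQ:LI} to straddle the threshold, or else argue that the statement is meant for every decision problem, equivalently every threshold. I would adopt the latter reading, quantifying over all juror preference thresholds. Under that reading the necessity step follows from the limit-experiment characterization above.
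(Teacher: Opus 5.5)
Your proposal is correct and uses the paper's reduction. By the pivotal-voter identity, the equilibrium verdict is the Bayes-optimal cutoff decision based on $X_{k_n,n}$, so the result is read off Theorem~\ref{THM:AFA}.

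For sufficiency, you make explicit a step the paper leaves implicit. AFA only supplies \emph{some} consistent $\phi_n$. It is Bayes-optimality of the equilibrium cutoff test, compared with the feasible test $1\{\phi_n(X_{k_n,n})>\theta^*\}$, that transfers consistency to the actual verdict. Your caveats (no atom at the threshold, prior support) are the right ones.

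The genuine difference is necessity. The paper's argument, in the proof of Theorem~\ref{THM:AFA}(ii), is non-asymptotic. If $\inf_x f(x|\theta')/f(x|\theta)=c>0$, then $f_{k_n,n}(\cdot|\theta')\geq c^{J}f_{k_n,n}(\cdot|\theta)$ with $J\geq n-k_n+1$. Hence $\int\min\{f_{k_n,n}(x|\theta),f_{k_n,n}(x|\theta')\}\,dx\geq c^{J}$ for every $n$. So the two error probabilities of \emph{any} decision rule sum to at least $c^J$, with no constant-$r$ subsequences and no extreme-value limit.

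Your route through Theorem~\ref{THM:CLASS1}(ii) also works, but only because the verdict is a cutoff test. Convergence in distribution of $W^L_{r,n}$ under each state separately does not control total variation, so it does not show the pair cannot be separated ``by any test.'' What it does give is the following. Let $w_n:=\log(nF(x_n^*|\theta_L))$. Then $\{W^L_{r,n}>w_n\}\subseteq\{X_{k_n,n}\geq x_n^*\}$ and $\{W^L_{r,n}<w_n\}\subseteq\{X_{k_n,n}<x_n^*\}$. P\'olya's theorem (uniform convergence of distribution functions to a continuous limit) then bounds the error sum below, asymptotically, by $\inf_w[\Pr(\lambda(\theta|L)+Z_r>w)+\Pr(\lambda(\theta'|L)+Z_r<w)]$. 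This is positive because $Z_r$ has full support. In exchange for this extra work, your route identifies the limiting error trade-off, not just its positivity.

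Your remark that necessity needs the non-separated pair to straddle the decision threshold is correct, and the paper glosses over it. By Theorem~\ref{THM:CLASS1}(i) the lower class is learned. So if the threshold separates lower classes, the verdict is asymptotically correct even though \eqref{EQ:LI} fails globally. The proposition is therefore exact for two states, as in the paper's footnote, or under your reading quantified over all thresholds.

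Your claim that a sequence outside the three regimes ``can be split into subsequences'' is imprecise. A countable split does not suffice; the right device is that every subsequence has a further subsequence in one regime. Moreover, the literal ``only if'' fails for such sequences when both \eqref{EQ:LI} and \eqref{EQ:UI} hold. So the proposition, like Theorem~\ref{THM:AFA}, should be read as restricted to the three regimes.
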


Our proposition recasts the lessons from the existing literature in order-statistic terms: whether large juries aggregate information is governed by whether the relevant pivotal order statistic asymptotically separates the states. \citeapos{Feddersen-Pesendorfer} insight is that strategic voting changes the meaning of a unanimous conviction: a juror who is pivotal under unanimity conditions on all other jurors voting to convict, and this event may overwhelm her private signal. This explains why unanimity can perform poorly under strategic voting. But this is not a general problem with unanimity. \citet{Duggan-Martinelli} show that, in a continuous-signal model, unanimity is asymptotically correct precisely when arbitrarily strong signals of innocence are available. In our terminology, unanimity is a finite-quota rule for acquittal, so the pivotal statistic is a lower extreme, and asymptotic correctness requires \eqref{EQ:LI}.\footnote{The companion paper \citet{DOS-voting} further shows that \citeapos{Feddersen-Pesendorfer} bad unanimity outcome relies on unstable equilibria.} In other words, the general lesson is not that unanimity is an inferior voting rule, but rather that unanimity puts all informational weight on the lower tail of the signals' distribution.

\medskip
\noindent\textbf{Fixed-Jury-Size Comparisons of Voting Rules.}
The companion paper \citet{DOS-voting} develops the voting application further in two directions that we leave outside the present paper. First, a complete theory of strategic voting must address equilibrium selection and stability. The order-statistic observation identifies the statistic on which a symmetric cutoff equilibrium is based, but it does not by itself say whether the relevant equilibrium is stable or whether other equilibria are more compelling. This issue is especially important for interpreting the unanimity result in \citet{Feddersen-Pesendorfer}. In this regard, the companion paper shows that the bad-unanimity outcome hinges on unstable equilibria.

Second, a full comparison of voting rules for a fixed jury size $n$ requires \emph{horizontal comparisons} in the accordion. Holding $n$ fixed and changing the rule from $k$ to $k+1$ compares $X_{k,n}$ with $X_{k+1,n}$. Such adjacent order statistics are generally not comparable in the global Lehmann accuracy order. Nevertheless, for voting the relevant question is often local: given the equilibrium cutoff, does one rule generate a better error trade-off than another in a neighborhood of the decision actually implemented by the equilibrium? Answering this question requires a \emph{local} version of accuracy, tailored to binary decisions. Developing that notion, and using it to compare different rules for a fixed jury size, is a relevant but separate exercise. For this reason, here we focus on the vertical accordion comparisons that follow directly from Theorems~\ref{THM:ACCORDION1} and~\ref{THM:ACCORDION2}, and on the large-jury Condorcet-type implication of Theorem~\ref{THM:AFA}. The companion paper provides the fuller local and equilibrium analysis.

\appendix

\section{Proofs\label{APP:PROOFS}}

For all $k$ and $n$, write $F_{k,n}(x|\theta)$ to denote the cumulative distribution function of $X_{k,n}$. Thus,
\[
F_{k,n}(x|\theta)
:=
\Pr\nolimits_\theta(X_{k,n}\leq x)
=
\sum^{n}_{i=n-k+1}\binom{n}{i}F^{i}(x|\theta)[1-F(x|\theta)]^{n-i}.
\]

\subsection{Proof of Theorem~\ref{THM:ACCORDION1}}
\label{APP:PROOF-THM1}

\noindent\emph{Necessity}. Assume that more minimal selection increases accuracy. In particular, for every pair $m>n$, the minimum of $m$ draws is more accurate than the minimum of $n$ draws, that is, $X_{m,m}$ is more accurate than $X_{n,n}$. The distribution function of $X_{n,n}$ is
\[
F_{n,n}(x|\theta)=1-\big[1-F(x|\theta)\big]^n=1-\exp\big[-nH(x|\theta)\big]
\]
Therefore, the accuracy comparison between $X_{m,m}$ and $X_{n,n}$ says that $H^{-1}((n/m)H(x|\theta)|\theta)$ is increasing in $\theta$ for every $x$. Equivalently, for every pair $\theta'>\theta$, defining $y$ by $H(y|\theta)=(n/m)H(x|\theta)$,
\[
\frac{H(y|\theta')}{H(y|\theta)}
\leq
\frac{H(x|\theta')}{H(x|\theta)}.
\]
Since $n/m<1$, we have $y<x$. Thus, the desired monotonicity of the ratio $H(\cdot|\theta')/H(\cdot|\theta)$ holds whenever $H(y|\theta)/H(x|\theta)$ is a rational number. By continuity, the same inequality holds for all $y<x$. Indeed, take an arbitrary $y<x$, let $a=H(y|\theta)/H(x|\theta)$, and take a sequence of rational numbers $a_i\to a$. Defining $y_i$ by $H(y_i|\theta)=a_iH(x|\theta)$, the inequality holds for $y_i$, and since $y_i\to y$ as $a_i\to a$ by continuity of $H$, it also holds for $y$. We have thus shown that $H$ is log-supermodular.

\medskip
\noindent\emph{Sufficiency}. Assume that $H$ is log-supermodular and fix any $k\geq1$ and $n\geq k$. We have to show
\begin{equation}\label{eq:Lehmann}
F_{k+1,n+1}^{-1}(F_{k,n}(x\mid\theta)\mid\theta)
\quad
\text{is increasing in $\theta$ for every $x$.}
\end{equation}

In each state $\theta$ the random variable $H(X_{k,n}|\theta)$ has the same distribution as the $k$-th highest order statistic in a sample of $n$ independent exponentials:
\[
\Pr\nolimits_\theta(H(X_{k,n}|\theta)\leq x)
=
\sum^{n}_{r=n-k+1}\binom{n}{r}(1-e^{-x})^{r}(e^{-x})^{n-r}
=:
G_{k,n}(x).
\]
Thus, $F_{k,n}(x|\theta)=G_{k,n}(H(x|\theta))$ and $F_{k+1,n+1}(x|\theta)=G_{k+1,n+1}(H(x|\theta))$. Let
\[
\eta(\cdot):=G^{-1}_{k+1,n+1}(G_{k,n}(\cdot)).
\]
Then we can write \eqref{eq:Lehmann} as follows:
\begin{equation}\label{eq:Lehmann2}
H^{-1}\big(\eta(H(x|\theta))\big|\theta\big)
\quad
\text{is increasing in $\theta$ for every $x$.}
\end{equation}

Fix any two states $\theta$ and $\theta'>\theta$ now, and define $T(\cdot):=H(H^{-1}(\cdot|\theta)|\theta')$.
Since $H^{-1}(\cdot|\theta)$ is increasing, to prove \eqref{eq:Lehmann2} it suffices to show that $\eta(T(H(x|\theta)))
\geq T(\eta(H(x|\theta)))$ for every $x$. Equivalently, we prove:
\begin{equation}\label{eq:Lehmann3}
\eta(T(y))
\geq
T(\eta(y))
\quad
\text{for every $y$.}
\end{equation}

To this end, we first show that the $(k+1)$-th highest of $n+1$ independent exponentials is smaller than the $k$-th highest of $n$ in the star-order, that is, we have:
\begin{lemma}\label{lem:exp}
The mapping $y\mapsto\eta(y)/y$ is decreasing.
\end{lemma}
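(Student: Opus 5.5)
The plan is to treat the lemma as a star-order comparison between two explicit distributions and verify it with a crossing argument on densities. Write $g_{k,n}=G_{k,n}'$ for the density of the $k$-th highest of $n$ standard exponentials. Up to normalizing constants,
\[
g_{k,n}(y)\propto(1-e^{-y})^{n-k}e^{-ky},\qquad g_{k+1,n+1}(y)\propto(1-e^{-y})^{n-k}e^{-(k+1)y}.
\]
The two distributions share the factor $(1-e^{-y})^{n-k}$, because both statistics have exactly $n-k$ draws below them. They differ only in the exponential tail factor, since the more-minimally-selected statistic has one more draw above it. This shared structure is what should make the comparison tractable.

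The first step is to recast the claim as a single-crossing property. Since $\eta(0)=0$ and $\eta$ is increasing, $\eta(y)/y$ is decreasing if and only if, for every $c>0$, the equation $\eta(y)=cy$ has at most one positive root, with $\eta(y)-cy$ passing from positive to negative there. Because $\eta=G^{-1}_{k+1,n+1}\circ G_{k,n}$, this is equivalent to requiring that
\[
D_c(y):=G_{k,n}(y)-G_{k+1,n+1}(cy)
\]
changes sign at most once on $(0,\infty)$, in the appropriate direction. This is the standard characterization of $Y_{k+1,n+1}\leq_* Y_{k,n}$.

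The second step uses variation diminishing. Since $D_c(0)=D_c(\infty)=0$, if the density difference $D_c'(y)=g_{k,n}(y)-c\,g_{k+1,n+1}(cy)$ changes sign at most twice, then $D_c$ has at most one interior sign change. So it suffices to show that the log-ratio
\[
\ell_c(y):=\log\frac{c\,g_{k+1,n+1}(cy)}{g_{k,n}(y)}=\text{const}+(n-k)\big[\log(1-e^{-cy})-\log(1-e^{-y})\big]-\big((k+1)c-k\big)y
\]
crosses any horizontal level at most twice. It would be enough, for instance, that $\ell_c$ is quasi-concave, or that $\ell_c'$ is monotone.

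The third step, which I expect to be the main obstacle, is establishing that shape property. The bracket is a difference of two concave functions, so concavity is not automatic. I would first compute
\[
\ell_c'(y)=(n-k)\Big[\frac{c}{e^{cy}-1}-\frac{1}{e^{y}-1}\Big]-\big((k+1)c-k\big)
\]
and try to show that $y\mapsto \frac{c}{e^{cy}-1}-\frac{1}{e^{y}-1}$ is monotone for each fixed $c$. This should reduce to the fact that $t\mapsto t/(e^t-1)$ and its derivative have a definite sign structure, using the substitution $t=cy$ versus $t=y$. Monotonicity of $\ell_c'$ gives at most one critical point, hence at most two level crossings.

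If the direct computation is messy, I have a fallback based on the Rényi representation. It gives $Y_{k,n}\overset{d}{=}\sum_{j=k}^{n}E_j/j$ and $Y_{k+1,n+1}\overset{d}{=}\sum_{j=k+1}^{n+1}E_j/j$. One would then compare the two using the known star-order results for order statistics of exponentials (e.g.\ that $Y_{k,n}$ is star-ordered in the appropriate direction along this diagonal of the accordion). Alternatively, one can work on the uniform scale $U=1-e^{-Y}$, where both statistics are Beta-distributed with the same first parameter $n-k+1$ and second parameters $k$ versus $k+1$, and do the single-crossing check there.

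In all versions, the last step is to confirm the direction of the crossing. Near $0$ the factor $(1-e^{-y})^{n-k}$ dominates and $\ell_c$ behaves like $(n-k)\log c$. Near $\infty$ the linear term dominates. These two limits pin down the sign pattern and yield that $\eta(y)/y$ is decreasing rather than increasing.
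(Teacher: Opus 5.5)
Your proposal is correct in outline, but it takes a genuinely different route from the paper. The paper does no density calculus. It invokes Theorem~2 of Yu: the $j$-th lowest of a homogeneous exponential sample is star-smaller than the $j$-th lowest of a sample in which one draw is replaced by $E_{n+1}/a$. It then lets $a\to0$, so the heterogeneous statistic converges to the $j$-th lowest of only $n$ draws, and passes monotonicity of $y\mapsto G_a^{-1}(G_{k+1,n+1}(y))/y$ to the pointwise limit $\eta^{-1}(y)/y$.

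You instead verify $Y_{k+1,n+1}\le_* Y_{k,n}$ directly, through sign changes of $D_c$ and $D_c'$, and your third step does close. Let $\psi(t):=t^2e^t/(e^t-1)^2=\big(\tfrac{t/2}{\sinh(t/2)}\big)^2$, which is decreasing. Then $\ell_c''(y)=(n-k)\,[\psi(y)-\psi(cy)]/y^2$. So $\ell_c$ is concave for every $c<1$, hence quasi-concave. Therefore $D_c'$ has sign pattern $(+,-,+)$ and $D_c$ has pattern $(+,-)$, which is what you need.

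The one place to tighten is the direction of the crossing. Monotonicity of $\ell_c'$ gives at most two crossings but not their order. For $c>1$, $\ell_c$ is convex, and a convex $\ell_c$ with pattern $(+,-,+)$ would produce the wrong crossing for $D_c$.

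Your endpoint heuristic does not settle this as stated. With the normalizing constants included, $\ell_c(0^+)=\log\frac{n+1}{k}+(n-k+1)\log c$, whose sign varies with $c$. Ruling out "positive at both ends" this way needs the extra check $(1+1/k)^{n-k+1}\ge (n+1)/k$.

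The clean fix is to note that $\eta(y)\le y$ by first-order dominance. Levels $c\ge 1$ are then vacuous, since $D_c<0$ there, and only the concave case $c<1$ matters.

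As for what each route buys: the paper's proof is short but outsources the substance to a nontrivial heterogeneous-sample star-order theorem plus a limiting argument. Yours is elementary and self-contained. It shows why the shared factor $(1-e^{-y})^{n-k}$ makes the comparison tractable, and it reduces everything to the fact that $s/\sinh s$ is decreasing.

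Your fallbacks are unnecessary. The R\'enyi version essentially cites the conclusion. The Beta-scale version does not simplify anything, because the scaling $y\mapsto cy$ becomes $u\mapsto 1-(1-u)^c$.
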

\begin{proof}
Let $E_1,\ldots,E_{n+1}$ be independent exponentials, and let $j=n-k+1$. The cumulative distribution function of the $j$-th lowest order statistic is $G_{k+1,n+1}$. For every $a>0$, consider the sample $E_1,\ldots,E_n,E_a$, where $E_a=E_{n+1}/a$, and write $G_a$ for the cumulative distribution function of the corresponding $j$-th lowest order statistic. By Theorem~2 in \citet{Yu}, the $j$-th lowest order statistic in the homogeneous sample $E_1,\ldots,E_n,E_{n+1}$ is star-smaller than in the heterogeneous sample $E_1,\ldots,E_n,E_a$, that is,
\begin{equation}\label{eq:GG}
y\mapsto\frac{G^{-1}_a(G_{k+1,n+1}(y))}{y}
\quad
\text{is increasing for every $a>0$.}
\end{equation}

Since $E_a$ diverges almost surely as $a\to0$, with probability one $E_a>\max\{E_1,\ldots,E_n\}$ for every sufficiently small $a$. Thus, with probability one, for every sufficiently small $a$ the $j$-th lowest order statistic of $E_1,\ldots,E_n,E_a$ coincides with the $j$-th lowest order statistic of $E_1,\ldots,E_n$. It follows that the $j$-th lowest order statistic of $E_1,\ldots,E_n,E_a$ converges almost surely, and hence in distribution, to the $j$-th lowest order statistic of $E_1,\ldots,E_n$. In other words, $G_a(x)\to G_{k,n}(x)$ for every continuity point $x$ of $G_{k,n}$, and since $G_{k,n}$ is continuous on $(0,\infty)$, this convergence holds for every $x>0$. Moreover, since $G_a$ and $G_{k,n}$ are continuous and strictly increasing on $(0,\infty)$, their inverse functions satisfy $G_a^{-1}(u)\to G^{-1}_{k,n}(u)$ for every $u\in(0,1)$, and hence
\[
G^{-1}_a(G_{k+1,n+1}(y))\to G^{-1}_{k,n}(G_{k+1,n+1}(y))=\eta^{-1}(y)
\qquad
\text{for every $y>0$}.
\]
Being the pointwise limit of the increasing mapping in \eqref{eq:GG}, the mapping $y\mapsto\eta^{-1}(y)/y$ is also increasing. This implies the result. Indeed, since $\eta$ is increasing, for all $y_2>y_1>0$ we have $\eta(y_2)\geq\eta(y_1)$. Thus, $\eta^{-1}(\eta(y_2))/\eta(y_2)\geq\eta^{-1}(\eta(y_1))/\eta(y_1)$, that is, $\eta(y_2)/y_2\leq\eta(y_1)/y_1$.
\end{proof}

\bigskip

Next, we record two properties of the mapping $T$.
\begin{lemma}\label{lem:T}
The mapping $T$ satisfies $T(y)\leq y$ for every $y$. Moreover, $T(y)/y$ is increasing.
\end{lemma}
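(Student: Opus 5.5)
The plan is to read both properties directly off the definition $T(y)=H(H^{-1}(y|\theta)|\theta')$ for the fixed pair $\theta'>\theta$. Throughout, $y$ ranges over values of $H(\cdot|\theta)$ on the interior of the support under $\theta$, so that $y>0$. For such $y$, set $x=H^{-1}(y|\theta)$, so that $y=H(x|\theta)$ and $T(y)=H(x|\theta')$. Since $H^{-1}(\cdot|\theta)$ is strictly increasing, $x$ is an increasing function of $y$.

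\emph{First property.} The inequality $T(y)\leq y$ is equivalent to $H(x|\theta')\leq H(x|\theta)$. This in turn is equivalent to $1-F(x|\theta')\geq 1-F(x|\theta)$, that is, $F(x|\theta')\leq F(x|\theta)$. This is exactly the stochastic dominance \eqref{EQ:D}, which follows from \eqref{EQ:MLR}. At points where $F(x|\theta')$ hits $1$ or $F(x|\theta)$ hits $0$, I would treat the value $+\infty$ or $0$ of $H$ by the obvious conventions; only the range of $H(\cdot|\theta)$ matters for \eqref{eq:Lehmann3}.

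\emph{Second property.} Write
\[
\frac{T(y)}{y}=\frac{H(x|\theta')}{H(x|\theta)} .
\]
Log-supermodularity of $H$ in $(x,\theta)$ says precisely that $H(x|\theta')/H(x|\theta)$ is increasing in $x$ for $\theta'>\theta$. This is the same formulation used in the necessity part of the proof of Theorem~\ref{THM:ACCORDION1}. Composing this ratio with the increasing map $y\mapsto x$ shows that $T(y)/y$ is increasing in $y$.

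\emph{Main obstacle.} The only delicate point is the domain. The support of $F(\cdot|\theta')$ may differ from that of $F(\cdot|\theta)$, so $H(x|\theta')$ could be $0$ where $H(x|\theta)>0$. I would handle this by noting that the ratio is then $0$, which is compatible with monotonicity, and that log-supermodularity is interpreted on the set where $H(\cdot|\theta)>0$ is finite. With that understood, both claims are immediate.
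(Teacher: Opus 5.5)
Your proposal is correct and matches the paper's proof: the first claim is \eqref{EQ:D} evaluated at $x=H^{-1}(y|\theta)$, and the second is log-supermodularity of $H$ composed with the increasing map $H^{-1}(\cdot|\theta)$, which the paper writes as $\log T(y)-\log y$ being increasing. Your remarks on the domain, namely that $H(x|\theta')$ vanishes below $\underline{x}_{\theta'}$ and that $y$ ranges over finite positive values, make explicit what the paper leaves implicit.
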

\begin{proof}
By \eqref{EQ:D} we have $H(x|\theta')\leq H(x|\theta)$ for all $x$, and hence
\[
T(y)=H(H^{-1}(y|\theta)|\theta')\leq H(H^{-1}(y|\theta)|\theta)=y
\qquad
\text{for all $y$}.
\]
To prove the second claim, observe that
\[
\log T(y)-\log y
=
\log H\big(H^{-1}(y|\theta)\big|\theta'\big)-\log H\big(H^{-1}(y|\theta)\big|\theta\big)
\]
is increasing in $y$, because $H$ is log-supermodular and $H^{-1}(\cdot|\theta)$ is increasing.
\end{proof}

\bigskip

Finally, since the $k$-th highest of $n$ independent exponentials first-order stochastically dominates the $(k+1)$-th highest of $n+1$, we have $G_{k+1,n+1}(\cdot)\geq G_{k,n}(\cdot)$ and hence $\eta(y)\leq y$. Thus, by the second claim in Lemma~\ref{lem:T},
\begin{equation}\label{eq:T1}
 \frac{T(y)}{y}
 \geq
 \frac{T(\eta(y))}{\eta(y)},
\end{equation}
while by Lemma~\ref{lem:exp} and the first claim in Lemma~\ref{lem:T},
\begin{equation}\label{eq:T2}
 \frac{\eta(T(y))}{T(y)}
 \geq
 \frac{\eta(y)}{y}.
\end{equation}
Combining \eqref{eq:T1} and \eqref{eq:T2} gives the desired inequality \eqref{eq:Lehmann3}.

\subsection{Proof of Theorem~\ref{THM:EXP1}}
\label{APP:PROOF-THM2}

The implication from (ii) to (i) is trivial. To prove that (iii) implies (ii), let $\theta\mapsto\lambda(\theta)$ and $x\mapsto \xi(x)$ be strictly increasing functions such that, in every state $\theta$, the random variable $\xi(X_i)-\lambda(\theta)$ has the exponential distribution. Since $\xi$ is strictly increasing, $\xi(X_{k,n})-\lambda(\theta)$ is distributed as the $k$-th highest of $n$ independent exponentials, so that $F_{k,n}(x|\theta)=G_{k,n}(\xi(x)-\lambda(\theta))$. We have to show that, for all $k\geq1$ and $n\geq k$, both $F_{k+1,n+1}^{-1}(F_{k,n}(x|\theta)|\theta)$ and $F_{k,n}^{-1}(F_{k,n+1}(x|\theta)|\theta)$ are increasing in $\theta$ for every $x$, that is, both
\begin{equation}
\label{EQ:GG}
\xi^{-1}\big(G_{k+1,n+1}^{-1}(G_{k,n}(\xi(x)-\lambda(\theta))+\lambda(\theta)\big)
\quad\text{and}\quad
\xi^{-1}\big(G_{k,n}^{-1}(G_{k,n+1}(\xi(x)-\lambda(\theta))+\lambda(\theta)\big)
\end{equation}
are increasing in $\theta$ for every $x$. \citet{Khaledi-Kochar} prove that $G_{k+1,n+1}$ is less dispersed than $G_{k,n}$ and $G_{k,n}$ is less dispersed than $G_{k,n+1}$, that is, both $G^{-1}_{k+1,n+1}(u)-G^{-1}_{k,n}(u)$ and $G^{-1}_{k,n}(u)-G^{-1}_{k,n+1}(u)$ are decreasing in $u$. This implies that both
\[
G_{k+1,n+1}^{-1}(G_{k,n}(\xi(x)-\lambda(\theta)))+\lambda(\theta)
\quad\text{and}\quad
G_{k,n}^{-1}(G_{k,n+1}(\xi(x)-\lambda(\theta)))+\lambda(\theta)
\]
are increasing in $\lambda(\theta)$ for every $x$.\footnote{In other words, relative to the state space $\{\lambda(\theta):\theta\in\Theta\}$, by Theorem~5.2 in \citet{Lehmann}, $\xi(X_{k+1,n+1})$ is more accurate than $\xi(X_{k,n})$ and $\xi(X_{k,n})$ is more accurate than $\xi(X_{k,n+1})$, .} This is equivalent to both expressions in \eqref{EQ:GG} being increasing in $\theta$ for every $x$, because both $\xi$ and $\lambda$ are strictly increasing.

\medskip

We now show that (i) implies (iii), thus completing the proof of the theorem. Condition (i) says that $X_{k,n}$ is more accurate than $X_{k,n+1}$ for every $k\geq1$ and $n\geq k$. We will prove that, in fact, (iii) is implied by the following weaker condition: $X_{n,n}$ is more accurate than $X_{n,n+1}$ for every $n\geq 1$. We begin with the following lemma, proved in Appendix~\ref{APP:ROCEXP} below.
\begin{lemma}
\label{LEM:ROCEXP}
Assume that $X_{n,n}$ is more accurate than $X_{n,n+1}$ for every $n\geq 1$. Then for every pair of states $\theta'>\theta$ we have $\overline{x}_{\theta'}=\overline{x}_{\theta}$, and there exists $a_{\theta,\theta'}\geq1$ such that
\[
F\big(F^{-1}(p|\theta)\big|\theta'\big)=\max\{0,1-a_{\theta,\theta'}(1-p)\}
\qquad\text{for all $p\in[0,1]$.}
\]
\end{lemma}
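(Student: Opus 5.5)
The plan is to rewrite the hypothesis as a functional inequality for the ROC curve of $\theta'$ against $\theta$ in survival coordinates. Letting $n\to\infty$ should show that this curve has a nondecreasing slope ratio. \eqref{EQ:MLR2} gives the opposite monotonicity, so the ratio must be constant. Concretely, fix $\theta'>\theta$, write $s=1-F(x|\theta)$, and let $t(s):=1-F(F^{-1}(1-s|\theta)|\theta')$ be the $\theta'$-survival at the $\theta$-survival level $s$. The claim is then equivalent to $t(s)=\min\{1,a s\}$ with $a=a_{\theta,\theta'}\geq1$. By \eqref{EQ:D}, $t(s)\geq s$. By the survival part of \eqref{EQ:MLR2}, the ratio $t(s)/s=[1-F(x|\theta')]/[1-F(x|\theta)]$ is increasing in $x$, i.e.\ nonincreasing in $s$.

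\emph{Step 1: translate the hypothesis.} The survival function of $X_{n,n}$ is $\varphi_n(s)=s^n$. For $X_{n,n+1}$, the second lowest of $n+1$ draws, it is $\psi_n(s)=s^{n+1}+(n+1)(1-s)s^n=(n+1)s^n-ns^{n+1}$. Both are continuous, strictly increasing maps of $[0,1]$ onto itself, so $\kappa_n:=\psi_n^{-1}\circ\varphi_n$ is well defined. I would check that $\kappa_n(s)<s$ for $0<s<1$. Matching type-I errors under $\theta$ means $s'=\kappa_n(s)$. ``$X_{n,n}$ more accurate than $X_{n,n+1}$'' then reads $\psi_n(t(\kappa_n(s)))\leq\varphi_n(t(s))$, that is,
\[
t(\kappa_n(s))\leq\kappa_n(t(s))\qquad\text{for all }s\in(0,1).
\]

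\emph{Step 2: pass to log coordinates and let $n\to\infty$.} Put $u=-\log s$ and $\tau(u)=-\log t(e^{-u})$, so $\tau(u)\leq u$. Write $\kappa_n$ as $u\mapsto u+\delta_n(u)$. Solving $e^{-nu'}(n+1-ne^{-u'})=e^{-nu}$ gives
\[
n\,\delta_n(u)=\log\big(1+n(1-e^{-u'})\big),\qquad u'=u+\delta_n(u).
\]
Hence $\delta_n>0$, $\delta_n$ is increasing in $u$, and $\delta_n(u)\to0$ with $\delta_n(u)\sim(\log n)/n$ for every fixed $u>0$. The inequality of Step 1 becomes
\[
\tau(u+\delta_n(u))-\tau(u)\geq\delta_n(\tau(u)).
\]
Since $\delta_n(\tau(u))/\delta_n(u)\to1$ whenever $\tau(u)>0$ (the $\log n$ terms dominate), the increments of $\tau$ over the vanishing steps $\delta_n(u)$ are at least $(1-o(1))$ times the steps. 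The main obstacle is upgrading this to the statement that $\tau(u)-u$ is nondecreasing. The steps are tied to the base point $u$, so a simple chaining argument is not immediate. I would first try the following. Fix $u_1<u_2$ with $\tau(u_1)>0$. Iterate the step map from $u_1$ until it passes $u_2$; this takes about $(u_2-u_1)/\delta_n$ steps of nearly equal size. Summing the inequality along the chain, the errors total $o(1)\cdot(u_2-u_1)$. Continuity of $\tau$ then lets me replace the overshoot point by $u_2$, giving $\tau(u_2)-\tau(u_1)\geq u_2-u_1$. Back in $s$-coordinates, $t(s)/s$ is nondecreasing in $s$ on the region where $t<1$.

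\emph{Step 3: conclude.} Combining Step 2 with the \eqref{EQ:MLR2} monotonicity, $t(s)/s$ is constant, equal to some $a$, on $\{s: t(s)<1\}$, and $a\geq1$ by \eqref{EQ:D}. Because $t$ is continuous and nondecreasing, $t(s)=\min\{1,as\}$ on $[0,1]$. In the original variables this is $F(F^{-1}(p|\theta)|\theta')=\max\{0,1-a(1-p)\}$. Finally, $t(s)\to0$ as $s\to0$. If instead $\overline{x}_{\theta'}>\overline{x}_\theta$, then $t(0+)=1-F(\overline{x}_\theta|\theta')>0$, a contradiction. Together with \eqref{EQ:D}, which gives $\overline{x}_{\theta'}\geq\overline{x}_\theta$, this yields $\overline{x}_{\theta'}=\overline{x}_\theta$.
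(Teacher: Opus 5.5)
Your argument is correct. It proves the same two opposite monotonicity properties of the ratio $t(s)/s=(1-T(p))/(1-p)$ as the paper, but by a different mechanism.

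\textbf{The paper's route.} It stays in CDF coordinates and uses $n(\varphi_n(p)-p)/\log n\to1-p$ to turn the hypothesis at each fixed $p$ into a one-sided derivative bound $T'(p)\geq(1-T(p))/(1-p)$. It then relies on convexity of $T$, i.e.\ on the full likelihood-ratio ordering, in two ways. Convexity guarantees that the one-sided derivative exists. It also supplies the chord bound $T'(p)\leq(b-T(p))/(1-p)$ with $b=T(1^-)$. This first forces $b=1$, which is the common upper bound, and then collapses the two bounds into the ODE $T'=(1-T)/(1-p)$.

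\textbf{Your route.} You pass to log-survival coordinates, where the matching map is almost a translation by $\delta_n\sim\log n/n$. You then sum the finite-difference inequalities along a chain, so you get the integrated statement that $t(s)/s$ is nondecreasing without differentiating anything. Squeezing this against the survival half of \eqref{EQ:MLR2} gives the linear form, and the common upper bound comes out last, from $t(0+)=0$. The chaining is rigorous with facts you already stated. For $v\in[u_1,u_2)$, monotonicity of $\tau$ and of $\delta_n$ gives $\delta_n(\tau(v))/\delta_n(v)\geq\delta_n(\tau(u_1))/\delta_n(u_2)\to1$ uniformly. The overshoot is at most $\delta_n(u_2)\to0$. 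So you do not need the ``nearly equal steps'' heuristic.

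\textbf{What each buys.} The paper's route is shorter. Yours needs only continuity and the hazard-rate ordering (the survival half of \eqref{EQ:MLR2}, which already implies \eqref{EQ:D}), not the convexity of $T$.

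\textbf{A shared caveat.} You need $\{s:t(s)<1\}\neq\varnothing$, just as the paper needs some $p$ with $T(p)>0$; equivalently, $\underline x_{\theta'}<\overline x_\theta$. If the supports are disjoint, then $t\equiv1$ on $(0,1)$. This satisfies your translated inequality trivially, and neither argument yields $\overline x_{\theta'}=\overline x_\theta$.
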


By Lemma~\ref{LEM:ROCEXP}, given any three states $\theta''>\theta'>\theta$ we have $a_{\theta,\theta''}=a_{\theta,\theta'}a_{\theta',\theta''}$. Indeed, taking any $x$ such that $0<F(x|\theta'')<1$ we also have $0<F(x|\theta)<1$ and $0<F(x|\theta')<1$ because $\underline{x}_{\theta}\leq\underline{x}_{\theta'}\leq\underline{x}_{\theta''}$ and, by Lemma~~\ref{LEM:ROCEXP}, $\overline{x}_{\theta}=\overline{x}_{\theta'}=\overline{x}_{\theta''}$. Thus, by Lemma~~\ref{LEM:ROCEXP}, we have both $1-F(x|\theta'')=a_{\theta,\theta''}[1-F(x|\theta)]$ and $1-F(x|\theta'')=a_{\theta,\theta'}a_{\theta',\theta''}[1-F(x|\theta)]$, hence $a_{\theta,\theta''}=a_{\theta,\theta'}a_{\theta',\theta''}$.

Now pick any state $\theta_0$ and define
\[
A(\theta)
:=
\begin{cases}
1/a_{\theta,\theta_0} & \text{if $\theta<\theta_0$,} \\
1 & \text{if $\theta=\theta_0$,} \\
a_{\theta_0,\theta} & \text{if $\theta>\theta_0$,}
\end{cases}
\quad\qquad
\lambda(\theta):=\log A(\theta).
\]
The function $\theta\mapsto\lambda(\theta)$ is strictly increasing, because for every $\theta'>\theta$ we have $A(\theta')/A(\theta)=a_{\theta,\theta'}$ and therefore $\lambda(\theta')-\lambda(\theta)=\log a_{\theta,\theta'}$. By \eqref{EQ:SD}, $a_{\theta,\theta'}>1$, so $\lambda(\theta')>\lambda(\theta)$.

Finally, let $\overline{x}$ denote the upper bound that (by Lemma~\ref{LEM:ROCEXP}) is common to all distributions in the family $F$, let $I:=\{x<\overline{x}:F(x|\theta)>0\;\text{for some $\theta\in\Theta$}\}$, and define $\xi:I\to\R$ by
\[
 \xi(x):=\lambda(\theta)-\log\big[1-F(x|\theta)\big],
\]
where $\theta$ is any state such that $0<F(x|\theta)<1$. The definition of $\xi$ is independent of the chosen state, because given any two states $\theta'>\theta$, for every $x$ such that $0<F(x|\theta')<1$ we have
\[
1-F(x|\theta')=e^{\lambda(\theta')-\lambda(\theta)}\big[1-F(x|\theta)\big],
\]
and hence
\[
\lambda(\theta')-\log\big[1-F(x|\theta')\big]
=
\lambda(\theta)-\log\big[1-F(x|\theta)\big].
\]
Moreover, $\xi$ is strictly increasing, because $F(\cdot|\theta)$ is strictly increasing on $(\underline{x}_\theta,\overline{x})$. In each state $\theta$, $\xi(X_i)-\lambda(\theta)=-\log\big[1-F(X_i|\theta)\big]$ has the exponential distribution, hence (iii) follows.

\subsection{Proof of Lemma~\ref{LEM:ROCEXP}\label{APP:ROCEXP}}
Assume $X_{n,n}$ is more accurate than $X_{n,n+1}$ for every $n\geq 1$, and fix any pair of states $\theta'>\theta$. Let
\[
B_{n,n}(p):=1-(1-p)^n,
\qquad
B_{n,n+1}(p):=1-(1-p)^n(1+np)
\]
be the cumulative distribution functions of the lowest order statistic from $n$ independent uniform draws and the second-lowest order statistic from $n+1$ independent uniform draws, respectively. Define $\varphi_n(p):=B^{-1}_{n,n+1}(B_{n,n}(p))$ and $T(p):=F(F^{-1}(p|\theta)|\theta')$ for every $p\in(0,1)$. By \eqref{EQ:MLR}, $T$ is increasing and convex, with $T(p)\leq p$. Since $X_{n,n}$ is more accurate than $X_{n,n+1}$,
\begin{equation}
\label{EQ:phi-T}
\varphi_n(T(p))
\leq
T(\varphi_n(p))
\qquad\text{for all $p\in(0,1)$}.
\end{equation}
Define $s_n(p):=(\varphi_n(p)-p)/(1-p)$. Given that $B_{n,n+1}(p)\leq B_{n,n}(p)$, we have $\varphi_n(p)\geq p$ and hence $1+np\leq 1+n\varphi_n(p)\leq 1+n$, or equivalently
\begin{equation}
\label{EQ:nnp}
1+np \leq 1+np+n(1-p)s_n(p)\leq1+n.
\end{equation}
Moreover, we can write $B_{n,n+1}(\varphi_n(p))=B_{n,n}(p)$ as
\[
(1-s_n(p))^n(1+np+n(1-p)s_n(p))=1.
\]
Taking logs, we obtain
\[
n[-\log(1-s_n(p))]=\log[1+np+n(1-p)s_n(p)]
\]
and hence, by \eqref{EQ:nnp},
\[
\frac{\log(1+np)}{\log n}
\leq
\frac{n[-\log(1-s_n(p))]}{\log n}
\leq
\frac{\log(1+n)}{\log n}.
\]
As $n\to\infty$, both bounds converge to one, hence the middle expression also converges to one. It follows that $s_n(p)$ converges to zero, and since $\lim_{s\to0}[-\log(1-s)]/s=1$, we also have
\[
\frac{n(\varphi_n(p)-p)}{\log n}
=
(1-p)
\frac{ns_n(p)}{\log n}
\to1-p
\qquad\text{as $n\to\infty$.}
\]
Returning to the inequality \eqref{EQ:phi-T}, for every $p\in(0,1)$ such that $T(p)>0$ we therefore have
\[
\frac{T(p+\varphi_n(p)-p)-T(p)}{\varphi_n(p)-p}
\geq
\frac{\varphi_n(T(p))-T(p)}{\varphi_n(p)-p}
\to\frac{1-T(p)}{1-p}
\qquad
\text{as $n\to\infty$}.
\]
Since $s_n(p)$ converges to zero, so does $\varphi_n(p)-p=(1-p)s_n(p)$, hence the above inequality gives $dT(p)/dp\geq(1-T(p))/(1-p)$. But, by convexity of $T$, for every $q>p$ we have $dT(p)/dp\leq(T(q)-T(p))/(q-p)$. Thus, letting $b:=\lim_{q\to1}T(q)=F(\overline{x}_\theta|\theta')\leq1$, we also have $dT(p)/dp\leq(b-T(p))/(1-p)$, and hence $b\geq1$. We conclude that $b=1$, which implies $\overline{x}_{\theta'}=\overline{x}_{\theta}$. Furthermore, letting $c=\sup\{p:T(p)=0\}$, we have $dT(p)/dp=(1-T(p))/(1-p)$ for every $p\in(c,1)$. Solving the latter equation gives $T(p)=1-(1-p)/(1-c)$ for every $p\in(c,1)$, and defining $a_{\theta,\theta'}:=1/(1-c)\geq1$ concludes the proof.

\subsection{Proof of Theorem~\ref{THM:AFA}\label{APP:PROOF-THM3}}

Define $j_n:=n-k_n+1$, $p_n:=j_n/(n+1)$ and $U^\theta_{n}:=F(X_{k_n,n}|\theta)$ for every $n\geq1$ and state $\theta$. Then, at each state $\theta$, the random variable $U^\theta_n$ has the same distribution as the $j_n$-th lowest order statistic in a uniform sample, namely $\text{Beta}(j_n,n-j_n+1)$. Thus,
\begin{equation*}
\E_\theta\big[U^\theta_{n}\big]=p_n
\qquad\text{and}\qquad
\Var_\theta\big(U^\theta_{n}\big)=\frac{p_n(n+1-j_n)}{(n+1)(n+2)}.
\end{equation*}
Applying Chebyshev's inequality to the random variables $U^\theta_n/p_n$ and $(1-U^\theta_n)/(1-p_n)$, we obtain
\begin{align}
\label{EQ:Up1}
j_n\to\infty
&\IMPLIES
\frac{U^\theta_n}{p_n}\tconv1,
\\
\label{EQ:Up2}
k_n\to\infty
&\IMPLIES
\frac{1-U^\theta_n}{1-p_n}\tconv1.
\end{align}

Choose a sequence of sets $\Theta_n\subseteq\Theta$ as follows. If $\Theta$ is finite or compact, let $\Theta_n=\Theta$ for all $n$. Otherwise, let $\Theta_n$ be any sequence of compact intervals such that $\inf\Theta_n\downarrow\inf\Theta$ and $\sup\Theta_n\uparrow\sup\Theta$.

\subsubsection*{Proof of Part (i)}

Assume $k_n\to\infty$ and $j_n\to\infty$. Define $D_n:\R\times\Theta\to\R\cup\{\infty\}$ by
\begin{equation}
\label{EQ:D_n}
D_n(x,\theta):=
\left|\log\frac{F(x|\theta)}{p_n}\right|
+
\left|\log\frac{1-F(x|\theta)}{1-p_n}\right|.
\end{equation}
By \eqref{EQ:Up1} and \eqref{EQ:Up2}, for every $\theta$ we have
\begin{equation}
\label{EQ:Dt}
 D_n(X_{k_n,n},\theta)\tconv0.
\end{equation}
Next, define $\phi_n(x):\R\to\Theta$ as follows:
\begin{equation}
\label{EQ:phi_n}
\phi_n(x):=\arg\min_{\theta\in\Theta_n} D_n(x,\theta).\footnote{The function $\theta\mapsto D_n(x,\theta)$ is lower semicontinuous because $F(x|\theta)$ is continuous in $\theta$. Since $\Theta_n$ is compact, the minimum exists. Moreover, by continuity of $F(x|\theta)$ in both $x$ and $\theta$, a measurable argmin can be selected.}
\end{equation}
Fix $\theta\in\Theta$ and $\ve>0$ now. We must prove that
\begin{equation}
\label{EQ:conv1}
P_\theta(\phi_n(X_{k_n,n})>\theta+\ve)\to0
\AND
P_\theta(\phi_n(X_{k_n,n})<\theta-\ve)\to0.
\end{equation}
Let
\[
\theta_+:=\inf\{\vartheta\in\Theta:\vartheta>\theta+\ve\},
\qquad
\theta_-:=\sup\{\vartheta\in\Theta:\vartheta<\theta-\ve\},
\]
with the convention that $\inf\varnothing=\infty$ and $\sup\varnothing=-\infty$. Then it suffices to prove:\footnote{Our convention makes the hypothesis in \eqref{EQ:conv11} false exactly when $\Theta\subseteq(-\infty,\theta+\ve]$, in which case the first convergence in \eqref{EQ:conv1} is immediate. Similarly, the hypothesis in \eqref{EQ:conv12} is false exactly when $\Theta\subseteq[\theta-\ve,\infty)$, in which case the second convergence in \eqref{EQ:conv1} is immediate.}
\begin{align}
\label{EQ:conv11}
\theta_+\in\Theta
&\IMPLIES
P_\theta(\phi_n(X_{k_n,n})<\theta_+)\to1,
\\
\label{EQ:conv12}
\theta_-\in\Theta
&\IMPLIES
P_\theta(\phi_n(X_{k_n,n})>\theta_-)\to1.
\end{align}
As $\inf\Theta_n\downarrow\inf\Theta$ and $\sup\Theta_n\uparrow\sup\Theta$, for all large $n$ we have $\theta\in\Theta_n$, hence $D_n(x,\theta)\geq D_n(x,\phi_n(x))$. Assuming $\theta_+\in\Theta$ and letting $\Theta^+_n:=\{\vartheta\in\Theta_n:\vartheta\geq\theta_+\}$ we then have, for all $x$, large $n$, and $c>0$,
\[
D_n(x,\theta)
<c<
\inf_{\vartheta\in\Theta^+_n} D_n(x,\vartheta)
\qquad\Longrightarrow\qquad
\phi_n(x)\,<\,\theta_+.
\]
Similarly, assuming $\theta_-\in\Theta$ and letting $\Theta^-_n:=\{\vartheta\in\Theta_n:\vartheta\leq\theta_-\}$, we have
\[
D_n(x,\theta)
<c<
\inf_{\vartheta\in\Theta^-_n} D_n(x,\vartheta)
\qquad\Longrightarrow\qquad
\phi_n(x)\,>\,\theta_-.
\]
Therefore, by \eqref{EQ:Dt}, to prove \eqref{EQ:conv11} and \eqref{EQ:conv12} it suffices to exhibit $c_+>0$ and $c_->0$ such that
\begin{align}
\label{EQ:CCc1}
\theta_+\in\Theta
&\IMPLIES
P_\theta\big(\inf\nolimits_{\vartheta\in\Theta^+_n} D_n(X_{k_n,n},\vartheta)>c_+\big)\to1,
\\
\label{EQ:CCc2}
\theta_-\in\Theta
&\IMPLIES
P_\theta\big(\inf\nolimits_{\vartheta\in\Theta^-_n} D_n(X_{k_n,n},\vartheta)>c_-\big)\to1.
\end{align}
Equivalently, we prove that every subsequence contains a further subsequence along which \eqref{EQ:CCc1} and \eqref{EQ:CCc2} hold. So fix an arbitrary subsequence. Passing to a further subsequence if necessary, we may assume $p_n\to p\in[0,1]$. In what follows, all convergence statements concerning random variables are meant in $P_\theta$-probability.

\bigskip\noindent
\emph{Case 1: $0<p<1$.} By \eqref{EQ:Up1}, $U^\theta_n\to p$, hence $X_{k_n,n}\to F^{-1}(p|\theta)$. Assume $\theta_+\in\Theta$. By continuity, $F(X_{k_n,n}|\theta_+)\to F(F^{-1}(p|\theta)|\theta_+)$, and since $\theta_+>\theta$, by \eqref{EQ:SD} the limit is strictly smaller than $p$. Thus, there exists $0<a<1$ such that $P_\theta(F(X_{k_n,n}|\theta_+)/p_n\leq a)\to1$. By \eqref{EQ:D}, $F(X_{k_n,n}|\theta_+)/p_n\leq a$ implies $F(X_{k_n,n}|\vartheta)/p_n\leq a$ for all $\vartheta\geq\theta_+$. Therefore,
\begin{equation*}
F(X_{k_n,n}|\theta_+)/p_n\leq a
\qquad\Longrightarrow\qquad
\inf_{\vartheta\in\Theta^+_n}
\left\vert\log\frac{F(X_{k_n,n}|\vartheta)}{p_n}\right|\geq -\log a,
\end{equation*}
which implies \eqref{EQ:CCc1} once we choose any $c_+\in(0,-\log a)$. Assume $\theta_-\in\Theta$ now. Then, as before, we have $F(X_{k_n,n}|\theta_-)\to F(F^{-1}(p|\theta)|\theta_-)$, and since $\theta_-<\theta$, by \eqref{EQ:SD} the limit is strictly larger than $p$. Thus, there exists $b>1$ with $P_\theta(F(X_{k_n,n}|\theta_-)/p_n\geq b)\to1$. By \eqref{EQ:D}, $F(X_{k_n,n}|\theta_-)/p_n\geq b$ implies $F(X_{k_n,n}|\vartheta)/p_n\geq b$ for all $\vartheta\leq\theta_-$. Therefore,
\begin{equation*}
F(X_{k_n,n}|\theta_-)/p_n\geq b
\qquad\Longrightarrow\qquad
\inf_{\vartheta\in\Theta^-_n}
\left\vert\log\frac{F(X_{k_n,n}|\vartheta)}{p_n}\right|\geq \log b,
\end{equation*}
and choosing any $c_-\in(0,\log b)$ gives \eqref{EQ:CCc2}.

\bigskip\noindent
\emph{Case 2: $p=0$.} Assume first that $\theta_+\in\Theta$, choose $x_0$ with $0<F(x_0|\theta_+)<1$, and let $a_0:=F(x_0|\theta_+)/F(x_0|\theta)$. Then $0<a_0<1$ by \eqref{EQ:SD}. Since $p_n\to0$ and $U^\theta_n/p_n\to1$, we have $U^\theta_n\to0$, hence $P_\theta(X_{k_n,n}\leq x_0)\to1$.  But $X_{k_n,n}\leq x_0$ implies $F(X_{k_n,n}|\theta_+)/F(X_{k_n,n}|\theta)\leq a_0$ by \eqref{EQ:MLR2}. Thus, choosing any $a\in(a_0,1)$, on the intersection of the events $X_{k_n,n}\leq x_0$ and $U^\theta_n/p_n\leq a/a_0$, whose $P_\theta$-probability tends to one, we have
\[
\frac{F(X_{k_n,n}|\theta_+)}{p_n}
=
\frac{F(X_{k_n,n}|\theta_+)}{F(X_{k_n,n}|\theta)}
\frac{F(X_{k_n,n}|\theta)}{p_n}
\leq
a_0\frac{U^\theta_n}{p_n}
\leq a.
\]
By \eqref{EQ:D}, $F(X_{k_n,n}|\theta_+)/p_n\leq a$ implies $F(X_{k_n,n}|\vartheta)/p_n\leq a$ for all $\vartheta\geq\theta_+$. Therefore,
\begin{equation*}
F(X_{k_n,n}|\theta_+)/p_n\leq a
\qquad\Longrightarrow\qquad
\inf_{\vartheta\in\Theta^+_n}
\left\vert\log\frac{F(X_{k_n,n}|\vartheta)}{p_n}\right|\geq -\log a,
\end{equation*}
which implies \eqref{EQ:CCc1} once we choose any $c_+\in(0,-\log a)$. Assume $\theta_-\in\Theta$ now, choose $x_0$ with $0<F(x_0|\theta)<1$, and let $b_0:=F(x_0|\theta)/F(x_0|\theta_-)$. Then $0<b_0<1$ by \eqref{EQ:SD} and, as before, $P_\theta(X_{k_n,n}\leq x_0)\to1$, while $X_{k_n,n}\leq x_0$ implies $F(X_{k_n,n}|\theta_-)/F(X_{k_n,n}|\theta)\geq 1/b_0$ by \eqref{EQ:MLR2}. Thus, choosing any $b\in(1,1/b_0)$, on the intersection of $X_{k_n,n}\leq x_0$ and $U^\theta_n/p_n\geq bb_0$, whose $P_\theta$-probability tends to one, we have
\[
\frac{F(X_{k_n,n}|\theta_-)}{p_n}
=
\frac{F(X_{k_n,n}|\theta_-)}{F(X_{k_n,n}|\theta)}
\frac{F(X_{k_n,n}|\theta)}{p_n}
\geq
\frac{1}{b_0}\frac{U^\theta_n}{p_n}
\geq b.
\]
By \eqref{EQ:D}, $F(X_{k_n,n}|\theta_-)/p_n\geq b$ implies $F(X_{k_n,n}|\vartheta)/p_n\geq b$ for all $\vartheta\leq\theta_-$. Therefore,
\begin{equation*}
F(X_{k_n,n}|\theta_-)/p_n\geq b
\IMPLIES
\inf_{\vartheta\in\Theta^-_n}
\left\vert\log\frac{F(X_{k_n,n}|\vartheta)}{p_n}\right|\geq \log b,
\end{equation*}
which implies \eqref{EQ:CCc2} once we choose any $c_-\in(0,\log b)$.

\bigskip\noindent
\emph{Case 3: $p=1$.} Assume that $\theta_+\in\Theta$. Choose $x_0$ with $0<F(x_0|\theta)<1$, and let $a_0:=(1-F(x_0|\theta_+))/(1-F(x_0|\theta))$. Then $a_0>1$ by \eqref{EQ:SD}. Moreover, $X_{k_n,n}\geq x_0$ implies $1-F(X_{k_n,n}|\theta_+)\geq a_0(1-F(X_{k_n,n}|\theta))$ by \eqref{EQ:MLR2}. Since $p_n\to1$, by \eqref{EQ:Up2} we have $U^\theta_n\to1$, so $P_\theta(X_{k_n,n}\geq x_0)\to1$. Thus, choosing any $a\in(1,a_0)$, on the intersection of $X_{k_n,n}\geq x_0$ and $a_0(1-U^\theta_n)\geq a(1-p_n)$, whose $P_\theta$-probability tends to one, we have
\[
\frac{1-F(X_{k_n,n}|\theta_+)}{1-p_n}
=
\frac{1-F(X_{k_n,n}|\theta_+)}{1-F(X_{k_n,n}|\theta)}
\frac{1-F(X_{k_n,n}|\theta)}{1-p_n}
\geq
a_0\frac{1-U^\theta_n}{1-p_n}
\geq a.
\]
By \eqref{EQ:D}, $1-F(X_{k_n,n}|\theta_+)\geq a(1-p_n)$ implies $1-F(X_{k_n,n}|\vartheta)\geq a(1-p_n)$ for all $\vartheta\geq\theta_+$. Hence,
\begin{equation*}
1-F(X_{k_n,n}|\theta_+)\geq a(1-p_n)
\IMPLIES
\inf_{\vartheta\in\Theta^+_n}
\left\vert\log\frac{1-F(X_{k_n,n}|\vartheta)}{1-p_n}\right|\geq \log a,
\end{equation*}
which implies \eqref{EQ:CCc1} once we choose any $c_+\in(0,\log a)$. Next, assume $\theta_-\in\Theta$, choose $x_0$ with $0<F(x_0|\theta_-)<1$, and let $b_0:=(1-F(x_0|\theta))/(1-F(x_0|\theta_-))$. Then $b_0>1$ by \eqref{EQ:SD}. Moreover, $X_{k_n,n}\geq x_0$ implies $1-F(X_{k_n,n}|\theta)\geq b_0(1-F(X_{k_n,n}|\theta_-))$ by \eqref{EQ:MLR2}. Since $p_n\to1$, by \eqref{EQ:Up2} we again have $P_\theta(X_{k_n,n}\geq x_0)\to1$. Thus, choosing any $b\in(1/b_0,1)$, on the intersection of $X_{k_n,n}\geq x_0$ and $1-U^\theta_n\leq bb_0(1-p_n)$, whose $P_\theta$-probability tends to one, we have
\[
\frac{1-F(X_{k_n,n}|\theta_-)}{1-p_n}
=
\frac{1-F(X_{k_n,n}|\theta_-)}{1-F(X_{k_n,n}|\theta)}
\frac{1-F(X_{k_n,n}|\theta)}{1-p_n}
\leq
\frac{1}{b_0}\frac{1-U^\theta_n}{1-p_n}
\leq b.
\]
By \eqref{EQ:D}, $1-F(X_{k_n,n}|\theta_-)\leq b(1-p_n)$ implies $1-F(X_{k_n,n}|\vartheta)\leq b(1-p_n)$ for all $\vartheta\leq\theta_-$. Hence,
\begin{equation*}
1-F(X_{k_n,n}|\theta_-)\leq b(1-p_n)
\IMPLIES
\inf_{\vartheta\in\Theta^-_n}
\left\vert\log\frac{1-F(X_{k_n,n}|\vartheta)}{1-p_n}\right|\geq -\log b,
\end{equation*}
which implies \eqref{EQ:CCc2} once we choose any $c_-\in(0,-\log b)$.

\subsubsection*{Proof of Part (ii)}

Assume $n-k_n$ is bounded, and let $J<\infty$ be such that $j_n=n-k_n+1\leq J$ for all $n$. Recall that $p_n=j_n/(n+1)$, so now $p_n\to0$, and since $k_n$ is unbounded, \eqref{EQ:Up2} holds.

\bigskip\noindent
\emph{Sufficiency.} Define $D_n$ and $\phi_n$ as in \eqref{EQ:D_n} and \eqref{EQ:phi_n}. Fix $\theta\in\Theta$ and $\ve>0$, and define $\theta_+$, $\theta_-$, $\Theta^+_n$ and $\Theta^-_n$ as in the proof of part (i). Here, too, we must prove \eqref{EQ:conv11} and \eqref{EQ:conv12}. We will prove that for every $\delta>0$ there exists $M>0$ such that, for every large $n$,
\begin{equation}
\label{EQ:PDM1}
P_\theta(
D_n(X_{k_n,n},\theta)\leq M)>1-\delta,
\end{equation}
while
\begin{equation}
\label{EQ:PDM2}
\theta_+\in\Theta
\IMPLIES
P_\theta\big(
\inf\nolimits_{\vartheta\in\Theta^+_n}
D_n(X_{k_n,n},\vartheta)>M\big)
>1-\delta
\end{equation}
and
\begin{equation}
\label{EQ:PDM3}
\theta_-\in\Theta
\IMPLIES
P_\theta\big(
\inf\nolimits_{\vartheta\in\Theta^-_n}
D_n(X_{k_n,n},\vartheta)>M\big)
>1-\delta.
\end{equation}
Indeed, on the intersection of the events in \eqref{EQ:PDM1} and \eqref{EQ:PDM2}, whose probability is at least $1-2\delta$, the minimizer $\phi_n(X_{k_n,n})$ cannot belong to $\Theta_n^+$. Since $\phi_n(X_{k_n,n})\in\Theta_n$, this implies $\phi_n(X_{k_n,n})<\theta_+$, and since $\delta$ is arbitrary, \eqref{EQ:conv11} follows. Similarly, \eqref{EQ:PDM1} and \eqref{EQ:PDM3} imply \eqref{EQ:conv12}.

To prove \eqref{EQ:PDM1}, choose any $b>0$ and note that, by \eqref{EQ:Up2}, for every large $n$ we have
\begin{equation}
\label{EQ:UUU1}
P_\theta\left(
\left|\log\frac{1-U^\theta_n}{1-p_n}\right|\leq b\right)>1-\frac{\delta}{2}.
\end{equation}
Since $\E_\theta[U^\theta_n/p_n]=1$, Markov's inequality gives $P_\theta(U^\theta_n/p_n>A)\leq 1/A$ for every \mbox{$A>0$}. But, since the $j_n$-th lowest order statistic first-order stochastically dominates the minimum, we also have $P_\theta(U^\theta_n\leq p_n/A)\leq1-\big[1-p_n/A\big]^n\leq np_n/A\leq J/A$. Pick any $A>4J/\delta$. Then, for all $n$, we have both $P_\theta(U^\theta_n/p_n\geq 1/A)>1-\delta/4$ and $P_\theta(U^\theta_n/p_n\leq A)>1-\delta/4$, hence $P_\theta(1/A\leq U^\theta_n/p_n\leq A)>1-\delta/2$, and hence
\begin{equation}
\label{EQ:UUU2}
P_\theta\left(
\left|\log\frac{U^\theta_n}{p_n}\right|\leq \log A\right)>1-\frac{\delta}{2}.
\end{equation}
Combining \eqref{EQ:UUU1} and \eqref{EQ:UUU2} and letting $M:=\log A+b$, we obtain \eqref{EQ:PDM1}.\footnote{Note that in proving \eqref{EQ:PDM1} we have not invoked \eqref{EQ:LI}. Indeed, \eqref{EQ:PDM1} only relies on \eqref{EQ:Up2}. We will use this fact in the proof of Theorem~\ref{THM:CLASS1} below.}

To prove \eqref{EQ:PDM2}, assume $\theta_+\in\Theta$. Applying \eqref{EQ:LI} to the pair $\theta_+>\theta$ and integrating, we have $\inf_x F(x|\theta_+)/F(x|\theta)=0$. Hence, for every $\eta>0$ there exists $x_0$ with $F(x_0|\theta)>0$ such that $F(x_0|\theta_+)/F(x_0|\theta)<\eta$. Since $\E_\theta[U_n^\theta]=p_n\to0$, Markov's inequality gives
\[
P_\theta(X_{k_n,n}>x_0)
=
P_\theta(U_n^\theta>F(x_0|\theta))
\leq
\frac{p_n}{F(x_0|\theta)}
\to0.
\]
By \eqref{EQ:MLR2}, $F(x|\theta_+)/F(x|\theta)$ is increasing in $x$. Hence, on the event $X_{k_n,n}\leq x_0$, we have
\[
\frac{F(X_{k_n,n}|\theta_+)}{F(X_{k_n,n}|\theta)}
\leq
\frac{F(x_0|\theta_+)}{F(x_0|\theta)}
<\eta.
\]
Since $\eta>0$ is arbitrary, we conclude that
\[
\frac{F(X_{k_n,n}|\theta_+)}{F(X_{k_n,n}|\theta)}
\xrightarrow{P_\theta}0.
\]
In particular, for all large $n$,
\[
P_\theta\left(
\frac{F(X_{k_n,n}|\theta_+)}{F(X_{k_n,n}|\theta)}
<\frac{e^{-M}}{A}
\right)>1-\frac{\delta}{2}.
\]
Since we also have $P_\theta(U^\theta_n/p_n\leq A)>1-\delta/4$ for every $n$, and on the intersection of the two events we have
\[
\frac{F(X_{k_n,n}|\theta_+)}{p_n}
=
\frac{F(X_{k_n,n}|\theta_+)}{F(X_{k_n,n}|\theta)}
\frac{U_n^\theta}{p_n}
<e^{-M},
\]
we conclude that, for all large $n$,
\begin{equation}
\label{EQ:Fpe0}
P_\theta\big(
F(X_{k_n,n}|\theta_+)/p_n<e^{-M}
\big)>
1-\delta.
\end{equation}
By \eqref{EQ:D} we have $F(X_{k_n,n}|\theta_+)\geq F(X_{k_n,n}|\vartheta)$ for all $\vartheta\geq\theta_+$. Thus,
\[
F(X_{k_n,n}|\theta_+)/p_n<e^{-M}
\IMPLIES
\sup_{\vartheta\in\Theta^+_n}\frac{F(X_{k_n,n}|\vartheta)}{p_n}<e^{-M},
\]
and hence
\begin{equation}
\label{EQ:FpM0}
F(X_{k_n,n}|\theta_+)/p_n<e^{-M}
\IMPLIES
\inf_{\vartheta\in\Theta_n^+}
\left|
\log\frac{F(X_{k_n,n}|\vartheta)}{p_n}
\right|>M.
\end{equation}
Putting together \eqref{EQ:Fpe0} and \eqref{EQ:FpM0}, we obtain \eqref{EQ:PDM2}.

Finally, to prove \eqref{EQ:PDM3}, assume $\theta_-\in\Theta$. Applying \eqref{EQ:LI} to the pair $\theta>\theta_-$ and integrating, we have $\inf_x F(x|\theta)/F(x|\theta_-)=0$. Hence, for every $\eta>0$ there exists $x_0$ with $F(x_0|\theta)>0$ such that $F(x_0|\theta)/F(x_0|\theta_-)<\eta$. Again by Markov's inequality, $P_\theta(X_{k_n,n}\leq x_0)\to1$, and by \eqref{EQ:MLR2}, $F(x|\theta)/F(x|\theta_-)$ is increasing in $x$, so on the event $X_{k_n,n}\leq x_0$ we have $F(X_{k_n,n}|\theta)/F(X_{k_n,n}|\theta_-)<\eta$. It follows that
\[
\frac{F(X_{k_n,n}|\theta)}{F(X_{k_n,n}|\theta_-)}
\tconv0.
\]
In particular, for all large $n$,
\[
P_\theta\left(
\frac{F(X_{k_n,n}|\theta_-)}{F(X_{k_n,n}|\theta)}
> A e^M
\right)>1-\frac{\delta}{2}.
\]
Since we also have $P_\theta(U^\theta_n/p_n\geq 1/A)>1-\delta/4$ for every $n$, and on the intersection of the two events we have
\[
\frac{F(X_{k_n,n}|\theta_-)}{p_n}
=
\frac{F(X_{k_n,n}|\theta_-)}{F(X_{k_n,n}|\theta)}
\frac{U_n^\theta}{p_n}
>
e^M,
\]
we conclude that, for all large $n$,
\begin{equation}
\label{EQ:Fpe}
P_\theta\big(
F(X_{k_n,n}|\theta_-)/p_n>e^M
\big)>
1-\delta.
\end{equation}
By \eqref{EQ:D}, $F(X_{k_n,n}|\theta_-)/p_n>e^M$ implies $F(X_{k_n,n}|\vartheta)/p_n>e^M$ for all $\vartheta\leq\theta_-$. Thus,
\begin{equation}
\label{EQ:FpM}
F(X_{k_n,n}|\theta_-)/p_n>e^M
\quad\Longrightarrow\quad
\inf_{\vartheta\in\Theta_n^-}
\left|
\log\frac{F(X_{k_n,n}|\vartheta)}{p_n}
\right|>M.
\end{equation}
Putting together \eqref{EQ:Fpe} and \eqref{EQ:FpM}, we obtain \eqref{EQ:PDM3}.

\bigskip\noindent
\emph{Necessity.} Suppose that \eqref{EQ:LI} fails. Then there exist two states $\theta'>\theta$ such that
\[
\inf_x [f(x|\theta')/f(x|\theta)]=:c\in(0,1].
\]
By integration, $F(x|\theta')\geq cF(x|\theta)$, while by \eqref{EQ:D}, $1-F(x|\theta')\geq 1-F(x|\theta)$. Letting $f_{k_n,\,n}(\cdot|\cdot)$ denote the density of $X_{k_n,n}$, we therefore have
\[
\frac{f_{k_n,\,n}(\cdot|\theta')}{f_{k_n,\,n}(\cdot|\theta)}
=
\left[\frac{F(x|\theta')}{F(x|\theta)}\right]^{n-k_n}
\left[\frac{1-F(x|\theta')}{1-F(x|\theta)}\right]^{k_n-1}
\frac{f(x|\theta')}{f(x|\theta)}
\geq c^{n-k_n+1}
\geq c^J
\]
and hence
\begin{equation}
\label{EQ:min}
\int\min\big\{f_{k_n,\,n}(x|\theta),f_{k_n,\,n}(x|\theta')\big\}\,dx \geq c^J.
\end{equation}

Now assume by contradiction that $X_{k_n,n}$ has AFA. Then there exists $\phi_n:\R\to\R$ such that
\[
\phi_n(X_{k_n,n}) \, \tconv\, \theta
\qquad\text{and}\qquad 
\phi_n(X_{k_n,n}) \, \tconvp\, \theta'
\]
and hence, in particular, letting $\bar\theta:=(\theta+\theta')/2$,
\[
 P_\theta\big(\phi_n(X_{k_n,n})>\bar\theta\big)\to0
\qquad\text{and}\qquad 
 P_{\theta'}\big(\phi_n(X_{k_n,n})\leq\bar\theta\big)\to0.
\]
This is impossible, because \eqref{EQ:min} implies that for every $n$ we have
\[
P_\theta\big(\phi_n(X_{k_n,n})>\bar\theta\big)+
P_{\theta'}\big(\phi_n(X_{k_n,n})\leq\bar\theta\big)
\geq
\int\min\big\{f_{k_n,\,n}(x|\theta),f_{k_n,\,n}(x|\theta')\big\}\,dx \geq c^J.
\]

\subsubsection*{Proof of Part (iii)}

This follows from part (ii) by reflection. Reparametrizing the set of states by $\tilde\Theta:=\{-\theta:\theta\in\Theta\}$, the variables $Y_1:=-X_1,\ldots,Y_n:=-X_n$ have distribution $\tilde{F}(y|\tilde\theta)=1-F(-y|-\tilde\theta)$ and density $\tilde{f}(y|\tilde\theta)=f(-y|-\tilde\theta)$. Given any two states $\tilde\theta'>\tilde\theta$ we have $-\tilde\theta'<-\tilde\theta$, hence $\tilde{f}(y|\tilde\theta')/\tilde{f}(y|\tilde\theta)=f(-y|-\tilde\theta')/f(-y|-\tilde\theta)$ is increasing in $y$ by \eqref{EQ:MLR}. Thus, the reflected distribution also satisfies \eqref{EQ:MLR}, while \eqref{EQ:ID} is obviously preserved, too. Moreover, condition \eqref{EQ:LI} for the reflected distribution is
\[
\tilde\theta'>\tilde\theta
\qquad
\Longrightarrow
\qquad
\inf_y \;
\frac{\tilde f(y|\tilde\theta')}{\tilde f(y|\tilde\theta)}
=0,
\]
which is equivalent to \eqref{EQ:UI} for the original distribution. Letting $j_n:=n-k_n+1$, the sequence $k_n$ is bounded if and only if $n-j_n$ is bounded, and the reflected $j_n$-th highest order statistic is $Y_{j_n,n}:=-X_{k_n,n}$. By part (ii), $Y_{j_n,n}$ has AFA if and only if \eqref{EQ:LI} holds for the reflected distribution. Since observing $Y_{j_n,n}$ is equivalent to observing $X_{k_n,n}$, this is equivalent to saying that $X_{k_n,n}$ has AFA if and only if \eqref{EQ:UI} holds for the original distribution.

\subsection{Proof of Theorem~\ref{THM:CLASS1}}

\subsubsection*{Proof of Part (i)}

Fix $r\geq1$. Let $p_n:=r/(n+1)$. Define $D_n$ and $\phi_n$ as in \eqref{EQ:D_n} and \eqref{EQ:phi_n}. Fix a lower class $L$, a state $\theta\in L$, and $\ve>0$. Let $L^\ve:=\{\vartheta\in\Theta:\inf_{\vartheta'\in L}|\vartheta'-\vartheta|<\ve\}$. Define
\[
\theta_+:=\inf\{\vartheta\in\Theta:\vartheta>\sup L^\ve\},
\qquad
\theta_-:=\sup\{\vartheta\in\Theta:\vartheta<\inf L^\ve\},
\]
with the conventions $\inf\varnothing=\infty$ and $\sup\varnothing=-\infty$. Finally, let
\[
\Theta^+_n:=\{\vartheta\in\Theta_n:\vartheta\geq\theta_+\},
\qquad
\Theta^-_n:=\{\vartheta\in\Theta_n:\vartheta\leq\theta_-\}.
\]
Note that if $\inf_{\vartheta\in L}|\phi_n(X_{n-r+1,n})-\vartheta|>\ve$ then either $\theta_+\in\Theta$ and $\phi_n(X_{n-r+1,n})\in\Theta^+_n$, or $\theta_-\in\Theta$ and $\phi_n(X_{n-r+1,n})\in\Theta^-_n$. Thus, to prove the desired convergence, it is enough to show
\begin{align}
\label{EQ:conv11bis}
\theta_+\in\Theta
&\IMPLIES
P_\theta(\phi_n(X_{n-r+1,n})<\theta_+)\to1,
\\
\label{EQ:conv12bis}
\theta_-\in\Theta
&\IMPLIES
P_\theta(\phi_n(X_{n-r+1,n})>\theta_-)\to1.
\end{align}
The proof now reuses the sufficiency argument in the proof of part (ii) of Theorem~\ref{THM:AFA}. The only difference is that instead of \eqref{EQ:LI}, which here may fail to hold globally for all pairs of states, we use the fact that the equality in \eqref{EQ:LI} does hold locally for the pair $\theta_+>\theta$ whenever $\theta_+\in\Theta$, because $\theta$ and $\theta_+$ are in different lower classes---and likewise for $\theta$ and $\theta_-$. Since $n-r+1\to\infty$, the argument proving \eqref{EQ:PDM1} applies (setting $k_n:=n-r+1$) without change: for every $\delta>0$ there exists $M>0$ such that, for all large $n$,
\begin{equation}
\label{EQ:PDM1bis}
P_\theta(
D_n(X_{n-r+1,n},\theta)\leq M)>1-\delta.
\end{equation}
If $\theta_+\in\Theta$, then $\theta_+>\theta$ and $\theta_+\notin L$, so $\inf_x F(x|\theta_+)/F(x|\theta)=0$. Thus, the same argument used to  prove \eqref{EQ:PDM2} in Theorem~\ref{THM:AFA} here gives
\begin{equation}
\label{EQ:PDM2bis}
\theta_+\in\Theta
\IMPLIES
P_\theta\big(
\inf\nolimits_{\vartheta\in\Theta^+_n}
D_n(X_{n-r+1,n},\vartheta)>M\big)
>1-\delta.
\end{equation}
Similarly, $\theta_-\in\Theta$ implies $\theta_-<\theta$ and $\theta_-\notin L$, so $\inf_x F(x|\theta)/F(x|\theta_-)=0$. The same argument used to  prove \eqref{EQ:PDM3} then gives
\begin{equation}
\label{EQ:PDM3bis}
\theta_-\in\Theta
\IMPLIES
P_\theta\big(
\inf\nolimits_{\vartheta\in\Theta^-_n}
D_n(X_{n-r+1,n},\vartheta)>M\big)
>1-\delta.
\end{equation}
Since $\theta\in\Theta_n$ for all large $n$, and moreover $\phi_n(x)$ minimizes $D_n(x,\cdot)$ over $\Theta_n$, we conclude that $D_n(X_{n-r+1,n},\phi_n(X_{n-r+1,n}))\leq D_n(X_{n-r+1,n},\theta)$ for all large $n$. Thus, \eqref{EQ:PDM1bis} and \eqref{EQ:PDM2bis} imply \eqref{EQ:conv11bis}. Similarly, \eqref{EQ:PDM1bis} and \eqref{EQ:PDM3bis} imply \eqref{EQ:conv12bis}, so we are done.

\subsubsection*{Proof of Part (ii)}

Fix a lower class $L$. Since in state $\theta$ the random variable $U^\theta_{n}:=F(X_{n-r+1,n}|\theta)$ has the same distribution as the $r$-th lowest order statistic in a uniform sample, by Theorem~2.2.1 in \citet{Leadbetter} we have
\begin{equation}
\label{EQ:EEE}
P_\theta\big(nU^\theta_n\leq t\big)
\to
1-e^{-t} \sum^{r-1}_{j=0} \frac{t^j}{j!},
\end{equation}
that is, the law of $nU^\theta_n$ converges weakly to the law of a Gamma with shape parameter $r$. This is the law of the sum of $r$ independent exponentials, so we can write \eqref{EQ:EEE} as
\begin{equation}
\label{EQ:EE}
nU^\theta_n \tconvd E_1+\cdots+E_{r},
\end{equation}
where $E_1,\ldots,E_r$ are independent exponentials. Now, given that $\theta$ and $\theta_{\!L}$ belong to the same lower class $L$, we have
\[
\lim_{x\to\underline{x}_{L}}
\frac{F(x|\theta_{\!L})}{F(x|\theta)}=e^{\lambda(\theta|L)}.
\]
By \eqref{EQ:EEE} we have $F(X_{n-r+1,n}|\theta)\to0$ and hence
\[
X_{n-r+1,n}\tconv\underline{x}_{L}.
\]
Thus, by \eqref{EQ:EE} and Slutsky's theorem,
\begin{equation*}
nF(X_{n-r+1,n}|\theta_{\!L}) \tconvd e^{\lambda(\theta|L)}\big(E_1+\cdots+E_{r}\big).
\end{equation*}
Applying the continuous mapping theorem (with respect to the map $x\mapsto\log x$) gives the result.

\subsection{Proofs of Theorems~\ref{THM:ACCORDION2}, \ref{THM:EXP2} and \ref{THM:CLASS2}}
\label{APP:PROOF-STAR}
The proofs obtain from Theorems~\ref{THM:ACCORDION1}, \ref{THM:EXP1} and \ref{THM:CLASS1} by reflection. As in the proof of part (iii) of Theorem~\ref{THM:AFA}, we reparametrize the set of states by $\tilde\Theta:=\{-\theta:\theta\in\Theta\}$, so that $Y_1:=-X_1,\ldots,Y_n:=-X_n$ have distribution $\tilde{F}(y|\tilde\theta)=1-F(-y|-\tilde\theta)$ and density $\tilde{f}(y|\tilde\theta)=f(-y|-\tilde\theta)$. As we argued in the proof of part (iii) of Theorem~\ref{THM:AFA}, the reflected distribution satisfies \eqref{EQ:MLR}, and also \eqref{EQ:ID} if the original distribution satisfies \eqref{EQ:ID}.

\subsubsection{Proof of Theorem~\ref{THM:ACCORDION2}}
The hazard function of the reflected experiment is
\[
\tilde H(y|\tilde\theta)
:=-\log\big[1-\tilde F(y|\tilde\theta)\big]
=-
\log F(-y|-\tilde\theta)
=R(-y|-\tilde\theta).
\]
The mapping $(y,\tilde\theta)\mapsto(-y,-\tilde\theta)$ reverses both coordinates. Therefore $\tilde H$ is log-supermodular in the reflected experiment if and only if $R$ is log-supermodular in the original experiment. By Theorem~\ref{THM:ACCORDION1} applied to the reflected experiment, more minimal selection increases accuracy if and only if $\tilde H$ is log-supermodular. Since $\tilde X_{j+1,n+1}=-X_{n-j+1,n+1}$ and $\tilde X_{j,n}=-X_{n-j+1,n}$, this comparison is exactly more maximal selection increasing accuracy for the original experiment. Hence more maximal selection increases accuracy in the original experiment if and only if $R$ is log-supermodular.

\subsubsection{Proof of Theorem~\ref{THM:EXP2}}
Applying Theorem~\ref{THM:EXP1} to the reflected experiment, more minimal selection decreases accuracy in the original experiment if and only if more maximal selection decreases accuracy in the reflected experiment. Theorem~\ref{THM:EXP1} therefore implies that this is equivalent to the existence of strictly increasing functions $\tilde\lambda$ and $\tilde\xi$ such that
\[
\tilde\xi(\tilde X_i) \tdist \tilde\lambda(\tilde\theta)+E_i,
\]
where $E_i$ is standard exponential. Substituting $\tilde X_i=-X_i$ and $\tilde\theta=-\theta$ gives
\[
\tilde\xi(-X_i) \tdist \tilde\lambda(-\theta)+E_i.
\]
Define
\[
\xi(x):=-\tilde\xi(-x),
\qquad
\lambda(\theta):=-\tilde\lambda(-\theta).
\]
Both mappings are strictly increasing. Multiplying the preceding displayed equation by $-1$ gives
\[
\xi(X_i) \tdist \lambda(\theta)-E_i.
\]
Finally, Theorem~\ref{THM:EXP1} says that, in the reflected experiment, more maximal selection decreasing accuracy is equivalent to the whole more minimal selection increasing and more maximal selection decreasing accuracy. Reflecting this statement back to the original experiment gives that more minimal selection decreases and more maximal selection increases accuracy.

\subsubsection{Proof of Theorem~\ref{THM:CLASS2}}

Observe that if $U$ is an upper class in the original experiment, then $\tilde L:=\{-\theta:\theta\in U\}$ is a lower class in the reflected experiment. Its lower bound is $\underline{\tilde x}_{\tilde L}=-\overline x_U$. We first prove part (i). By part (i) of Theorem~\ref{THM:CLASS1}, applied to the reflected experiment, there exists a sequence of functions $\tilde \phi_n:\R\to\tilde\Theta$ such that, for every lower class $\tilde L$ and every $\tilde\theta\in\tilde L$,
\[
\inf_{\tilde\vartheta\in\tilde L}
\left|
\tilde\phi_n(\tilde X_{n-r+1,n})-\tilde\vartheta
\right|
\, \xrightarrow{\; P_{\tilde\theta} \;} \,0.
\]
Define $\phi_n(x):=-\tilde\phi_n(-x)$. Then, for every upper class $U$ and every $\theta\in U$,
\[
\inf_{\vartheta\in U}
\left|
\phi_n(X_{r,n})-\vartheta
\right|
=
\inf_{\tilde\vartheta\in\tilde L}
\left|
\tilde\phi_n(\tilde X_{n-r+1,n})-\tilde\vartheta
\right|
\tconv 0,
\]
which proves part (i). We now prove part (ii). Fix an upper class $U$ and let $\tilde\theta_{\tilde L}:=-\theta_U$ be the corresponding reference state in the reflected lower class $\tilde L$. The statistic in Theorem~\ref{THM:CLASS1} for the reflected experiment is
\[
\tilde W^{\tilde L}_{r,n}
:=
\log\left(
n\tilde F(\tilde X_{n-r+1,n}|\tilde\theta_{\tilde L})
\right).
\]
Using $\tilde X_{n-r+1,n}=-X_{r,n}$, we obtain $\tilde W^{\tilde L}_{r,n}=\log(n[1-F(X_{r,n}|\theta_U)])=-Y^U_{r,n}$. By part (ii) of Theorem~\ref{THM:CLASS1}, applied to the reflected experiment,
\[
\tilde W^{\tilde L}_{r,n}
\tconvd
\tilde\lambda(\tilde\theta|\tilde L)+Z_r,
\]
where
\[
\tilde\lambda(\tilde\theta|\tilde L)
=
\lim_{y\to \underline{\tilde x}_{\tilde L}}
\left[
\tilde R(y|\tilde\theta)
-
\tilde R(y|\tilde\theta_{\tilde L})
\right].
\]
Since $\tilde R(y|\tilde\theta)=-\log\tilde F(y|\tilde\theta)=-\log[1-F(-y|-\tilde\theta)]=H(-y|-\tilde\theta)$, and since $y\to -\overline x_U$ is equivalent to $x=-y\to\overline x_U$, we have $\tilde\lambda(-\theta|\tilde L)=-\upsilon(\theta|U)$, and hence
\[
Y^U_{r,n}
\tconvd
\upsilon(\theta|U)-Z_r.
\]

\subsection{Proof of Theorems~\ref{THM:BLACKWELL1} and \ref{THM:BLACKWELL2}\label{APP:PROOF-BLACKWELL}}

Since $E_1+\cdots+E_r$ and $E_{r+1}$ are independent and have distributions $\text{Gamma}(r,1)$ and $\text{Gamma}(1,1)$, respectively,
\[
B_r:=\frac{E_1+\cdots+E_r}{E_1+\cdots+E_r+E_{r+1}}=\frac{\exp(Z_r)}{\exp(Z_{r+1})}
\]
has distribution $\text{Beta}(r,1)$ and is independent of $E_1+\cdots+E_r+E_{r+1}$ and hence of $Z_{r+1}$. Taking logs, we obtain $Z_r=Z_{r+1}+\log B_r$, which gives the result: starting from the observation $\lambda(\theta|L)+Z_{r+1}$, adding the independent noise $\log B_r$ gives an observation distributed as $\lambda(\theta|L)+Z_r$. Similarly, starting from $\upsilon(\theta|U)-Z_{r+1}$, adding the independent noise $-\log B_r$ gives an observation distributed as $\upsilon(\theta|U)-Z_r$. To prove the second statement in the theorem, note that, by the law of large numbers, $(E_1+\cdots+E_r)/r$ converges in probability to one. Therefore, $Z_r-\log r=\log((E_1+\cdots+E_r)/r)$ converges in probability to zero.

\subsection{Proof of Theorem~\ref{THM:MULTI}}
\label{APP:PROOF-MULTI}

We prove the more-maximal-selection statement first. Fix $(x_1,\ldots,x_m)\in\R^m$ and a pair of states $\theta'>\theta$. Define $\varphi(\theta,x_1,\ldots,x_m)=(y_1,\ldots,y_m)$ as in \eqref{EQ:multi1} and \eqref{EQ:multi2}, that is, $F_{k,n+1}(y_1|\theta)=F_{k,n}(x_1|\theta)$ and, for $\ell=1,\ldots,m-1$,
\begin{equation}
\label{EQ:sss1}
\left[\frac{F(y_{\ell+1}|\theta)}{F(y_{\ell}|\theta)}\right]^{n-k-\ell+2}
=\;
\left[\frac{F(x_{\ell+1}|\theta)}{F(x_{\ell}|\theta)}\right]^{n-k-\ell+1}.
\end{equation}
Define $\varphi(\theta',x_1,\ldots,x_m)=(y'_1,\ldots,y'_m)$ analogously for the higher state, so that $F_{k,n+1}(y'_1|\theta')=F_{k,n}(x_1|\theta')$ and, for $\ell=1,\ldots,m-1$,
\begin{equation}
\label{EQ:sss1bis}
\left[\frac{F(y'_{\ell+1}|\theta')}{F(y'_{\ell}|\theta')}\right]^{n-k-\ell+2}
=\;
\left[\frac{F(x_{\ell+1}|\theta')}{F(x_{\ell}|\theta')}\right]^{n-k-\ell+1}.
\end{equation}
We have to prove $y'_\ell\geq y_\ell$ for $\ell=1,\ldots,m$. Since log-supermodularity is preserved by integration, $R$ is log-supermodular because so is $r$, hence Theorem~\ref{THM:ACCORDION2} implies $y'_1\geq y_1$. Moreover, since $X_{k,n+1}$ first-order stochastically dominates $X_{k,n}$, we have $y_1\geq x_1$.

Now suppose, inductively, that for some $\ell\geq 1$ we have $y'_{\ell}\geq y_{\ell}\geq x_{\ell}$. By \eqref{EQ:sss1} and $y_{\ell}\geq x_{\ell}$, we have $y_{\ell+1}\geq x_{\ell+1}$. To prove $y'_{\ell+1}\geq y_{\ell+1}$ it suffices to show that replacing $y'_{\ell+1}$ with $y_{\ell+1}$ in  \eqref{EQ:sss1bis} makes the left-hand side smaller, that is,
\begin{equation*}
\left[\frac{F(y_{\ell+1}|\theta')}{F(y'_{\ell}|\theta')}\right]^{n-k-\ell+2}
\leq\;
\left[\frac{F(x_{\ell+1}|\theta')}{F(x_{\ell}|\theta')}\right]^{n-k-\ell+1}.
\end{equation*}
But $y'_\ell\geq y_\ell$ and hence, in fact, it is enough to prove
\begin{equation}
\label{EQ:sss3}
\left[
\frac{F(y_{\ell+1}|\theta')}{F(y_\ell|\theta')}
\right]^{n-k-\ell+2}
\leq
\left[
\frac{F(x_{\ell+1}|\theta')}{F(x_\ell|\theta')}
\right]^{n-k-\ell+1}.
\end{equation}
Equivalently, taking logs, we shall prove
\begin{equation}
\label{EQ:sss6}
(n-k-\ell+2)\int_{y_{\ell+1}}^{y_\ell} r(y|\theta')\,dy
\ge
(n-k-\ell+1)\int_{x_{\ell+1}}^{x_\ell} r(y|\theta')\,dy .
\end{equation}

For every $x\in[x_{\ell+1},x_\ell]$, define $v(x)$ by
\begin{equation}
\label{EQ:sss4}
\left[\frac{F(v(x)|\theta)}{F(y_\ell|\theta)}\right]^{n-k-\ell+2}
=\;
\left[\frac{F(x|\theta)}{F(x_\ell|\theta)}\right]^{n-k-\ell+1}.
\end{equation}
Thus $v(x_\ell)=y_\ell$, while $v(x_{\ell+1})=y_{\ell+1}$. Moreover, since $y_\ell\geq x_\ell$, the maximum of $n-k-\ell+2$ draws from the distribution truncated from above at $y_\ell$ first-order stochastically dominates the maximum of $n-k-\ell+1$ draws from the distribution truncated from above at $x_\ell$. Hence $v(x)\geq x$ for all $x\in[x_{\ell+1},x_\ell]$. Taking logs in \eqref{EQ:sss4}, we have
\[
(n-k-\ell+2)\int_{v(x)}^{y_\ell} r(y|\theta)\,dy
=
(n-k-\ell+1)\int_x^{x_\ell} r(y|\theta)\,dy
\]
and hence, differentiating with respect to $x$,
\begin{equation}
\label{EQ:sss5}
(n-k-\ell+2)r(v(x)|\theta)v'(x)
=
(n-k-\ell+1)r(x|\theta).
\end{equation}
Thus, using the change of variable $y=v(x)$, the left-hand side of \eqref{EQ:sss6} equals
\[
(n-k-\ell+2)\int_{x_{\ell+1}}^{x_\ell} r(v(x)|\theta')v'(x)\,dx
=
(n-k-\ell+1)\int_{x_{\ell+1}}^{x_\ell}
r(x|\theta)
\frac{r(v(x)|\theta')}{r(v(x)|\theta)}
\,dx .
\]
Since $r$ is log-supermodular and $v(x)\geq x$, we have
\[
\frac{r(v(x)|\theta')}{r(v(x)|\theta)}
\ge
\frac{r(x|\theta')}{r(x|\theta)},
\]
hence \eqref{EQ:sss6} follows.

The more-minimal-selection statement follows by reflection. The reverse hazard rate of the reflected experiment is
\[
\tilde r(y|\tilde\theta)
=
\frac{\tilde f(y|\tilde\theta)}{\tilde F(y|\tilde\theta)}
=
\frac{f(-y|-\tilde\theta)}{1-F(-y|-\tilde\theta)}
=
h(-y|-\tilde\theta).
\]
Since the map $(y,\tilde\theta)\mapsto(-y,-\tilde\theta)$ reverses both coordinates, log-supermodularity of $h$ in the original experiment is equivalent to log-supermodularity of $\tilde r$ in the reflected experiment. More minimal selection in the original experiment corresponds, after reflection, to more maximal selection in the reflected experiment. Therefore, applying the more-maximal result just proved to the reflected experiment, if $h$ is log-supermodular then $\mathbf X^m_{k+1,n+1}$ is more accurate than $\mathbf X^m_{k,n}$.

\phantomsection
\addcontentsline{toc}{section}{References}

\bibliographystyle{bibstyle}
\bibliography{Info_Order_Stats}

\end{document}